\documentclass[12pt,titlepage]{utarticle}
\usepackage{amsmath}
\usepackage{mathrsfs}
\usepackage{amsfonts}
\usepackage{amssymb}
\usepackage{amsthm}
\usepackage{cite}
\usepackage{mathtools}
\usepackage{graphicx}
\usepackage{color}
\usepackage{xcolor}
\usepackage{ucs}
\usepackage[utf8x]{inputenc}
\usepackage{xparse}
\usepackage{tikz}
%\usepackage[numbers]{natbib}
%\setcitestyle{numbers}
%\setcitestyle{numbers,square,aysep={},yysep={;}}
\usepackage{subcaption}
\usepackage{csquotes}
\usepackage{booktabs}
\usepackage{enumitem}
\usepackage{pgffor}
\usepackage{textgreek}

%greek-enumitems
\makeatletter
\newcommand{\greekalpha}[1]{\c@greekalpha{#1}}
\newcommand{\c@greekalpha}[1]{%
  {%
    \ifcase\number\value{#1} %
    \or
    \textalpha
    \or
    \textbeta
    \or
    \textgamma
    \or
    \textdelta
    \or
    \textepsilon
    \or
    \textzeta
    \or
    \texteta
    \or
    \texttheta % or \straighttheta
    \or
    \textiota
    \or
    \textkappa
    \or
    \textlambda
    \or
    \textmu
    \or
    \textnu
    \or
    \textxi
    \or
    \textomikron
    \or
    \textrho
    \or
    \textpi
    \or
    \textsigma
    \or
    \texttau
    \or
    \textupsilon
    \or
    \textphi
    \or
    \textchi
    \or
    \textpsi
    \or
    \textomega
    \fi
  }%
}
\AddEnumerateCounter*{\greekalpha}{\c@greekalpha}{5}
\makeatother

%young tableaux
\usepackage{ytableau}

%%--define-colors---
\definecolor{armygreen}{rgb}{0.29, 0.73, 0.13}

%---Fonts------

% Times for rm and math | Helvetica for ss | Courier for tt
\usepackage{kpfonts}
\linespread{1.10}
\usepackage[scaled=0.90]{helvet} % ss
\usepackage{courier} % tt
\normalfont
\usepackage[T1]{fontenc}

\usepackage{mathbbol}
\newcommand{\mathds}[1]{\mathbb{#1}}

%---Links-----------
\definecolor{auburn}{rgb}{0.43, 0.21, 0.1}
\usepackage{hyperref}
\hypersetup{
    colorlinks=true,
    linkcolor=auburn,
    filecolor=magenta,  
    citecolor=magenta,
    urlcolor=blue,
}

%----Macros----------
%
% Unresolved issues:
%
%  \righttoleftarrow
%  \lefttorightarrow
%
%  \color{} with HTML colorspec
%  \bgcolor
%  \array with options (without options, it's equivalent to the matrix environment)

% Of the standard HTML named colors, white, black, red, green, blue and yellow
% are predefined in the color package. Here are the rest.
\definecolor{aqua}{rgb}{0, 1.0, 1.0}
\definecolor{fuschia}{rgb}{1.0, 0, 1.0}
\definecolor{gray}{rgb}{0.502, 0.502, 0.502}
\definecolor{lime}{rgb}{0, 1.0, 0}
\definecolor{maroon}{rgb}{0.502, 0, 0}
\definecolor{navy}{rgb}{0, 0, 0.502}
\definecolor{olive}{rgb}{0.502, 0.502, 0}
\definecolor{purple}{rgb}{0.502, 0, 0.502}
\definecolor{silver}{rgb}{0.753, 0.753, 0.753}
\definecolor{teal}{rgb}{0, 0.502, 0.502}

% Because of conflicts, \space and \mathop are converted to
% \itexspace and \operatorname during preprocessing.

% itex: \space{ht}{dp}{wd}
%
% Height and baseline depth measurements are in units of tenths of an ex while
% the width is measured in tenths of an em.
\makeatletter
\newdimen\itex@wd%
\newdimen\itex@dp%
\newdimen\itex@thd%
\def\itexspace#1#2#3{\itex@wd=#3em%
\itex@wd=0.1\itex@wd%
\itex@dp=#2ex%
\itex@dp=0.1\itex@dp%
\itex@thd=#1ex%
\itex@thd=0.1\itex@thd%
\advance\itex@thd\the\itex@dp%
\makebox[\the\itex@wd]{\rule[-\the\itex@dp]{0cm}{\the\itex@thd}}}
\makeatother

% \tensor and \multiscript
\makeatletter
\newif\if@sup
\newtoks\@sups
\def\append@sup#1{\edef\act{\noexpand\@sups={\the\@sups #1}}\act}%
\def\reset@sup{\@supfalse\@sups={}}%
\def\mk@scripts#1#2{\if #2/ \if@sup ^{\the\@sups}\fi \else%
  \ifx #1_ \if@sup ^{\the\@sups}\reset@sup \fi {}_{#2}%
  \else \append@sup#2 \@suptrue \fi%
  \expandafter\mk@scripts\fi}
\def\tensor#1#2{\reset@sup#1\mk@scripts#2_/}
\def\multiscripts#1#2#3{\reset@sup{}\mk@scripts#1_/#2%
  \reset@sup\mk@scripts#3_/}
\makeatother

% \slash
\makeatletter
\newbox\slashbox \setbox\slashbox=\hbox{$/$}
\def\itex@pslash#1{\setbox\@tempboxa=\hbox{$#1$}
  \@tempdima=0.5\wd\slashbox \advance\@tempdima 0.5\wd\@tempboxa
  \copy\slashbox \kern-\@tempdima \box\@tempboxa}
\def\slash{\protect\itex@pslash}
\makeatother

% math-mode versions of \rlap, etc
% from Alexander Perlis, "A complement to \smash, \llap, and lap"
%   http://math.arizona.edu/~aprl/publications/mathclap/
\def\clap#1{\hbox to 0pt{\hss#1\hss}}

% Renames \sqrt as \oldsqrt and redefine root to result in \sqrt[#1]{#2}
\let\oldroot\root
\def\root#1#2{\oldroot #1 \of{#2}}
\renewcommand{\sqrt}[2][]{\oldroot #1 \of{#2}}

% Manually declare the txfonts symbolsC font
\DeclareSymbolFont{symbolsC}{U}{txsyc}{m}{n}
\SetSymbolFont{symbolsC}{bold}{U}{txsyc}{bx}{n}
\DeclareFontSubstitution{U}{txsyc}{m}{n}

% Manually declare the stmaryrd font
\DeclareSymbolFont{stmry}{U}{stmry}{m}{n}
\SetSymbolFont{stmry}{bold}{U}{stmry}{b}{n}

% Manually declare the MnSymbolE font
\DeclareFontFamily{OMX}{MnSymbolE}{}
\DeclareSymbolFont{mnomx}{OMX}{MnSymbolE}{m}{n}
\SetSymbolFont{mnomx}{bold}{OMX}{MnSymbolE}{b}{n}
\DeclareFontShape{OMX}{MnSymbolE}{m}{n}{
    <-6>  MnSymbolE5
   <6-7>  MnSymbolE6
   <7-8>  MnSymbolE7
   <8-9>  MnSymbolE8
   <9-10> MnSymbolE9
  <10-12> MnSymbolE10
  <12->   MnSymbolE12}{}

% Declare specific arrows from txfonts without loading the full package
\makeatletter
\def\re@DeclareMathSymbol#1#2#3#4{%
    \let#1=\undefined
    \DeclareMathSymbol{#1}{#2}{#3}{#4}}
\re@DeclareMathSymbol{\neArrow}{\mathrel}{symbolsC}{116}
\re@DeclareMathSymbol{\neArr}{\mathrel}{symbolsC}{116}
\re@DeclareMathSymbol{\seArrow}{\mathrel}{symbolsC}{117}
\re@DeclareMathSymbol{\seArr}{\mathrel}{symbolsC}{117}
\re@DeclareMathSymbol{\nwArrow}{\mathrel}{symbolsC}{118}
\re@DeclareMathSymbol{\nwArr}{\mathrel}{symbolsC}{118}
\re@DeclareMathSymbol{\swArrow}{\mathrel}{symbolsC}{119}
\re@DeclareMathSymbol{\swArr}{\mathrel}{symbolsC}{119}
\re@DeclareMathSymbol{\nequiv}{\mathrel}{symbolsC}{46}
\re@DeclareMathSymbol{\Perp}{\mathrel}{symbolsC}{121}
\re@DeclareMathSymbol{\Vbar}{\mathrel}{symbolsC}{121}
\re@DeclareMathSymbol{\sslash}{\mathrel}{stmry}{12}
\re@DeclareMathSymbol{\bigsqcap}{\mathop}{stmry}{"64}
\re@DeclareMathSymbol{\biginterleave}{\mathop}{stmry}{"6}
\re@DeclareMathSymbol{\invamp}{\mathrel}{symbolsC}{77}
\re@DeclareMathSymbol{\parr}{\mathrel}{symbolsC}{77}
\makeatother

% \llangle, \rrangle, \lmoustache and \rmoustache from MnSymbolE
\makeatletter
\def\Decl@Mn@Delim#1#2#3#4{%
  \if\relax\noexpand#1%
    \let#1\undefined
  \fi
  \DeclareMathDelimiter{#1}{#2}{#3}{#4}{#3}{#4}}
\def\Decl@Mn@Open#1#2#3{\Decl@Mn@Delim{#1}{\mathopen}{#2}{#3}}
\def\Decl@Mn@Close#1#2#3{\Decl@Mn@Delim{#1}{\mathclose}{#2}{#3}}
\Decl@Mn@Open{\llangle}{mnomx}{'164}
\Decl@Mn@Close{\rrangle}{mnomx}{'171}
\Decl@Mn@Open{\lmoustache}{mnomx}{'245}
\Decl@Mn@Close{\rmoustache}{mnomx}{'244}
\Decl@Mn@Open{\llbracket}{stmry}{'112}
\Decl@Mn@Close{\rrbracket}{stmry}{'113}
\makeatother

% Widecheck
\makeatletter
\DeclareRobustCommand\widecheck[1]{{\mathpalette\@widecheck{#1}}}
\def\@widecheck#1#2{%
    \setbox\z@\hbox{\m@th$#1#2$}%
    \setbox\tw@\hbox{\m@th$#1%
       \widehat{%
          \vrule\@width\z@\@height\ht\z@
          \vrule\@height\z@\@width\wd\z@}$}%
    \dp\tw@-\ht\z@
    \@tempdima\ht\z@ \advance\@tempdima2\ht\tw@ \divide\@tempdima\thr@@
    \setbox\tw@\hbox{%
       \raise\@tempdima\hbox{\scalebox{1}[-1]{\lower\@tempdima\box
\tw@}}}%
    {\ooalign{\box\tw@ \cr \box\z@}}}
\makeatother

% \mathraisebox{voffset}[height][depth]{something}
\makeatletter
\NewDocumentCommand\mathraisebox{moom}{%
\IfNoValueTF{#2}{\def\@temp##1##2{\raisebox{#1}{$\m@th##1##2$}}}{%
\IfNoValueTF{#3}{\def\@temp##1##2{\raisebox{#1}[#2]{$\m@th##1##2$}}%
}{\def\@temp##1##2{\raisebox{#1}[#2][#3]{$\m@th##1##2$}}}}%
\mathpalette\@temp{#4}}
\makeatletter

% udots (taken from yhmath)
\makeatletter
\def\udots{\mathinner{\mkern2mu\raise\p@\hbox{.}
\mkern2mu\raise4\p@\hbox{.}\mkern1mu
\raise7\p@\vbox{\kern7\p@\hbox{.}}\mkern1mu}}
\makeatother

%% Fix array

%% \itexnum is a noop

%% Renaming existing commands

\newcommand{\gt}{>}
\newcommand{\lt}{<}

% Theorem Environments
\theoremstyle{plain}
\newtheorem{theorem}{Theorem}[section]
\newtheorem{lemma}{Lemma}
\newtheorem{prop}{Proposition}
\newtheorem{cor}{Corollary}

\theoremstyle{definition}
\newtheorem{defn}{Definition}

\newtheorem{remark}{Remark}

\usetikzlibrary{,arrows,shapes,decorations.markings}

% cases environment
\makeatletter
\renewcommand*\env@cases[1][1.2]{%
  \let\@ifnextchar\new@ifnextchar
  \left\lbrace
  \def\arraystretch{#1}%
  \array{@{}l@{\quad}l@{}}%
}
\makeatother
%-------------------------------------------------------------------

\begin{document}

%---title page-----------
\preprint{UTTG--08--2020\\}

\title{\hspace{-0.1in}Families of Hitchin systems and $\mathcal{N}=2$ theories}

\author{\hspace{-0.1in}Aswin Balasubramanian\footnote{Rutgers University} \hspace{0.15in} Jacques Distler\footnote{University of Texas at Austin} \hspace{0.15in} Ron Donagi \footnote{University of Pennsylvania}}

\Abstract{Motivated by the connection to 4d $\mathcal{N}=2$ theories, we study the global behavior of families of tamely-ramified $SL_N$ Hitchin integrable systems as the underlying curve varies over the Deligne-Mumford moduli space of stable pointed curves. In particular, we describe a flat degeneration of the Hitchin system to a nodal base curve and show that the behaviour of the integrable system at the node is partially encoded in a pair $(O,H)$ where $O$ is a nilpotent orbit and $H$ is a simple Lie subgroup of $F_{O}$, the flavour symmetry group associated to $O$.  The family of Hitchin systems is nontrivially-fibered over the Deligne-Mumford moduli space. We prove a non-obvious result that the Hitchin bases fit together to form a vector bundle over the compactified moduli space. For the particular case of $\overline{\mathcal{M}}_{0,4}$, we compute this vector bundle explicitly. Finally, we give a classification of the allowed pairs $(O,H)$ that can arise for any given $N$.}
%
%	\author{Zippy the Pinhead
%	 \thanks{
%	  Work supported in part by
%	  NSF Grant PHY9511000, the Robert A.~Welch Foundation and a
%	  grant from ChemRon.
%	  }
%	 \address{
%	  Theory Group, Physics Department\\
%	  University of Texas at Austin\\
%	  Austin TX 78712 USA. \\
%	  {~}\\
%	  \email{pinhead@zippy.ph.utexas.edu}\\
%	  }

%--------title page----------
\maketitle 
\setcounter{tocdepth}{2}
\tableofcontents
 
%-------------------------------------------------------------------

\section{Introduction}

\subsection{The setup}\label{11_the_setup}

Four dimensional superconformal theories have been a subject of study for many years. Since the work of Seiberg-Witten \cite{Seiberg:1994rs,Seiberg:1994aj}, it has been understood that the low energy physics at a generic point of the Coulomb branch of 4d $\mathcal{N}=2$ theories is succinctly encoded in the geometry of a complex integrable system \cite{Martinec:1995by,Donagi:1995cf,Freed:1997dp}. Integrable systems that arise in this fashion are sometimes called Seiberg-Witten integrable systems. 

 Recently, a class of $\mathcal{N}=2$ theories that admit a uniform geometric construction from six dimensions have received greater attention  \cite{witten1997solutions,Gaiotto:2009we,gaiotto2013wall}. One of the important features shared by all such theories is the fact that their associated Seiberg-Witten integrable systems are isomorphic to particular instances of Hitchin's integrable system\cite{hitchin1987stable}. This includes several familiar theories with UV Lagrangians and more mysterious theories for which there is no known UV Lagrangian.

In this realization from six dimensions, the Hitchin system plays an important role. Specifically, the Coulomb branch associated to the four dimensional theory can be described as the base $B$ of Hitchin's integrable system associated to a simply laced Lie algebra $\mathfrak{j}$ and the UV curve $C_{g,n}$. The choice of the Lie algebra $\mathfrak{j}$ parameterizes the available 6d $(2,0)$ theories and the choice of $C_{g,n}$ determines the 2d surface on which we compactify the 6d theory (together with a partial twist). At the locations of the $n$ punctures, we insert four dimensional defects of the 6d $(2,0)$ theory. The insertions of these defects affects the behaviour of the Hitchin system at these punctures. For the present discussion, we are interested in ``tame defects''. These are the defects that induce a simple pole for the Higgs field in the Hitchin system at the location of the punctures,

\begin{equation}
\Phi = \frac{a}{z} + (\text{regular terms}).
\end{equation}
In order to obtain superconformal field theories (SCFT) using tame defects, we additionally require that $\text{Res}(\Phi) = a$ be a nilpotent element in the Lie algebra $\mathfrak{j}$. What really matters is the $\mathfrak{j}$-conjugacy class to which the element $a$ belongs. So, it is helpful label the Hitchin boundary condition by the nilpotent orbit $O_a$ to which the element $a$ belongs. We will sometimes call this nilpotent orbit the Hitchin orbit $O_H$ associated to the defect.

When $\mathfrak{j}$ is not of type $A$, one needs to further enhance this using some discrete data associated to the defect. The absence of such discrete data for defects in type $A$ is related to the fact that component groups of centralizers of nilpotent orbits are always trivial in type $A$. Let us define

\begin{equation}\label{AOdef}
A(O_a) = C_{J_{ad}} (a)/ C_{J_{ad}}^0 (a)
\end{equation}
to be the group of components of the centralizer of nilpotent orbit $O_a$. Here, $C_{J_{ad}} (a)$ is the centralizer of $\exp(a)$, the unipotent element associated to $a$, in the adjoint group $J_{ad}$ and $C_{J_{ad}}^0 (a)$ is its connected component. The above statement is equivalent to saying that

\begin{displaymath}
A(O_a) = 1
\end{displaymath}
for every nilpotent orbit in type $A$ (see \cite{Chacaltana:2012zy,Balasubramanian:2014jca} for more details). In the discussion below, we confine ourselves to examples from type $A$ Hitchin systems.

\subsection{Weakly coupled gauge groups}\label{weakly}

An important feature of this geometric realization from six dimensions is that the space of marginal parameters associated to the SCFT is identified with the Deligne-Mumford moduli space $\overline{\mathcal{M}}_{g,n}$ of complex structures on $C_{g,n}$. Moving/restricting to a (complex) codimension-one irreducible component of the boundary of $\mathcal{M}_{g,n}$ in $\overline{\mathcal{M}}_{g,n}$ corresponds to the appearance of a weakly coupled gauge group with an associated gauge coupling that is related to plumbing fixture parameter $q$ by

\begin{displaymath}
q \sim e^{2 \pi i \tau}
\end{displaymath}
Further $l$-fold intersections of the boundary correspond to loci where $l$ simple factors in the gauge group become weak. The $(3g-3+n)$-fold intersection of the boundary is zero-dimensional. Each point corresponds to a choice of pants-decomposition of $C_{g,n}$. Each such pants decomposition furnishes a presentation of the class-S theory as a ``gauge theory'' with semi-simple gauge group with $3g-3+n$ simple factors, coupled to $2g-2+n$ (free or interacting) isolated SCFTs corresponding to 3-punctured spheres. Different pants decompositions furnish different (``S-dual'') presentations of the same family of SCFTs as such a gauge theory.

``Generically,'' each factor in the gauge group is just $J$, the compact group\footnote{At the moment, we do not fix the global form of the group $J$. Much of the discussion below will not be sensitive to the exact global form of $J$.}  associated to the complex ADE Lie algebra $\mathfrak{j}$. Interestingly, there are cases where the weakly coupled gauge group that arises on an irreducible component of the boundary is a proper subgroup $H$ of $J$. This striking feature was first noticed in \cite{Argyres:2007cn} where S-duality of the $SU(3),N_f=6$ theory was studied. More general examples were studied using the six dimensional framework in \cite{Gaiotto:2009we,Chacaltana:2010ks}.

The emergence of a proper subgroup $H$ as the weakly coupled gauge group has manifestations for all aspects of the theory. Take, for instance, the Higgs branch of a Class $S[\mathfrak{j},C_{g,n}]$ theory. It can be described as a hyperK\"ahler quotient of the product of Higgs branches of the $2g-2+n$ SCFTs by the action of the $3g-3+n$ simple factors in the gauge group. The reduction of the gauge group from $J^{3g-3+n}$ to a subgroup has an obvious manifestation here. We will not be studying the Higgs branch in this paper, though we will return to this subject briefly in \S\ref{flavour}. 

We will instead study the appearance of the proper subgroups $H$ from the point of view of the Coulomb branch. The Coulomb branch of a Class $S[\mathfrak{j},C_{g,n}]$ theory is the base of a Hitchin integrable system of type $\mathfrak{j}$ on the punctured curved $C_{g,n}$. We would like to understand the implication of the reduction from $J$ to $H$ for the corresponding Hitchin systems. To do this, we will need to elaborate a theory of Hitchin systems on nodal curves which behaves well in families.
In this paper, we will study only the $\mathfrak{j}=\mathfrak{sl}(N)$ case. There are additional complications that arise beyond type-A, which we will leave for a future work. 

%We will also mostly discuss $\mathfrak{sl}(N)$ Hitchin systems on the 4-punctured sphere. This case will be sufficiently rich to explore the phenomena associated to the reduction from $SU(N)$ to $H$, while admitting a global model which allows for very explicit computations.

\subsection{From Hitchin systems to Good, Ugly and  Bad theories}\label{13_hitchin_systems_corresponding_to_good_theories}

In studying connections between tame Hitchin systems and 4d $\mathcal{N}=2$ theories, it is useful to understand which Hitchin systems correspond to good, ugly or bad\cite{goodbadugly} 4d $\mathcal{N}=2$ theories. This trichotomy of 4d $\mathcal{N}=2$ theories was suggested by \cite{Gaiotto:2012uq} and it can be thought of as a 4d analog of a similar trichotomy arising in 3d $\mathcal{N}=4$ theories \cite{Gaiotto:2008ak}. In \cite{Gaiotto:2012uq}, this trichotomy was proposed using the properties of the Higgs branch. However, for 4d $\mathcal{N}=2$ theories with Coulomb branches described by the tame Hitchin system, this trichotomy can also be understood purely in terms of a Hitchin system with base $B= \bigoplus_{k} H^{0}(C,\mathcal{L}_{k})$ for some line bundles, $\mathcal{L}_{k}$, to be defined below.  With this goal in mind, we introduce the following definitions for any tame Hitchin system on a smooth Riemann surface $C_{g,n}$ : 

\begin{itemize}
\item \emph{Bad}: These are Hitchin systems with $h^1 (C,\mathcal{L}_{k})  > 0$, for some $k$, on the smooth curve C. These are precisely the Hitchin systems where the graded Coulomb branch dimension $h^0 (C,\mathcal{L}_{k}) $ is \text{not} given by $\deg(\mathcal{L}_k) + 1 - g$ for some values of $k$.  When we have $h^1 (C,\mathcal{L}_{k}) =0$ for all values of $k$, we say that the corresponding Hitchin system is \textit{OK} \cite{gunfight}.
\item \emph{Ugly}:  Consider the space of all local mass deformations of the Hitchin system along the lines of \cite{Balasubramanian:2018pbp}. From this space of the deformations, one can define a map $\kappa$ to the space of mass deformations of the global Hitchin system : 
 \begin{equation}
 \kappa : \{ m_i \} _{\text{local}} \rightarrow  \{ m_i \}_{\text{global}} ,
 \end{equation} 
where $\{ m_i \}_{\text{global}}$ is the space of mass deformations of the spectral curve.

Ugly Hitchin systems are those which are OK but have a non-trivial kernel for the map $\kappa$. Correspondingly, we will call a Hitchin system with $\dim(\text{ker}(\kappa)) > 0$ ``\textit{ugly}." An extreme case of an ugly theory is one consisting of free hypermultiplets. The corresponding Hitchin moduli space is a point and hence there are no global mass deformations of the Hitchin system in this case, whereas the SCFT has relevant deformations corresponding to turning on hypermultiplet masses. 
\item \emph{Good}: These are Hitchin systems which are OK and not ugly. 
\end{itemize}

For the purposes of this paper, we will mostly consider Hitchin systems that are \text{OK} on $C_{g,n}$.  For all such theories, the Deligne-Mumford moduli space $\overline{\mathcal{M}}_{g,n}$ can be identified with the space of marginal parameters of the corresponding SCFT and we will rely on this identification to study the weakly coupled gauge groups in \S\ref{classifying}. Much of our paper treats the good and ugly cases on an equal footing, so we do not dwell on the differences between the two cases. 

%but we comment on the differences between them from the point of view of the Higgs branch in \S\ref{flavour}. 

However, to understand certain aspects of the story, we will need to include some bad Hitchin systems on $C_{0,n}$ in the discussion. We discuss this briefly in \S\ref{stability} and Appendix \ref{OKappendix} and leave a more detailed analysis to a future work. These theories also happen to be the ones for which the relation between our terminology and its physics interpretation is subtle. For certain \textit{bad} Hitchin systems, it will turn out that the corresponding 4d theory is a theory of free hyper-multiplets or a perfectly \textit{good} 4d SCFT. There is, however, a point of view proposed in \cite{Gaiotto:2011xs} according to which one should think of the corresponding compactified 6d theory (with finite area for the Riemann surface C) as being a \textit{bad} theory in these cases. Our OK/Bad dichotomy for Hitchin systems is more directly related to this point of view.

\subsection{Outline of the paper}

The paper is structured in the following way. We review properties of the tame Hitchin system on a smooth underlying curve in \S\ref{2_tame_hitchin_systems_on_smooth_curves}. We then build a global model for the Hitchin system over $\overline{\mathcal{M}}_{0,4}$ in \S\ref{Hitchinglobal}. In particular, in \S\ref{32_the_bundle_of_hitchin_systems}  we note that the Hitchin bases fit together to form a nontrivial vector bundle over $\overline{\mathcal{M}}_{0,4}$ .  We compute that bundle explicitly in \S\ref{directimages}.
We then use this model to take a first look at the Hitchin system on nodal curves in \S\ref{4_standard_and_restricted_nodes}. In this section, we also define what it means for a node to be \textit{standard} (\S\ref{41_the_standard_node}) or \textit{restricted} (\S\ref{first_restricted_nodes}). The restricted nodes are labeled by a pair, $(O,H)$, where $O$ is a nilpotent orbit in $\mathfrak{sl}(N)$ and $H$ is an $SU(l)$ or $Sp(l)$ subgroup of $SU(N)$. We should emphasize that the pair $(O,H)$ is not a complete invariant of the singular spectral curve which covers the node. Examples \hyperlink{example_3}{3} and \hyperlink{example_4}{4} of \S\ref{examples} have the same $(O,H)=([4],SU(2))$, but the singularity structure of the spectral curve is different.

In \S\ref{Hitchinnodal}, we take the results of \S\ref{4_standard_and_restricted_nodes} as motivation to build a general framework for the Hitchin system on a nodal curve such that the family of Hitchin systems on a family of smooth curves is \emph{flat} in the limit as the smooth curve degenerates to the nodal one. Over the interior of the moduli space, the $\mathcal{L}_k$ fit together to form line bundles over the universal curve $\mathcal{C}\to \mathcal{M}_{g,n}$. In extending them to the boundary, we encounter an interesting phenomenon. For a \emph{restricted node}, in which ``$O$" is not the regular nilpotent, the $\mathcal{L}_k$ extend to line bundles $\mathcal{L}'_k$ which are the ``naive" $\mathcal{L}_k$ twisted by a (negative) power of a line bundle whose divisor is a component of the boundary in $\pi\colon\overline{\mathcal{C}}\to \overline{\mathcal{M}}_{g,n}$. For any \emph{given} $C$, the Hitchin base is $B=\bigoplus_k H^0(C,\mathcal{L}'_k)$. These vector spaces fit together (see Theorem  \ref{hitchinglobalthm}) to form a nontrivial vector bundle $\mathcal{B} =\bigoplus_k  \pi_* \mathcal{L}'_k$ over $\overline{\mathcal{M}}_{g,n}$.

The possible restricted nodes are strongly constrained by physics considerations arising from the role of the flavour symmetry, as we show in \S\ref{flavour}. In \S\ref{classifying}, we provide a classification of the allowed nodal degenerations using the methods of \S\ref{Hitchinnodal}. This classification is summarized in Theorem \ref{ohtheorem}. We also show that the results of \S\ref{classifying} are compatible with those in \S\ref{flavour}. 

In Appendix \ref{Proof}, we provide a proof of Theorem \ref{hitchinglobalthm}. In Appendix \ref{OKappendix}, we discuss the close relationship between our \textit{OK} condition and the semi-stability condition for Higgs bundles. We also state a conjecture relating the \textit{OK} condition to a corresponding Deligne-Simpson problem.

\subsection{Further directions} 

As motivation for future work, we mention here some further directions in which our work could be extended or applied. 

\begin{enumerate}

\item \emph{Other approaches to Higgs bundles on nodal curves} 

There has been considerable prior work on studying the moduli of bundles and parabolic bundles on nodal curves (see, for example, \cite{seshadri14023fibres,gieseker1984degeneration,bhosle1992generalised,pandharipande1996compactification}). There has been some recent progress on extending some of these results to Higgs bundles on nodal curves  \cite{bhosle2013generalized,balaji2015degeneration,logares2018higgs}. See also \cite{nekrasov1996holomorphic,talalaev2004hitchin} for some earlier work in this direction. For our purposes, it is important to understand how the \emph{family} of integrable systems behaves in the nodal limit. This appears to not have been addressed previously in the mathematical literature. So, we develop this from the basics. Relating our work to the framework of \cite{bhosle2013generalized,balaji2015degeneration,logares2018higgs} is an interesting direction for future work. 

\item \textit{Solutions to Hitchin's equations in the nodal limit}

Solutions to Hitchin's equations in the nodal limit of the base curve have also been studied recently in \cite{swoboda2017moduli} for the $\mathfrak{sl}(2)$ case with no punctures. Our classification of restricted nodes should also have interesting consequences for a higher-rank tame analog of \cite{swoboda2017moduli}.

\item \textit{Global topology of the character variety}

Another direction in which our framework could be used is in the study of the character variety. The character variety is defined to be the space of maps $\pi_1(C_{g,n})\rightarrow SL(N,\mathbb{C}$) and it is related to the moduli space of Higgs bundles through the non-abelian Hodge correspondence \cite{simpson1990harmonic,de2012topology}. Unlike the geometry of Higgs bundles, the geometry of the character variety is independent of the choice of a complex structure on $C$. In particular, this means we could choose to work with \textit{any} complex structure on $C$ and then use the non-abelian Hodge correspondence to obtain the character variety. 

A specific application in this direction would be to study the global topology of the character variety \cite{hausel2008mixed} from the point of view of Higgs bundles on a nodal curve. This is similar in spirit to the work in \cite{ramadas1996factorisation} where the Verlinde formula (in the $\mathfrak{sl}(2)$ case) was proven by studying the factorization properties of the generalized theta divisor in the nodal limit \cite{narasimhan1993factorisation}. 

\item \textit{Higher Fenchel-Nielsen coordinates}

Our results could also be of use in the study of natural Darboux coordinates on the moduli space of flat connections and/or the character variety and the behaviour of these coordinates under different choices of pants decompositions of the underlying Riemann surface. In the $\mathfrak{sl}(2)$ case, for every choice of a pants decomposition of the Riemann surface, there is a natural set of Darboux coordinates on the character variety called the Fenchel-Nielsen length and twist coordinates (see \cite{goldman2003complex} for a review). In the $\mathfrak{sl}(2)$ case, we get $(3g-3+n)$ pairs of length and twist coordinates - one pair each for every closed curve in $C_{g,n}$. In the higher rank cases, it is again possible to define analogous coordinates for every choice of a pants decomposition. It turns out that complex higher Fenchel-Nielsen coordinates arise from $\mathcal{N}=2$ theories as a natural system of Darboux coordinates on the Hitchin moduli space \cite{nekrasov2014darboux,hollands2016spectral}. In this context, they have recently been studied in specific higher rank examples employing different points of view \cite{Hollands:2016kgm,Hollands:2017ahy,Coman:2017qgv,Jeong:2018qpc,Brennan:2019hzm}. Closely related real Fenchel-Nielsen coordinates for higher Teichmuller spaces\footnote{These correspond to subspaces in the character variety where we only consider representations of the form $\pi_1(C_{g,n}) \rightarrow SL(n,\mathbb{R})$} go back to the work of \cite{goldman1990convex} for the $\mathfrak{sl}(3)$ case and have recently been studied in \cite{zhang2015degeneration} for the $\mathfrak{sl}(N)$ case.

Independent of the methods used, a new feature that one notices in the higher rank cases is that there are non-trivial coordinates associated to a thrice punctured sphere. These coordinates are sometimes denoted as \textit{internal} Fenchel-Nielsen coordinates \cite{zhang2015degeneration,wienhard2018deforming}. And as in the $\mathfrak{sl}(2)$ case, we continue to have coordinates attached to the nodes themselves. In the case of a standard node, the number of coordinates attached to the node is $2 rank(G)$. But, in the case of a restricted node, there is a reduction in this number to $2 rank(H)$. The existence of restricted nodes (see examples in \S\ref{examples}) also makes it clear that the number of internal and nodal (or center) parameters need not be separately invariant under changes of pants decompositions. It is an interesting problem to study the precise relationship between the coordinates arising from different choices of pants decompositions. For the classical Fenchel-Nielsen coordinates, this has been done in \cite{takayuki1993effects,alessandrini2012behaviour}. We believe our results on the allowed restricted nodes (in \S\ref{restricted_nodes} and \S\ref{classifying}) will be helpful in finding such relationships in the higher rank cases.

\item \textit{The Deligne-Simpson problem}

Finally, we would like to mention a conjectural application of our results to the existence problem for tame, irreducible $SL_N$ character varieties. When the underlying Riemann surface is a $n$-punctured sphere $C_{0,n}$, this problem has been studied by Deligne and Simpson \cite{simpson1992products}. To solve this problem, one needs to provide conditions under which tame, irreducible $SL_N$ character varieties are guaranteed to exist. For a particular class of examples, Simpson \cite{simpson1992products} obtained a pair of geometric conditions that achieve this goal. 
 
 Interestingly, we find that our \textit{OK} condition on tame Higgs bundles (with nilpotent Higgs fields) has a close connection to the existence problem for the corresponding character variety. 
 Specifically, we prove in Appendix \ref{OKappendix} that the \textit{OK} condition is necessary and sufficient for Simpson's conditions (from \cite{simpson1992products}) to hold for the corresponding character variety. This result is, however, limited to the case where at least one of the punctures has a regular residue for the Higgs field. In Appendix \ref{OKappendix}, we outline a conjecture for the more general cases.

\end{enumerate}

\section{Tame Hitchin Systems on Smooth Curves}\label{2_tame_hitchin_systems_on_smooth_curves}

In this and subsequent sections, we will be relying on many standard results about the moduli space of curves and linear systems on families of curves.  We refer the reader to \cite{harris2006moduli,arbarello2011geometry} for an exposition of these results. 

Recall that the total space of the Hitchin system is the moduli space $Higgs$ of $J_{\mathbb{C}}$-Higgs bundles which are defined to be moduli space of pairs $(V,\Phi)$ where $V$ is a principal $J_{\mathbb{C}}$ bundle and $\Phi \in H^0(C,ad(V)\otimes K)$ in the case without ramification. In this paper, we will mostly take $Higgs$ to also obey an appropriate stability condition with the exception being the discussion in \S\ref{stability}.  Hitchin observed \cite{hitchin1987stable} that there is a natural map which is now called the Hitchin map :
\begin{equation}
\mu : Higgs \rightarrow \bigoplus_{k} H^0(C, K^{\otimes k})
\end{equation}
where $k$ runs over the degrees of J-invariant polynomials on $\mathfrak{j}$ ($k=2,3,\dots,N$ for $\mathfrak{j}=A_{N-1}$). We denote the image $\bigoplus_{k} H^0(C, K^{\otimes k})$ as the base $B$ of the Hitchin system. The fibers $\mu^{-1}(b)$ over some generic point $b \in B$ are complex Lagrangian tori. In other words, $(Higgs, \mu)$ defines a complex integrable system. The fibers of $\mu$ admit succinct descriptions in terms of Jacobians/Prym varieties associated to the spectral/cameral curves built out of $\Phi$ \cite{beauville1989spectral,donagi1995spectral}. 

There is a further generalization where we replace $K$ by a more general line bundle. We are, in particular, interested in the case where $K$ is replaced by $K(D)$ where $D$ is a divisor of marked points on $C$. Such a replacement leads us to the meromorphic Hitchin system. In this setting, we have Higgs field $ \Phi \in H^0(C,ad(V)\otimes K(D))$. The resulting moduli space $Higgs_D$ of pairs $(V,\Phi)$ is a Poisson manifold. If we restrict the residues of $\Phi$ at the marked points to be fixed conjugacy classes of $\mathfrak{j}$, then we restrict to a particular symplectic leaf in the Poisson manifold. The Hitchin map $\mu$, when restricted to this symplectic leaf, again describes a complex integrable system\cite{markman1994spectral,bottacin1995symplectic}. 

When the Hitchin system is associated to a 4d $\mathcal{N}=2$ theory, one can deduce the geometry of the integrable system by formulating the 4d $\mathcal{N}=2$ theory on $\mathbb{R}^{1,2}\times S^1_R$ and studying how the resulting moduli space is fibered over the 4d Coulomb branch\cite{Seiberg:1996nz,Gaiotto:2008cd}. This argument is based on constraints from supersymmetry and the nature of the $R \rightarrow 0$ limit which corresponds to a dimensional reduction of the 4d $\mathcal{N}=2$ theory to a 3d $\mathcal{N}=4$ theory. These arguments also carry over to the case where the base curve $C$ develops a nodal singularity. In particular, we expect the Hitchin map $\mu$ to be Lagrangian. There is, however, one important new feature and this has to do with the fact that the Hitchin map $\mu$ could fail to be proper when $C$ is singular. This means that some of the fiber directions of $\mu$ may no longer be compact. Physically, this is to be expected since the spectral curve $\Sigma_b$ is singular in the limit where we take $Im(\tau_{UV}) \rightarrow \infty $ and the fibers of the Hitchin map are the generalized Jacobians associated to singular curve. The base directions which are symplectic dual to the noncompact fiber directions become additional Casimir parameters in the sense of \cite{markman1994spectral}. For reasons that will be explained in  \S\ref{4_standard_and_restricted_nodes} and \S\ref{Hitchinnodal}, we will denote these additional Casimir parameters as \textit{center parameters}. These center parameters will turn out to play an important role in our discussions.

\subsection{Nilpotent orbits and spectral curves: local story on a smooth curve.}\label{21_nilpotent_orbits_and_spectral_curves_local_story_on_a_smooth_curve}

We work with the Hitchin system for $J_{\mathbb{C}}=SL_N$ on a smooth curve $C$ with marked points in a reduced divisor $D = \sum_i  p_i$. 

At each point $p_i$, we insert a regular four dimensional defect of the 6d $(2,0)$ theory $\mathscr{X}[A_{N-1}]$. The effect of this defect is to produce a simple pole in the Higgs field
\begin{equation}
\Phi = \frac{a}{z} + \ldots
\end{equation}
where $\text{Res}(\Phi)=a$ is an element of the complex Lie algebra $\mathfrak{j}$ associated to $J$. Since we want to study tame Hitchin systems corresponding to \emph{conformal} theories, we additionally assume that $a$ is a nilpotent element in $\mathfrak{j}$. There is a natural $\mathfrak{j}$ action on the adjoint valued Higgs field $\Phi$. Inequivalent boundary conditions are labeled by the conjugacy class $O_a$ to which the residue $a$ belongs. In type $A$, nilpotent orbits can be classified by using the Jordan normal form and counting the sizes of the Jordan blocks. We label an orbit by a partition of $N$, which we can equally-well think of as the heights of the columns of a Young diagram. The partition (or Young diagram) is called the Hitchin label for the defect in the physics literature. There is a related, dual label called the Nahm label which is more directly associated to the Higgs branch. For type A Hitchin systems, the Nahm label is just given by the transpose partition. Since our study here will be confined to the Coulomb branch, we will privilege the Hitchin label over the Nahm label for most of the paper. However, in \S\ref{flavour}, we will discuss the flavour symmetry and the Higgs branch and for those discussions, the Nahm label is more natural. 

The spectral curve is now given by $w^N = \sum_{ k=2}^N   a_k  w^{N-k}.$ Here $a_k$ is a pluridifferential on $C$ i.e. a section of $(K_C)^{\otimes k}$ with allowed pole of order up to $\pi_k$. Alternatively, it is a section on $C$ of $(K_C(D))^{\otimes k}$ with a zero of order $\ge \chi_k$, where $\pi_k + \chi_k = k.$

For a given nilpotent orbit $O_i$ inserted at $p_i$, the order of  the zero $\chi_k^{(i)}$ of $a_k$ at $p_i$ is the column-number of the column containing $k^{\text{th}}$ box in the Young diagram corresponding to $O_i$ (where the boxes of the Young diagram are labeled consecutively, starting with the first box of the first column and proceeding vertically and then to the right). This determines $\pi_k^{(i)} =k - \chi_k^{(i)}$. For example, the regular nilpotent orbit (partition $[N]$) gives orders of vanishing $\chi_k^{(i)} =1$ for all $k$, the subregular orbit (partition $[N-1,1]$) gives $\chi_k^{(i)} =1$ for $k \lt N$ and $\chi_N^i =2$. At the opposite extreme, the minimal nilpotent orbit (partition $[2,1^{N-1}]$) gives  $\chi_k^{(i)} = k-1$.

Note that the orbit $O$ determines a generic form of the spectral cover $\Sigma_O$. The actual cover could be any specialization of the generic form, i.e. the orders of vanishing of the coefficients are allowed to go up but not down.

More precisely, the orbit determines not the type of singularity of the spectral curve but the local structure of the spectral sheaf on it. For example, a matrix is regular if and only if it has a one dimensional eigenspace per eigenvalue. In the Hitchin moduli space, this implies that an orbit at $p$ is regular  if and only if the spectral sheaf has rank 1 everywhere above $p$, i.e. it is a line bundle on the spectral curve $\Sigma$ near the inverse image of $p$ \cite{beauville1989spectral}. If $\Sigma$  is non-singular then all spectral sheaves on it are line bundles, so the Hitchin fiber (= the Jacobian $J(\Sigma)$ ) consists only of line bundles. On a singular spectral curve $\Sigma$ , most spectral sheaves are still line bundles, but some are not. For example, when $\Sigma$ is an irreducible nodal curve, the fiber is the compactified Jacobian $\overline{J(\Sigma)}$. This has the Jacobian $J(\Sigma)$ as a dense open subset, but the other (closed, lower dimensional) stratum consists of sheaves that are not line bundles - they have rank 2 at the node, arising instead as direct images of line bundles on the normalization of $\Sigma$. So the regular orbit can correspond to (line bundles on) either smooth or arbitrarily singular spectral curves, the subregular orbit corresponds to a spectral curve with at least a node (and a sheaf that has rank exactly 2 at one point above the singularity) and so on.

Conversely, a given spectral cover $\Sigma$ determines a smallest orbit $O_{\Sigma}$. The actual orbit obtained from some sheaf on $\Sigma$ may be any orbit containing $O_{\Sigma}$ in its closure. If $\nu: N \to \Sigma$ is the normalization of spectral curve, then the smallest orbit $O_\Sigma$  corresponds to the sheaf $\nu_*(\mathcal{O})$, while the largest (= regular) orbit  corresponds to the structure sheaf $\mathcal{O}_{\Sigma}$. If the spectral curve were to have a nodal singularity, then $O_{\Sigma}$ is the subregular orbit and so on. 

\subsection{Nilpotent orbits and spectral curves: global story on a smooth curve}\label{22_nilpotent_orbits_and_spectral_curves_global_story_on_a_smooth_curve}

Now consider the global situation, taking $C\coloneqq\mathbb{P}^1$. The coefficient $a_k$ is a section of a line bundle, $L_k$, of degree:

\begin{equation}
\deg(L_k)=
k (-2+\deg(D)) - \sum_{p_i \in D} \chi_k^{(i)}  =
-2k + \sum_{p_i \in D} \pi_k^i .
\label{degrees}
\end{equation}
The space of all such sections is a vector space of dimension:
\begin{equation}
 b_k\coloneqq max(1+\deg(L_k), 0).
\end{equation}
As an example, consider the case when $\chi_k^{(i)} = 1$ for all $i,k$. We then have $\deg(L_k) = -2k + \deg(D)(k-1),\, k=2,3,\ldots N$. If we have $\deg(D) \geq 3$ then $\deg(L_k) \geq -1$. So, the dimension $b_k$ of the Hitchin base $B$ in degree $k$ is just given by
\begin{equation}
b_k = k(\deg(D)-2) + 1 - \deg(D)
\end{equation}
Summing over degrees, we get 
\begin{equation}
\label{basedimension}
\begin{split}
\text{dim}(B) &= \sum_{k=2}^{N} k(\deg(D)-2) + 1 - \deg(D)\\
 &= (-1)(N^2-1) + \frac{\deg(D)(N^2-N)}{2}
\end{split}
\end{equation}

The dimension of the total space $Higgs_D$ in this case can be easily computed using Riemann-Roch, cf. \cite{markman1994spectral,bottacin1995symplectic}. Alternatively, this can be evaluated using the non-abelian Hodge correspondence and the realization of the Hitchin moduli space as the character variety $\pi_1(C_{0,k})\rightarrow SL_N$ with fixed regular holonomy around each puncture (\cite{hausel2011arithmetic}) :  
\begin{equation}
\label{totaldimension}
\begin{split}
\text{dim}(Higgs_D) &= (-2)(N^2-1) + \sum_{i=1}^{\deg(D)} \text{dim}(O_{\text{reg}})\\
&= (-2)(N^2-1) + \deg(D)(N^2-N) 
\end{split}
\end{equation}
where we have used the fact that the dimension of a regular orbit in $SL_N$ is $\text{dim}(O_{\text{reg}})=(N^2-N)$.

Comparing \eqref{basedimension} and \eqref{totaldimension}, we see that
\begin{equation}
\text{dim}(Higgs_D) = 2 \text{dim}(B).
\end{equation}
This is in keeping with our expectations since $(Higgs_D,\mu)$, where $\mu$ is the Hitchin map restricted to a symplectic leaf of $Higgs_D$, defines a complex symplectic integrable system.

\section{Global Interlude I : Hitchin System over $\overline{\mathcal{M}}_{0,4}$}\label{Hitchinglobal}

In this section, we will study global aspects of the Hitchin system on family of curves by specializing to the case of a four punctured sphere. We will then use this global model in \S\ref{41_the_standard_node} to take a first look at the Hitchin system on a nodal curve. 

\subsection{A global model for the universal curve $\overline{\mathcal{C}}_{0,4}$}\label{31_a_global_model_for_the_universal_curve_}

Let the locations of four punctures be $z_1,z_2,z_3,z_4$ and let $\lambda$ be their cross ratio

\begin{displaymath}
\lambda = \frac{(z_1-z_3)(z_2-z_4)}{(z_1-z_4)(z_2-z_3)}
\end{displaymath}
We use the following global model for the universal curve $\pi\colon \overline{\mathcal{C}}\to\overline{\mathcal{M}}_{0,4}$.

Consider $\mathbb{CP}^2$ blown up at four points : $E_1\to(1,0,0),\, E_2\to(0,1,0),\,E_3\to(0,0,1),\,E_4\to(1,1,1)$. Let us denote the blown up surface as $\widetilde{\mathbb{CP}}^2$. Let $\lambda_1,\lambda_2$ be homogeneous coordinates on $\overline{\mathcal{M}}_{0,4}=\mathbb{CP}^1$. The cross ratio $\lambda = \lambda_1/\lambda_2$.

We identify the universal curve, $\overline{\mathcal{C}}\simeq \widetilde{\mathbb{CP}}^2$ and the projection $\pi\colon \widetilde{\mathbb{CP}}^2\to \overline{\mathcal{M}}_{0,4}$ is defined as the solution to

\begin{equation}
\lambda_1 x(y-z) + \lambda_2y(z-x) = 0
\label{universalcurve}\end{equation}
which determines $\lambda_{1,2}$ up to a common scaling. Here $x,y,z$ are (the pullbacks to $\widetilde{\mathbb{CP}}^2$ of) the standard projective coordinates on $\mathbb{CP}^2$. As a function on $\mathbb{CP}^2$, the ratio $\lambda= \lambda_1/ \lambda_2$ is well defined except at the four points $E_i$. It extends to give a well defined morphism $\pi: \widetilde{\mathbb{CP}}^2 \to \mathbb{CP}^1$ on the blowup $\overline{\mathcal{C}}= \widetilde{\mathbb{CP}}^2$.

For generic, $\lambda=\lambda_1/\lambda_2$ the fiber, $C_\lambda= \pi^{-1}(\lambda)$, is smooth. But at the three boundary points of $\overline{\mathcal{M}}_{0,4}$, corresponding to $\lambda=0,1,\infty$, $C_\lambda$ degenerates into a pair of lines

\vbox{
\begin{displaymath}
C_0=\{y(z-x) = 0\},\quad\text{(with the node at}\,n_0=(1,0,1)\,)
\end{displaymath}
\begin{center}
\begin{tikzpicture}
\begin{scope}[scale=.75]
\draw[thick,-] (0,0) -- (4,-4);
\draw[thick,-] (4,0) -- (0,-4);
\filldraw (1,-1) circle (3pt) node[anchor=north east, scale=1.25] {$1$};
\filldraw (3,-3) circle (3pt) node[anchor=south west, scale=1.25] {$3$};
\filldraw (3,-1) circle (3pt) node[anchor=north west, scale=1.25] {$2$};
\filldraw (1,-3) circle (3pt) node[anchor=south east, scale=1.25] {$4$};
\end{scope}
\end{tikzpicture}
\end{center}
}

\vbox{
\begin{displaymath}
C_1=\{z(x-y) = 0\},\quad\text{(with the node at}\,n_1=(1,1,0)\,)
\end{displaymath}
\begin{center}
\begin{tikzpicture}
\begin{scope}[scale=.75]
\draw[thick,-] (0,0) -- (4,-4);
\draw[thick,-] (4,0) -- (0,-4);
\filldraw (1,-1) circle (3pt) node[anchor=north east, scale=1.25] {$1$};
\filldraw (3,-3) circle (3pt) node[anchor=south west, scale=1.25] {$2$};
\filldraw (3,-1) circle (3pt) node[anchor=north west, scale=1.25] {$3$};
\filldraw (1,-3) circle (3pt) node[anchor=south east, scale=1.25] {$4$};
\end{scope}
\end{tikzpicture}
\end{center}
}

\vbox{
\begin{displaymath}
C_{\infty}=\{x(y-z) =0\},\quad\text{(with the node at}\,n_{\infty} = (0,1,1)\,)
\end{displaymath}
\begin{center}
\begin{tikzpicture}
\begin{scope}[scale=.75]
\draw[thick,-] (0,0) -- (4,-4);
\draw[thick,-] (4,0) -- (0,-4);
\filldraw (1,-1) circle (3pt) node[anchor=north east, scale=1.25] {$1$};
\filldraw (3,-3) circle (3pt) node[anchor=south west, scale=1.25] {$4$};
\filldraw (3,-1) circle (3pt) node[anchor=north west, scale=1.25] {$2$};
\filldraw (1,-3) circle (3pt) node[anchor=south east, scale=1.25] {$3$};
\end{scope}
\end{tikzpicture}
\end{center}
}

\subsection{The bundle of Hitchin bases}\label{32_the_bundle_of_hitchin_systems}

Pick a collection of 4 nilpotent orbits, $O_i$ in $\mathfrak{sl}(N)$. We will interchangeably consider two models for the spectral curve $\Sigma_\lambda \to C_\lambda$. In both cases, it is the vanishing locus of a homogeneous polynomial

\begin{displaymath}
0 = \det(w\mathds{1} -\Phi) = w^N-\sum_{k=2}^N\phi_k w^{N-k}
\end{displaymath}
in the total space of a line bundle $L\to C$. One model is to take $L=K_C$ and allow the $\phi_k$ to have poles of order $\pi^{(i)}_k=(k-\chi_k^{(i)})$ (dictated by the choice of $O_i$) at the punctures. In the second model, we take $L=K_C\bigl(\sum E_i\bigr)$ and demand that the $\phi_k$ have zeroes of order $\chi_k^{(i)}$ at the punctures. The latter model is more convenient for our global discussion, as it naturally produces $\Sigma$ as a compact curve.

Let

\begin{displaymath}
\mathcal{L}_k = \mathcal{O}(k)\bigl(-\sum_i\chi_k^{(i)} E_i\bigr)
\end{displaymath}
On each curve, $C_\lambda$, the $\phi_k$ are holomorphic sections of $\mathcal{L}_k\vert_{C_\lambda}$. These fit together to form

\begin{displaymath}
\phi_k\in H^0(\overline{\mathcal{C}},\mathcal{L}_k)= H^0(\overline{\mathcal{M}}_{0,4},\pi_* \mathcal{L}_k)
\end{displaymath}
We now proceed to compute the direct image sheaves, $\pi_*\mathcal{L}_k$ on $\overline{\mathcal{M}}_{0,4}$.

\subsection{Computing the direct image bundles}\label{directimages}

We are interested in various line bundles, $\mathcal{L}=\mathcal{O}(k)\left(-\sum_i n_i E_i\right)$ on $\widetilde{\mathbb{CP}}^2$ and their direct-image sheaves $\pi_* \mathcal{L}$ on $\overline{\mathcal{M}}_{0,4}$. A-priori, the direct image is torsion-free and hence (since we are in complex dimension-1) a vector bundle, $V$. The fiber of $V$ over  $\lambda\in \mathcal{M}_{0,4}$ is $H^0\left(C_\lambda, \mathcal{L}\right)$. Over the boundary points, the dimension of $H^0\left(C_{\lambda=0,1,\infty}, \mathcal{L}\right)$ can sometimes jump. If it does then the fiber is $H^0\left(C_{\lambda=0,1,\infty}, \mathcal{L}'\right)\subset H^0\left(C_{\lambda=0,1,\infty}, \mathcal{L}\right)$ for $\mathcal{L}'$ defined in \eqref{substituteL}.

Any vector bundle on $\mathbb{CP}^1$ splits as a direct sum of line bundles. So we have

\begin{equation}
\label{directimage}
\pi_* \mathcal{O}(k)\Bigl(-\sum_i n_i E_i\Bigr)= \sum_{i\in\mathbb{Z}} m_i \mathcal{O}_{\mathbb{P}^1}(i)
\end{equation}
for some collection of $m_i\geq 0$. We thus get one relation,

\begin{displaymath}
\sum_i m_i = h^0\left(C_\lambda, \mathcal{L}\right)
\end{displaymath}
among this infinite number of unknowns. To find more relations (and, ultimately, to solve for the $m_i$), the trick is to tensor \eqref{directimage} with $\mathcal{O}_{\mathbb{P}^1}(-l)$.

\begin{equation}
\pi_* \Bigl(\mathcal{O}(k)(-\sum_i n_i E_i)\otimes \pi^*(\mathcal{O}_{\mathbb{P}^1}(-l))\Bigr)= \sum_{i\in\mathbb{Z}} m_i \mathcal{O}_{\mathbb{P}^1}(i-l)
\label{directimagetwisted}\end{equation}
and use

\begin{equation}
\pi^*\left(\mathcal{O}_{\mathbb{P}^1}(1)\right) = \mathcal{O}(2)\bigl(-\sum_i E_i\bigr)
\label{pullback}\end{equation}
Putting \eqref{directimagetwisted} and \eqref{pullback} together, we have

\begin{equation}
\pi_* \Bigl(\mathcal{O}(k-2l)(-\sum_i (n_i-l) E_i)\Bigr)= \sum_{i\in\mathbb{Z}} m_i \mathcal{O}_{\mathbb{P}^1}(i-l)
\label{directimagebetter}\end{equation}
for each $l\in\mathbb{Z}$. Taking $H^0$ of both sides and using that, for any $f\colon X\to Y$ and $\mathcal{F}$ a sheaf on $X$, $H^0(Y,f_*\mathcal{F})=H^0(X,\mathcal{F})$, we get for each $l$ a relation on the $m_i$.

For $l\geq n_i$ we have, by Hartog's Theorem,

\begin{equation}
h^0\Bigl(\widetilde{\mathbb{CP}}^2, \mathcal{O}(k-2l)\bigl(\sum_i(l-n_i)E_i\bigr)\Bigr)=\begin{cases}\left(\begin{smallmatrix}k-2l+2\\2\end{smallmatrix}\right)&k\geq2l\\0&\text{otherwise}\end{cases}
\label{h0first}\end{equation}
When the $n_i=1$, the conditions imposed by demanding that the sections vanish to some order at the $E_i$ are independent. In that case, we can generalize \eqref{h0first} to

\begin{equation}
h^0\Bigl(\widetilde{\mathbb{CP}}^2, \mathcal{O}(k-2l)\bigl(\sum_i(l-1)E_i\bigr)\Bigr)= \max\left(\left(\begin{smallmatrix}k-2l+2\\2\end{smallmatrix}\right),0\right)-4\max\left(\tfrac{1}{2}(1-l)(2-l),0\right)
\label{h0second}\end{equation}
More generally, if the $n_i$ are ``small enough'' so that the vanishing constraints at the $E_i$ are independent, we have

\begin{equation}
h^0\Bigl(\widetilde{\mathbb{CP}}^2, \mathcal{O}(k-2l)\bigl(\sum_i(l-n_i)E_i\bigr)\Bigr)= \max\left(\left(\begin{smallmatrix}k-2l+2\\2\end{smallmatrix}\right),0\right)-\tfrac{1}{2}\sum_i(n_i-l+1)\max\left(n_i-l,0\right)
\label{h0third}\end{equation}
The requisite condition\footnote{The actual condition is $(n_i-l)+(n_j-l) \leq (k-2l)+1$, but the $l$`s cancel, yielding \eqref{smallenough}.}  is

\begin{equation}
\begin{aligned}
\sum_i n_i &\leq 2k+1\\
n_i+n_j &\leq k+1, \forall\, \text{pairs}\, i,j
\end{aligned}
\label{smallenough}\end{equation}
When this holds, \eqref{directimagebetter} yields

\begin{equation}
\sum_i m_i \max(0, i-l+1) = \max\left(\left(\begin{smallmatrix}k-2l+2\\2\end{smallmatrix}\right),0\right)-\tfrac{1}{2}\sum_i(n_i-l+1)\max\left(n_i-l,0\right)
\label{summi}\end{equation}
for all $l\in\mathbb{Z}$.

When there's a pair $n_i,n_j$ which violates \eqref{smallenough} --- say $n_i+n_j = k+1+p$, for some $p\gt 0$ --- then we replace

\begin{equation}
\mathcal{L}\to \mathcal{L}'= \mathcal{L} \otimes \bigl({\mathcal{O}(-1)(E_i+E_j)}\bigr)^{\otimes p}
\label{substituteL}\end{equation}
This preserves $h^0(\widetilde{\mathbb{CP}}^2,\mathcal{L}')=h^0(\widetilde{\mathbb{CP}}^2,\mathcal{L})$ (and it preserves the inequality for the other pairs) while making $n_i+n_j$ ``small enough.'' Computing $\pi_*\mathcal{L}'$ produces a vector bundle on $\overline{\mathcal{M}}_{0,4}$ of the same rank and the same first Chern class as $\pi_*\mathcal{L}$. Moreover, since $\mathcal{L}'$ is a subsheaf of $\mathcal{L}$, $\pi_*\mathcal{L}'$ is a subsheaf of $\pi_*\mathcal{L}$. Since they are vector bundles of the same rank and first Chern class on $\overline{\mathcal{M}}_{0,4}=\mathbb{CP}^1$, and the former is a subbundle of the latter, they are isomorphic.

To see how this works, let's specialize to setting all the $n_i=1$. Then \eqref{summi} becomes

\begin{equation}
m_l+2m_{l+1}+3m_{l+1}+\dots =\begin{cases}
\tfrac{1}{2}\left(k^2+(3-4l)k+6(l-1)\right)&l\leq0\\
\tfrac{1}{2}(k-2l+2)(k-2l+1)&0\lt l\leq \tfrac{k}{2}\\
0&l\gt \tfrac{k}{2}
\end{cases}
\label{summispecial}\end{equation}
To solve this system of equations, there are two cases
\begin{itemize}%
\item $k=2p=\text{even}$
\begin{itemize}%
\item We have $m_l=0$ for $l\geq p+1$. Then clearly, $m_p=1$ and hence $4=m_{p-1}=m_{p-2}=\dots =m_1$. Then $m_0=0$ and $m_l=0$, for $l\lt 0$.
\end{itemize}
\item $k=2p+1=\text{odd}$
\begin{itemize}%
\item We have $m_l=0$ for $l\geq p+1$. Then clearly, $m_p=3$ and hence $4=m_{p-1}=m_{p-2}=\dots =m_1$. Then $m_0=0$ and $m_l=0$, for $l\lt 0$.
\end{itemize}
\end{itemize}
To summarize:
\begin{equation}\label{pushsummary}
\pi_* \mathcal{O}(k)\bigl(-\sum_i E_i\bigr) =
\begin{cases}[1.5]
4\mathcal{O}(1)\oplus 4\mathcal{O}(2)\oplus\dots\oplus4\mathcal{O}(p-1)\oplus\mathcal{O}(p)&\text{for}\; k=2p\\
4\mathcal{O}(1)\oplus 4\mathcal{O}(2)\oplus\dots\oplus4\mathcal{O}(p-1)\oplus3\mathcal{O}(p)&\text{for}\; k=2p+1\\
\end{cases}
\end{equation}
Extending this to $1\leq n_i\leq k-1$ (subject to $\sum_i n_i\leq 2k+1$) will be useful in the following, so let us tabulate the results. We mark in $\color{red}\text{red}$ the cases where we had to apply \eqref{substituteL}.

For $k=2$, there's only one case, $(n_1,n_2,n_3,n_4)=(1,1,1,1)\Rightarrow m_1=1$. For $k=3,4,5$, the results are summarized in Table \ref{directimagetable}.
\begin{table}[!h]
\subcaptionbox*{$k=3$}{
\begin{tabular}{|c|c|}
\hline
$(n_1,n_2,n_3,n_4)$&$m_1$\\
\hline
$(1, 1, 1, 1)$&3\\
$(2, 1, 1, 1)$&2\\
$(2, 2, 1, 1)$&1\\
$(2, 2, 2, 1)$&0\\
\hline
\end{tabular}}
\subcaptionbox*{$k=4$}{
\begin{tabular}{|c|c|c|}
\hline
$(n_1,n_2,n_3,n_4)$&$m_2$&$m_1$\\
\hline
$(1, 1, 1, 1)$&1&4\\
$(2, 1, 1, 1)$&1&3\\
$(2, 2, 1, 1)$&1&2\\
$(2, 2, 2, 1)$&1&1\\
$(2, 2, 2, 2)$&1&0\\
$(3, 1, 1, 1)$&0&3\\
$(3, 2, 1, 1)$&0&2\\
$(3, 2, 2, 1)$&0&1\\
$(3, 2, 2, 2)$&0&0\\
$\color{red}(3, 3, 1, 1)$&0&1\\
$\color{red}(3, 3, 2, 1)$&0&0\\
\hline
\end{tabular}}
\subcaptionbox*{$k=5$}{
\begin{tabular}{|c|c|c|}
\hline
$(n_1,n_2,n_3,n_4)$&$m_2$&$m_1$\\
\hline
$(1, 1, 1, 1)$&3&4\\
$(2, 1, 1, 1)$&3&3\\
$(2, 2, 1, 1)$&3&2\\
$(2, 2, 2, 1)$&3&1\\
$(2, 2, 2, 2)$&3&0\\
$(3, 1, 1, 1)$&2&3\\
$(3, 2, 1, 1)$&2&2\\
$(3, 2, 2, 1)$&2&1\\
$(3, 2, 2, 2)$&2&0\\
$(3, 3, 1, 1)$&1&2\\
$(3, 3, 2, 1)$&1&1\\
$(3, 3, 2, 2)$&1&0\\
$(3, 3, 3, 1)$&0&1\\
$(3, 3, 3, 2)$&0&0\\
$(4, 1, 1, 1)$&0&4\\
$(4, 2, 1, 1)$&0&3\\
$(4, 2, 2, 1)$&0&2\\
$(4, 2, 2, 2)$&0&1\\
$\color{red}(4, 3, 1, 1)$&0&2\\
$\color{red}(4, 3, 2, 1)$&0&1\\
$\color{red}(4, 3, 2, 2)$&0&0\\
$\color{red}(4, 3, 3, 1)$&0&0\\
$\color{red}(4, 4, 1, 1)$&0&1\\
$\color{red}(4, 4, 2, 1)$&0&0\\
\hline
\end{tabular}}
\caption{The values of $m_i$ for $k=3,\, k=4,\, k=5$}\label{directimagetable}
\end{table}

%\subsection{Partition Functions}
%
%When partition functions of 4d $\mathcal{N}=2$ theories are evaluated on compact four manifolds, one typically has to integrate over the space of IR vacua. This is true in the topologically twisted $\mathcal{N}=2$ theory relevant for computing four manifold invariants \cite{Moore:1997pc} and also in the $\Omega$-deformed theory which is used to calculate sphere partition functions \cite{Pestun:2007rz}. Schematically, we have
%\begin{equation}
%    Z_{N=2} = \int du Z(u,\tau_{UV})
%\end{equation}
%where $u$ is a coordinate on the Coulomb branch $\mathcal{B}$ (or ``u-plane") of the low energy theory. When the $\mathcal{N}=2$ theory in question is conformal, it is known that the integrand $Z(u,\tau_{UV})$ is not just a function of $\tau$ but actually displays interesting modular properties under S-duality transformations of the complex coupling $\tau_{UV}$ \cite{Moore:1997pc,Okuda:2010ke}. We view this as one manifestation of the fact that the Coulomb branch $\mathcal{B}$ is a non-trivial bundle over the space of marginal parameters $\tau_{UV}$. We have explicitly identified this bundle in (\ref{directimage}) for $\mathcal{N}=2$ theories with Coulomb branches that arise from type A Hitchin systems on a four punctured sphere.

\section{Global Interlude II : Standard and Restricted Nodes}\label{4_standard_and_restricted_nodes}

In this section, we will use the global model developed in \S\ref{Hitchinglobal} to study the kinds of nodes that can arise for a tame Hitchin system on a four punctured sphere. To illustrate the main points, we will pick the Hitchin system for $\mathfrak{j}=\mathfrak{sl}(4)$.

\subsection{The standard node}\label{41_the_standard_node}

We are interested in the behaviour of the spectral curve when the base curve, $C$ develops a node. There is a ``generic'' behaviour that we will call ``the standard node.'' This is when the number of Casimirs (or ``center parameters'', in the nomenclature to be introduced below) on the Hitchin base is equal to the rank of $\mathfrak{j}$ ($N-1$ for $\mathfrak{sl}(N)$).

When the Hitchin orbits at the punctures are sufficiently big, when the number of punctures is sufficiently large or if the genus of each component of the nodal curve is $\geq 1$, then every node is a standard node. Similarly, in the $A_1$ theory, all nodes are standard.

On the 4-punctured sphere, we can ensure that we get a standard node by taking the residue of the Higgs field to lie in the regular nilpotent orbit, $\text{Res}(\Phi)=a \in [N]$ at each of the punctures. This corresponds to requiring that each $\phi_k$ has a simple zero at each of the four punctures $E_i,i=1,2,3,4$.

We will be particularly interested in the behaviour of the family of spectral curves (or equivalently, of Hitchin bases $\mathcal{B}$) as we approach a boundary of $\mathcal{M}_{g,n}$. In $\overline{\mathcal{M}}_{0,4}$, the three boundary points look similar, so let us focus on one of them: the $\lambda = 1$ boundary. For simplicity, we will specialize to $\mathfrak{sl}(4)$. The generalization to arbitrary $\mathfrak{sl}(N)$ is straightforward.

We list the contributions to $dim(\mathcal{B})$ from each degree. Each of the $\phi_k$ have the form of certain homogeneous polynomials of degree $k$ in $x,y,z$. That is, $\phi_k$ is a holomorphic section of the line bundle $\mathcal{L}_k=\mathcal{O}(k)(-\sum_i E_i)$ on $\widetilde{\mathbb{CP}}^2$. We computed the direct images $\pi_*\mathcal{L}_k$ in \S\ref{directimages}. For each $k$, the results are summarized in the first line of the corresponding sub-table of table \ref{directimagetable} or equivalently in \eqref{pushsummary}.

\begin{center}
\begin{tikzpicture}
\begin{scope}[scale=.75]
\node[scale=1.2] at (-1,1){$[4]_{(z_1)}$};
\node[scale=1.2] at (-1,-1){$[4]_{(z_3)}$};
\node[scale=1.2] at (5,1){$[4]_{(z_2)}$};
\node[scale=1.2] at (5,-1){$[4]_{(z_4)}$};
\draw[thick] (0,0.0) circle [radius=2.5];
\draw[thick] (5,0.0) circle [radius=2.5];
\end{scope}
\end{tikzpicture}
\end{center}

\medskip
\noindent\underline{$\pi_*\mathcal{L}_2 = \mathcal{O}_{\mathbb{P}^1}(1)$.} So the space of $\phi_2$s is 2-dimensional. We can think of it as being spanned by

\begin{displaymath}
C_0 = y(z-x),\, C_1 = z(x-y),\, C_\infty= x(y-z)
\end{displaymath}
subject to the relation

\begin{equation}
C_0+C_1+C_\infty=0
\label{crel}\end{equation}
Restricted to any given $C_\lambda$, there's an additional relation \eqref{universalcurve},

\begin{equation}
C_\infty \lambda_1 + C_0 \lambda_2 =0
\label{ccrel}\end{equation}
which means that, restricted to $C_\lambda$, the space of $\phi_2$s is 1-dimensional. But notice that $\mathcal{O}_{\mathbb{P}^1}(1)$ is nontrivial. Any global holomorphic section has a zero for some $\lambda\in \mathbb{P}^1$.

We will choose a trivialization which is good everywhere except at $\lambda=\infty$ and write

\begin{displaymath}
\phi_2 = u_{2;C} x(y-z)
\end{displaymath}
Near $\lambda=1$, this $\phi_2$ does not vanish either on the right component (the line $x-y=0$) nor on the left component (the line $z=0$) and thus belongs to the center (equivalently the node itself). Hence the ``C'' subscript.

\medskip
\noindent\underline{$\pi_*\mathcal{L}_3 = 3\mathcal{O}_{\mathbb{P}^1}(1)$.} So the space of $\phi_3$s is 6-dimensional. We can view this as being spanned by

\begin{displaymath}
C_0 x,\, C_0 y,\, C_0 z,
  C_1 x,\, C_1 y,\, C_1 z,
  C_\infty x,\, C_\infty y,\, C_\infty z,
\end{displaymath}
subject to the relation \eqref{crel}. Restricting to any given $C_\lambda$, we get the additional relation \eqref{ccrel}, which cuts the dimension of the space of $\phi_3$s down to 3. Again, we'll choose a trivialization of $\pi_*\mathcal{L}_3$ which is good everywhere but at $\lambda=\infty$ and write

\begin{displaymath}
\phi_3 = x(y-z)(u_{3;1}x + u_{3;2}y + u_{3;3}z)
\end{displaymath}
At $\lambda=1$, $u_{3;L}=u_{3;1}$ is supported only on the left component of the curve (the line $z=0$), $u_{3;R}=u_{3;3}$ is supported only on the right component of the curve (the line $x-y=0$), and $u_{3;C}=u_{3;1}+u_{3;2}$ is supported on both.

\medskip
\noindent\underline{$\pi_*\mathcal{L}_4 = 4\mathcal{O}_{\mathbb{P}^1}(1)\oplus \mathcal{O}_{\mathbb{P}^1}(2)$.} So the space of $\phi_4$s is $4\times 2+3=11$-dimensional. Restricting to any given $C_\lambda$ reduces the dimension to 5. For any $\lambda\neq\infty$, we can take this 5-dimensional space to be spanned by

\begin{displaymath}
\phi_4 = x(y-z)[u_{4;L,L}(x-y)(z-x)+u_{4;L,R}(x-y)y +u_{4;R,L}z(z-x) + u_{4;R,R}z y + u_{4;C}x(y-z)]
\end{displaymath}
up to terms which vanish by \eqref{universalcurve}. Here, the first $L(R)$ subscript pertains to sections supported on the left(right) component of the nodal curve at $\lambda=1$ and the second $L(R)$ subscript pertains to sections supported on the left(right) component of the nodal curve at $\lambda=0$. Moreover, in this parametrization, $u_{4;C}$ is the parameter which transforms as a section of $\mathcal{O}_{\mathbb{P}^1}(2)$ (with a double pole at $\lambda=\infty$), whereas the other four parameters transform as sections of $\mathcal{O}_{\mathbb{P}^1}(1)$.

Taken together, we have the following family of spectral curves

\begin{equation}\label{master_curve}
\begin{split}
Det(\Phi - w\mathds{1}) &= w^4 - x(y-z)\bigl[u_{2;C} w^2+ (u_{3;1}x + u_{3;2}y + u_{3;3}z)w \\
&\qquad+ (u_{4;L,L}(x-y)(z-x)+u_{4;L,R}(x-y)y +u_{4;R,L}z(z-x) + u_{4;R,R}z y\\
&\qquad+ u_{4;C}x(y-z))\bigr]\\ &= 0 
\end{split}
\end{equation}
And we have following graded base dimensions $b_k^{L,C,R}, k = 2,3,4$ :
\begin{equation}
\begin{split}
b_k^L &= \{0,1,2 \} \\
b_k^C &= \{1,1,1 \} \\
b_k^R &= \{0,1,2 \}
\end{split}
\end{equation}
We have trivialized the bundle of Hitchin bases on the complement of $\lambda=\infty$. If we want a description that extends to $\lambda=\infty$, we could choose (say) a trivialization which was good everywhere except at $\lambda=0$ and set $u'_{2;C} = C_0$ (and similarly for the rest of the $u$'s). Clearly, these are related by

\begin{displaymath}
u'_{2;C} = - \tfrac{1}{\lambda} u_{2;C}
\end{displaymath}
That is, the bundle of Hitchin bases is nontrivial over $\overline{\mathcal{M}}_{0,4}$. It splits as a direct sum of line bundles, as we computed in \S\ref{directimages}. Reading off the results from table \ref{directimagetable} or \eqref{pushsummary},
\begin{equation}
\label{globalallregular}
\mathcal{B}=\overset{k=2}{\overbrace{\mathcal{O}(1)}}\oplus \overset{k=3}{\overbrace{3\mathcal{O}(1)}}\oplus\overset{k=4}{\overbrace{4\mathcal{O}(1)+\mathcal{O}(2)}}=8\mathcal{O}(1)\oplus\mathcal{O}(2)\quad.
\end{equation}

\begin{remark}
More generally, for $\mathfrak{j}=\mathfrak{sl}(N)$, and 4 regular nilpotents on $C_{0,4}$, 
\begin{equation}\label{finerstratification}
\mathcal{B}_k =\begin{cases}[3]
\mathcal{O}(l)\oplus{\displaystyle \bigoplus_{m=1}^{l-1}} \mathcal{O}(m)^{\oplus4}&\text{for}\; k=2l\\
\mathcal{O}(l)^{\oplus3}\oplus{\displaystyle \bigoplus_{m=1}^{l-1}} \mathcal{O}(m)^{\oplus4}&\text{for}\; k=2l+1
\end{cases}
\end{equation}
This gives a new stratification of the Hitchin base, finer than the decomposition into $\mathcal{B}_k$, even for smooth C. We do not understand its mathematical significance. From a physical perspective, this stratification gives the transition functions needed to relate the Coulomb branch parameters in different S-duality frames. The precise physical significance of the transition functions implied by \eqref{finerstratification} remains to be explored.
\end{remark}

When $C$ approaches the nodal limit, the Lagrangian fibers of the Hitchin map $\mu$ acquire certain non-compact directions. The non-compact directions in the fiber are symplectic dual \footnote{Let $u_i,\theta_i$ be a system of coordinates on the base and fibers of the Hitchin integrable system such that $\Omega_I = \sum_i du_i \wedge d\theta_i$. In the nodal limit, some of the $u_i$ correspond to the center parameters $u_{i;C}$. Their (symplectically) dual directions parameterized by $\theta_{i;C}$ are the non-compact directions.} to the center parameters. If we quotient out the non-compact directions, we are then left with a Poisson integrable system in which the center parameters act as Casimir parameters. So, the label ``C'' in $u_{k;C}$ could equally-well stand for Casimir. 

For generic values of the center parameters, $\Phi$ has a simple pole with semisimple residue at the node (for $\lambda=1$, this is the point $x=y=1,\, z=0$). The fiber over the node on $C$, consists of $N$ nodes of the spectral curve, which is otherwise smooth.

If we focus our attention  on just the right component of $C$ (the line $x-y=0$), we get the following spectral curve

\begin{displaymath}
w^4 - x(x-z) [u_{2;C}w^2 + (u_{3;C}x + u_{3;R}z)w + (u_{4;C}x(x-z) + u_{4;R,R}x z + u_{4;R,L} z(z-x) ) ] = 0
\end{displaymath}
where, as above, $u_{3;C}=u_{3;1}+u_{3;2}$, $u_{3;R}=u_{3;3}$ and $u_{3;L}=u_{3;1}$. On the left component (the line $z=0$)
\begin{displaymath}
w^4 - x y\bigl[u_{2;C} w^2+ (u_{3;L}(x-y) + u_{3;C}y)w
+ ( u_{4;C}x y-u_{4;L,L}x(x-y)+u_{4;L,R}(x-y)y 
)\bigr]= 0 
\end{displaymath}

Setting the center parameters to zero (and freezing the corresponding non-compact fiber directions), we obtain a symplectic integrable subsystem. Since the normalization of the nodal curve is disconnected, the integrable system is the product of an integrable system associated to the 3-punctured sphere on the left with an integrable system associated to the 3-punctured sphere on the right. On $C_L$ the spectral curve for the symplectic integrable subsystem is
\begin{equation}
w^4 - x y(x-y)\bigl[u_{3;L}w
+ (-u_{4;L,L}x+u_{4;L,R}y 
)\bigr]= 0 
\end{equation}
and on $C_R$ we obtain
\begin{equation}
w^4 - x z (x-z) [u_{3;R}w + (u_{4;R,R}x - u_{4;R,L} (x-z) ) ] = 0\quad.
\end{equation}
Each of these is the spectral curve for the Hitchin integrable system associated to $C_{0,3}$ with 3 regular Hitchin punctures.

The physics of this degeneration is well-understood: the theory contains an $SU(4)$ $\mathcal{N}=2$ vector multiplet which becomes weakly coupled as we approach the nodal limit. The center parameters are the VEVs of (gauge-invariant polynomials in) the scalar fields in the vector multiplet. These are in 1-1 correspondence with the independent Casimirs of $SU(4)$. This vector multiplet gauges a diagonal $SU(4)$ subgroup of the product of the two SCFTs (associated to the 3-punctured spheres) which are called $T_N$ (for $N=4$) in \cite{Gaiotto:2009we}.

\subsection{Restricted nodes}\label{first_restricted_nodes}

\begin{table}
\begin{center}
\begin{tabular}{|c|c|}
\hline 
Hitchin orbit & Zero orders $\vec{\chi}$ \\ 
\hline 
$[4]$ & $(1,1,1)$ \\ 
\hline 
$[3,1]$ & $(1,1,2)$ \\ 
\hline 
$[2,2]$ & $(1,2,2)$ \\ 
\hline 
$[2,1^2]$ & $(1,2,3)$ \\ \hline 
\end{tabular} 
\end{center}
\caption{The zero orders $\vec{\chi}$ for the non-zero nilpotent orbits in $\mathfrak{sl}(4)$}\label{zeroorderssl4}
\end{table}

So far, we have assumed four punctures with residues in the regular Hitchin nilpotent orbit. If we were to choose the residues to be in some smaller nilpotent orbit, then the zero orders would go up.  For example, if we choose the residue at the puncture $E_1$ to be in the Hitchin orbit $[2^2]$, this forces $\phi_3$ and $\phi_4$ to have double zeroes (instead of simple zeroes) at $E_1$. That is, it changes the vector $\vec{\chi}^1$ from $(1,1,1)$ to $(1,2,2)$ where the entries of $\vec{\chi}$ correspond to $k=2,3,4$. This, in turn, imposes linear relations among the coefficients. These relations can be deduced by looking at the \eqref{master_curve}. We see that the only terms in $\phi_3$ and $\phi_4$ that don't have a double zero at $E_1$ (which is the locus $y=0,z=0$) are the terms with coefficients $u_{3;1}$ and $u_{4;L,L}$. If we set these coefficients to zero, we force the residue at $E_1$ to live in the nilpotent orbit $[2^2]$. One can deduce similar constraints for all other nilpotent orbits and the locations $E_i$. For the reader's convenience, we tabulate the zero orders $\vec{\chi}$ for the various nilpotent orbits in $\mathfrak{sl}(4)$ in table \ref{zeroorderssl4}. When we have three regular nilpotents and one non-regular nilpotent inserted at $E_i$, the constraints obtained in this way are summarized in table \ref{constraint_table}.

\begin{table}
\begin{center}
\begin{tabular}{|c|c|c|c|c|}
\hline
$O_H$&$E_1$&$E_2$&$E_3$&$E_4$\\
\hline
$[3,1]$&$u_{4;L,L}=0$&$u_{4;L,R}=0$&$u_{4;R,L}=0$&$u_{4;R,R}=0$\\ \hline
$[2^2]$&$\begin{gathered}u_{3;1}=0\\u_{4;L,L}=0\end{gathered}$&$\begin{gathered}u_{3;2}=0\\u_{4;L,R}=0\end{gathered}$&$\begin{gathered}u_{3;3}=0\\u_{4;R,L}=0\end{gathered}$&$\begin{gathered}u_{3;1}+u_{3;2}+u_{3;3}=0\\u_{4;R,R}=0\end{gathered}$\\ \hline 
$[2,1^2]$&$\begin{gathered}u_{3;1}=0\\u_{4;L,L}=0\\\begin{aligned}(&u_{4;R,L}+u_{4;C})\\+& \lambda(u_{4;L,R}-u_{4;R,L})\\&=0\end{aligned}\end{gathered}$&$\begin{gathered}u_{3;2}=0\\u_{4;L,R}=0\\\begin{aligned}(&u_{4;R,R}+u_{4;C})\\+&\lambda(u_{4;L,L}-u_{4;R,R})\\&=0\end{aligned}\end{gathered}$&$\begin{gathered}u_{3;3}=0\\u_{4;R,L}=0\\\begin{aligned}(&u_{4;L,L}-u_{4;C})\\+&\lambda(u_{4;R,R}-u_{4;L,L})\\&=0\end{aligned}\end{gathered}$&$\begin{gathered}u_{3;1}+u_{3;2}+u_{3;3}=0\\u_{4;R,R}=0\\\begin{aligned}(&u_{4;L,R}-u_{4;C})\\+&\lambda(u_{4;R,L}-u_{4;L,R})\\&=0\end{aligned}\end{gathered}$\\
\hline
\end{tabular}
\end{center}
\caption{Conditions imposed on Coulomb branch parameters in the $\mathfrak{sl}(4)$ Hitchin system on $C_{0,4}$ with three regular nilpotent and one non-regular nilpotent residue}\label{constraint_table}
\end{table}

Replacing one of the residues with a non-regular nilpotent changes the bundle of Hitchin bases \eqref{globalallregular}. For example, if one of the residues is in the orbit $[2,1^2]$ and the other three remain regular, the space of $\phi_4$s is the kernel of a map
\begin{displaymath}
4\mathcal{O}_{\mathbb{P}^1}(1)\oplus \mathcal{O}_{\mathbb{P}^1}(2) \to \mathcal{O}_{\mathbb{P}^1}(1)\oplus \mathcal{O}_{\mathbb{P}^1}(2)
\end{displaymath}
where the image of $u_{4;C}$ is nonzero. Hence the kernel is isomorphic to $3\mathcal{O}_{\mathbb{P}^1}(1)$, as follows from the analysis of \S\ref{directimages} (see the entry for $(n_1,n_2,n_3,n_4)=(3,1,1,1)$ in the $k=4$ subtable of table \ref{directimagetable}). Similarly, the space of $\phi_3$s is $2\mathcal{O}_{\mathbb{P}^1}(1)$. Assembling all the pieces together, we get 
\begin{equation}
\label{globalalloneminimal}
\mathcal{B}=\overset{k=2}{\overbrace{\mathcal{O}(1)}}\oplus \overset{k=3}{\overbrace{2\mathcal{O}(1)}}\oplus\overset{k=4}{\overbrace{3\mathcal{O}(1)}}=6\mathcal{O}(1)\quad.
\end{equation}
In replacing one of the regular nilpotents by $[2,1^2]$,  we had to impose three linear constraints from  table \ref{constraint_table}. This reduced the dimension of the Hitchin base, $\mathcal{B}$, from 9 to 6. And indeed \eqref{globalalloneminimal} is a rank-6 sub-bundle of \eqref{globalallregular}.

Proceeding as we did in the case with four regular nilpotents, we would now like to study the behaviour of the spectral curves when the base curve $C$ develops a node while allowing for some of the residues to be non-regular. We will see that the specialization of the spectral curve, implied by imposing the constraints of table \ref{constraint_table}, \emph{changes} its behaviour when the base curve, $C$, degenerates.

The first type of change is a reduction in the number of center parameters (that is, the residue of $\Phi$ at the node is no longer a generic semisimple) - rather than forming the Casimirs of $SU(N)$, it will turn out that they form the Casimirs of some simple subgroup $H\subset SU(N)$. We will give a mathematical proof of this claim in \S\ref{classifying}. This is also guaranteed by certain physics considerations which we recall in \S\ref{flavour}.

A second type of change that could occur is that when the center parameters are set to zero, the $Res(\Phi)$ at the node need not be in the regular nilpotent orbit. When we set the center parameters to zero in the standard node \eqref{master_curve}, the coefficients of $w^{N-k}$ for $k=3,\dots,N$ vanish linearly at the node ($x-y=z=0$). So, we conclude that the orbit $O$ at the node is $[N]$, the regular nilpotent. For $O \neq [N]$, some of these coefficients vanish to higher order. When the orbit at the node is non-regular, the center parameters then live in the closure of the sheet\footnote{We refer the reader to \cite{Balasubramanian:2018pbp} for an introduction to sheets in complex Lie algebras and further background references.} that contains the orbit $O$ at its boundary.

In \S\ref{classifying}, we prove that the vanishing orders uniquely determine such a $O$ and also characterize the nilpotent orbits that could occur in this way. To capture these two phenomena, we will label a restricted node by the pair $(O,H)$. Even though our proofs appear in \S\ref{classifying}, to simplify the presentation, we have adopted the notation $(O,H)$ to label restricted nodes through out the paper. In this notation, the standard node would be $([N],SU(N))$.

\subsection{Examples}\label{examples}

Let us illustrate the above considerations with some examples. For brevity, we'll focus on the behaviour near the $\lambda=1$ degeneration of $C$.

\hypertarget{example_1}{}\subsubsection*{Example 1}\label{example_1}

\begin{center}
\begin{tikzpicture}
\begin{scope}[scale=.75]
\node[scale=1.2] at (-.8,1){$[3,1]_{(z_1)}$};
\node[scale=1.2] at (-.8,-1){$[2,1^2]_{(z_2)}$};
\node[scale=1.2] at (5,1){$[4]_{(z_3)}$};
\node[scale=1.2] at (5,-1){$[4]_{(z_4)}$};
\draw[thick] (0,0.0) circle [radius=2.5];
\draw[thick] (5,0.0) circle [radius=2.5];
\end{scope}
\end{tikzpicture}
\end{center}

The orders of the  zeroes of the $\phi_k$ for this example are:

\begin{center}
\begin{tabular}{l|l|l|l|l}
$E_i$&$O_H$&$\phi_2$&$\phi_3$&$\phi_4$\\
\hline 
$E_1$&$[3,1]$&1&1&2\\
$E_2$&$[2,1^2]$&1&2&3\\
$E_3$&$[4]$&1&1&1\\
$E_4$&$[4]$&1&1&1\\
\end{tabular}
\end{center}
Using the results from table \ref{constraint_table}, for placing $[3,1]$ at $E_1$ and $[2,1^2]$ at $E_2$, we get the constraints
$$
u_{3;2}=0,\quad u_{4;L,L}=0,\quad u_{4;L,R}=0,\quad u_{4;C}= (\lambda-1) u_{4;R,R}
$$
Plugging these into \eqref{master_curve}, we get the following family of spectral curves in this example (dropping a term proportional to $u_{4;R,R}C_\lambda$):

\begin{equation}
w^4 - x(y-z)[u_{2;C} w^2 + (u_{3;C}x + u_{3;R}z)w + (u_{4;R,L}z(z-x)+ u_{4;R,R}x z)] = 0
\end{equation}
From this, we deduce the graded base dimensions $b_k^{L,C,R}$ for $k = 2,3,4$ :
\begin{equation}
\begin{split}
 b_k^L &= \{0,0,0 \} \\
 b_k^C &= \{1,1,0 \}  \\
 b_k^R &= \{0,1,2 \}
 \end{split}
 \end{equation}
With the center parameters turned on, the spectral curve still has 4 nodes covering the node on $C$. But, rather than being free parameters (controlled by the $u_{k;C}$), the location of one of the nodes is fixed to $w=0$. Setting the center parameters to zero, the Hitchin integrable system on the 3-punctured sphere on the right is unchanged; it governs the Coulomb branch of the $T_4$ theory, as above. But, on the left, the symplectic integrable system is just a point. The physical theory on $C_L$ is ``ugly'': consisting of 6 free hypermultiplets, transforming as 2 copies of the defining representation of $SU(3)$.

The gauge group has been reduced from $G=SU(4)$ to $H=SU(3)$ and the center parameters are the Casimirs of $H$. The restricted node is thus $([4],SU(3))$.

\hypertarget{example_2}{}\subsubsection*{Example 2}\label{example_2}

\begin{center}
\begin{tikzpicture}
\begin{scope}[scale=.75]
\node[scale=1.2] at (-1,1){$[2^2]_{(z_1)}$};
\node[scale=1.2] at (-1,-1){$[2^2]_{(z_2)}$};
\node[scale=1.2] at (5,1){$[4]_{(z_3)}$};
\node[scale=1.2] at (5,-1){$[4]_{(z_4)}$};
\draw[thick] (0,0.0) circle [radius=2.5];
\draw[thick] (5,0.0) circle [radius=2.5];
\end{scope}
\end{tikzpicture}
\end{center}

The orders of the zeroes of the $\phi_k$ are:

\begin{center}
\begin{tabular}{l|l|l|l|l}
$E_i$&$O_H$&$\phi_2$&$\phi_3$&$\phi_4$\\
\hline 
$E_1$&$[2^2]$&1&2&2\\
$E_2$&$[2^2]$&1&2&2\\
$E_3$&$[4]$&1&1&1\\
$E_4$&$[4]$&1&1&1\\
\end{tabular}
\end{center}
Using the results from table \ref{constraint_table}, for placing $[2^2]$ at $E_1$ and at $E_2$, we get the constraints
$$
u_{3;1}=u_{3;2}=0,\quad u_{4;L,L}=u_{4;L,R}=0
$$
Plugging these into \eqref{master_curve}, we get the spectral curve:

\begin{equation}
w^4 - x(y-z) [u_{2;C}w^2 + u_{3;R}z w + (u_{4;C}x(y-z) + u_{4;R,L}z(z-x) + u_{4;R,R}z y)] = 0
\end{equation}
So the graded base dimensions $b_k^{L,C,R}$ for $k = 2,3,4$ are:
\begin{equation}
\begin{split}
b_k^L &= \{0,0,0 \} \\
b_k^C &= \{1,0,1 \} \\
b_k^R &= \{0,1,2 \}
\end{split}
\end{equation}
%\begin{itemize}%
%\item $b_k^L = \{0,0,0 \}$
%\item $b_k^C = \{1,0,1 \}$
%\item $b_k^R = \{0,1,2 \}$
%\end{itemize}
Again with the center parameters turned on, there are 4 nodes on the spectral curve covering the node on $C$. This time, they are symmetrically-distributed about $w=0$.

The Hitchin system on the right remains that of the $T_4$ theory, while on the left, it is a point. The theory on the left is ``ugly'': 8 free hypermultiplets, transforming as 2 copies of the 4-dimensional defining representation of $Sp(2)$.

The center parameters are the Casimirs of the gauge group, $H=Sp(2)$ and we label the restricted node as $([4],Sp(2))$.

\hypertarget{example_3}{}\subsubsection*{Example 3}\label{example_3}

\begin{center}
\begin{tikzpicture}
\begin{scope}[scale=.75]
\node[scale=1.2] at (-.8,1){$[3,1]_{(z_1)}$};
\node[scale=1.2] at (-.7,-1){$[2,1^2]_{(z_2)}$};
\node[scale=1.2] at (5,1){$[2^2]_{(z_3)}$};
\node[scale=1.2] at (5,-1){$[2^2]_{(z_4)}$};
\draw[thick] (0,0.0) circle [radius=2.5];
\draw[thick] (5,0.0) circle [radius=2.5];
\end{scope}
\end{tikzpicture}
\end{center}
This example is, in a sense, a combination of Examples \hyperlink{example_1}{1} and \hyperlink{example_2}{2}.

The orders of the zeroes of the $\phi_k$ are:

\begin{center}
\begin{tabular}{l|l|l|l|l}
$E_i$&$O_H$&$\phi_2$&$\phi_3$&$\phi_4$\\
\hline 
$E_1$&$[3,1]$&1&1&2\\
$E_2$&$[2,1^2]$&1&2&3\\
$E_3$&$[2^2]$&1&2&2\\
$E_4$&$[2^2]$&1&2&2\\
\end{tabular}
\end{center}

The punctures on the left component of $C$ impose the constraints
$$
u_{3;2}=0,\quad u_{4;L,L}=0,\quad u_{4;L,R}=0,\quad u_{4;C}= (\lambda-1) u_{4;R,R}
$$
while the punctures on the right component of $C$ impose the constraints
$$
u_{3;3}=u_{3;1}+u_{3;2}=0,\quad u_{4;R,L}=u_{4;R,R}=0
$$
Putting these together, we get the spectral curve

\begin{equation}\label{su2nf4red}
w^2[ w^2- u_{2;C} x(y-z)]  = 0
\end{equation}
There's just one center parameter, corresponding to $H= SU(2) = SU(3)\cap Sp(2)$. Turning it off, the Hitchin integrable systems on both the left and the right are trivial. The restricted node is thus $([4],SU(2))$.

Note that this theory is \emph{globally} an ugly one: the 14 hypermultiplets (6 from the left and 8 from the right) transform as 4 copies of the defining representation of $SU(2)$ and 6 copies of the trivial representation. That is, 6 hypermultiplets remain free, \emph{everywhere} on $\overline{\mathcal{M}}_{0,4}$. In addition to the free hypermultiplets, we have the Hitchin integrable system for $SU(2)$ with $N_f=4$, whose spectral curve is the component of \eqref{su2nf4red} in square brackets.

\hypertarget{example_4}{}\subsubsection*{Example 4}\label{example_4}

\begin{center}
\begin{tikzpicture}
\begin{scope}[scale=.75]
\node[scale=1.2] at (-.8,1){$[2^2]_{(z_1)}$};
\node[scale=1.2] at (-.6,-1){$[2,1^2]_{(z_2)}$};
\node[scale=1.2] at (5,1){$[4]_{(z_3)}$};
\node[scale=1.2] at (5,-1){$[4]_{(z_4)}$};
\draw[thick] (0,0.0) circle [radius=2.5];
\draw[thick] (5,0.0) circle [radius=2.5];
\end{scope}
\end{tikzpicture}
\end{center}

The orders of the zeroes of the $\phi_k$ are:

\begin{center}
\begin{tabular}{l|l|l|l|l}
$E_i$&$O_H$&$\phi_2$&$\phi_3$&$\phi_4$\\
\hline 
$E_1$&$[2^2]$&1&2&2\\
$E_2$&$[2,1^2]$&1&2&3\\
$E_3$&$[4]$&1&1&1\\
$E_4$&$[4]$&1&1&1\\
\end{tabular}
\end{center}

Using the constraints from table \ref{constraint_table}, the spectral curve (again dropping a term proportional to $u_{4;R,R}C_\lambda$) is

\begin{displaymath}
0=w^4 -x(y-z)[u_{2;C}w^2 + u_{3;R}z w+(u_{4;R,L}z(z-x)+u_{4;R,R} z x)]
\end{displaymath}
and graded base dimensions $b_k^{L,C,R}$ for $k = 2,3,4$ :
\begin{equation}
\begin{split}
b_k^L &= \{0,0,0 \} \\
b_k^C &= \{1,0,0 \} \\
b_k^R &= \{0,1,2 \}
\end{split}
\end{equation}
%\begin{itemize}%
%\item $b_k^L = \{0,0,0 \}$
%\item $b_k^C = \{1,0,0 \}$
%\item $b_k^R = \{0,1,2 \}$
%\end{itemize}
The singularity of the spectral curve, covering the node on $C$ is different from that  in \hyperlink{example_3}{Example 3}, but the restricted node is again $([4],SU(2))$. The constraint on $H$ is coming entirely from the left.

Turning off the center parameter, the integrable system on the right component is the same 3-dimensional Hitchin integrable system as in Examples \hyperlink{example_1}{1} and \hyperlink{example_2}{2}. On the left, we have the trivial theory, with no degrees of freedom.

\hypertarget{example_5}{}\subsubsection*{Example 5}\label{example_5}

\begin{center}
\begin{tikzpicture}

\begin{scope}[scale=.75]
\node[scale=1.2] at (-.6,1){$[2,1^2]_{(z_1)}$};
\node[scale=1.2] at (-.6,-1){$[2,1^2]_{(z_2)}$};
\node[scale=1.2] at (5,1){$[4]_{(z_3)}$};
\node[scale=1.2] at (5,-1){$[4]_{(z_4)}$};
\draw[thick] (0,0.0) circle [radius=2.5];
\draw[thick] (5,0.0) circle [radius=2.5];
\end{scope}
\end{tikzpicture}
\end{center}

The orders of the zeroes of the $\phi_k$ are:

\begin{center}
\begin{tabular}{l|l|l|l|l}
$E_i$&$O_H$&$\phi_2$&$\phi_3$&$\phi_4$\\
\hline 
$E_1$&$[2,1^2]$&1&2&3\\
$E_2$&$[2,1^2]$&1&2&3\\
$E_3$&$[4]$&1&1&1\\
$E_4$&$[4]$&1&1&1\\
\end{tabular}
\end{center}

Using the constraints from table \ref{constraint_table}, the spectral curve (dropping a term proportional to $u_{4;R,R}C_\lambda$, and defining $u_{4;R,L}= u_{4;R,R}\equiv  u_{4;R}$) is

\begin{displaymath}
0=w^4 - x(y-z) [u_{2;C} w^2 + u_{3;R} z w + u_{4;R}z^2]
\end{displaymath}
The graded base dimensions $b_k^{L,C,R}$ for $k = 2,3,4$ :
\begin{equation}
\begin{split}
b_k^L &= \{0,0,0 \} \\
b_k^C &= \{1,0,0 \} \\
b_k^R &= \{0,1,1 \}
\end{split}
\end{equation}
%\begin{itemize}%
%\item $b_k^L = \{0,0,0 \}$
%\item $b_k^C = \{1,0,0 \}$
%\item $b_k^R = \{0,1,1 \}$
%\end{itemize}

Now the constraints imposed by the punctures on the left have forced a \emph{change} in the sub-integrable system on the right. It is no longer the 3-dimensional Hitchin system associated to the sphere with 3 regular Hitchin punctures. 

$\text{Res}(\Phi)_{z_3,z_4} \in [4]$ as before \emph{but} now $\text{Res}(\Phi)_{z'} \in [3,1]$, where $z'$ is the third puncture on $C_{0,3}^R$ (the right component in the normalization of the nodal curve). We can see this directly by looking at the spectral curve on the right component. With the center parameter turned off,
$$
0=w^4 - x(x-z) [u_{3;R} z w + u_{4;R}z^2]
$$
At two of the punctures ($x=0$ and $x=z$), the coefficients of $u_{3;R}$ and $u_{4;R}$ vanish to linear order, as expected for the Hitchin nilpotent, $[4]$. At the node ($z=0$), the coefficient of $u_{3;R}$ vanishes to linear order, but  the coefficient of $u_{4;R}$ vanishes to  quadratic order. This is the behaviour at the Hitchin nilpotent $[3,1]$.  On the left, we have the ugly theory, consisting of two free hypermultiplets. The Hitchin sub-integrable system on the right is the one associated to the 3-punctured sphere with Hitchin nilpotents $[3,1]$, $[4]$ and $[4]$. It governs the Coulomb branch geometry of the SCFT named $R_{0,4}$ in \cite{Chacaltana:2010ks}.

The restricted node is thus $([3,1],SU(2))$.

\hypertarget{example_6}{}\subsubsection*{Example 6}\label{example_6}

\begin{center}
\begin{tikzpicture}
\begin{scope}[scale=.750]
\node[scale=1.2] at (-.6,1){$[3,1^2]_{(z_1)}$};
\node[scale=1.2] at (-.6,-1){$[3,1^2]_{(z_2)}$};
\node[scale=1.2] at (5,1){$[5]_{(z_3)}$};
\node[scale=1.2] at (5,-1){$[5]_{(z_4)}$};
\draw[thick] (0,0.0) circle [radius=2.5];
\draw[thick] (5,0.0) circle [radius=2.5];
\end{scope}
\end{tikzpicture}
\end{center}

So far, all of our examples of restricted nodes have been accompanied by a \emph{trivial} sub-integrable system on one (or both) of the components of the nodal curve. This need not be the case, but the first nontrivial example occur in $\mathfrak{sl}(5)$.

Here is the table of zero orders of $\phi_k$ for this example

\begin{center}
\begin{tabular}{l|l|l|l|l|l}
$E_i$&$O_H$&$\phi_2$&$\phi_3$&$\phi_4$&$\phi_5$\\
\hline 
$E_1$&$[3,1^2]$&1&1&2&3\\
$E_2$&$[3,1^2]$&1&1&2&3\\
$E_3$&$[5]$&1&1&1&1\\
$E_4$&$[5]$&1&1&1&1\\
\end{tabular}
\end{center}

The spectral curve is

\begin{equation}
\begin{split}
0 = w^5 - &x(y-z) [u_{2;C}w^3 + (u_{3;1}x+u_{3;2}y+u_{3;3}z )w^2 \\
&+ (u_{4;C}x(y-z) + u_{4;R,L}z(z-x) + u_{4;R,R}z y)w\\
& + u_{5;R,R}y z^2+ u_{5;R,L}(z-x) z^2 + u_{5;R,C}x(y-z)z ]
\end{split}
\end{equation}

The graded dimensions $b_k^{L,C,R}$ for $k=2,3,4,5$ for the base  are
\begin{equation}
\begin{split}
b_k^L &= \{0,1,0,0 \} \\
b_k^C &= \{1,1,1,0 \} \\
b_k^R &= \{0,1,2,3\}
\end{split}
\end{equation}
%\begin{itemize}%
%\item $b_k^L = (0,1,0,0)$
%\item $b_k^C = (1,1,1,0)$
%\item $b_k^R = (0,1,2,3)$
%\end{itemize}

Turning off the center parameters, the sub-integrable system on the right is the 6-dimensional Hitchin system corresponding to the three-punctured sphere with three regular punctures, $[5]$. On the left, the spectral curve is
$$
0=w\cdot\left[w^4-xy\left(u_{2;C}w^2+(u_{3;L}(x-y)+u_{3;C}y)w +u_{4;C}xy\right)\right]
$$
which, upon turning off the center parameters, becomes
\begin{equation}\label{MLsl5}
0=w^2\cdot\left[w^3-u_{3;L}xy(x-y)\right]
\end{equation}

The restricted node is $([5],SU(4))$. Physically, the neighbourhood of $\lambda=1$ is described as a weakly-coupled $\mathcal{N}=2$ $SU(4)$ gauge theory, gauging a diagonal $SU(4)$ subgroup of the $E_6$ global symmetry of the Minahan-Nemeschansky theory and one of the $SU(5)$s in the $SU(5)^3$ global symmetry of the $T_5$ theory.

The factor in square brackets of (\ref{MLsl5}) is the spectral curve of the one-dimensional integrable system governing the Coulomb branch of the $E_6$ Minahan-Nemeschansky theory \cite{Minahan:1996cj}. This integrable system \emph{does} have (more than one) realization as a Hitchin integrable system. For instance, it can be realized as the 3-punctured sphere with 3 regular nilpotents, $[3]$ of $\mathfrak{sl}(3)$. In the present case, we are seeing it appear as a (limit of the) $SL_5$ Hitchin system. There is, however, a crucial difference. Unlike in its $SL_3$ realization, there is no semistable $SL_5$ Higgs bundle moduli space on the 3-punctured sphere with \eqref{MLsl5} as its spectral curve. Any $SL_5$ Higgs bundle on $C_{0,3}$ with residues in $([3,1^2],[3,1^2],[5])$ is necessarily unstable (see \S\ref{stability} and Appendix \ref{OKappendix}). 

%But it does \emph{not} arise as a 3-punctured sphere with two punctures, $[3,1^2]$, $[3,1^2]$ and any third puncture in $\mathfrak{sl}(5)$.

\hypertarget{example_7}{}\subsubsection*{Example 7}\label{example_7}

Finally, let us close this section with an example that combines the features of examples \hyperlink{example_5}{5} and \hyperlink{example_6}{6}: the sub-integrable systems on the left and right are both nontrivial \emph{and} the restricted node is not the regular nilpotent.

Consider the 4-punctured sphere
\begin{center}
\begin{tikzpicture}
\begin{scope}[scale=.75]
\node[scale=1.2] at (-.6,1){$[3,1^3]_{(z_1)}$};
\node[scale=1.2] at (-.6,-1){$[3,1^3]_{(z_2)}$};
\node[scale=1.2] at (5,1){$[6]_{(z_3)}$};
\node[scale=1.2] at (5,-1){$[6]_{(z_4)}$};
\draw[thick] (0,0.0) circle [radius=2.5];
\draw[thick] (5,0.0) circle [radius=2.5];
\end{scope}
\end{tikzpicture}
\end{center}
for $\mathfrak{sl}(6)$.

The zero orders of $\phi_k$ are

\begin{center}
\begin{tabular}{l|l|l|l|l|l|l}
$E_i$&$O_H$&$\phi_2$&$\phi_3$&$\phi_4$&$\phi_5$&$\phi_6$\\
\hline 
$E_1$&$[3,1^3]$&1&1&2&3&4\\
$E_2$&$[3,1^3]$&1&1&2&3&4\\
$E_3$&$[6]$&1&1&1&1&1\\
$E_4$&$[6]$&1&1&1&1&1\\
\end{tabular}
\end{center}
The spectral curve is
\begin{equation}
\begin{split}
0 = w^6 - &x(y-z) [u_{2;C}w^4 + (u_{3;1}x+u_{3;2}y+u_{3;3}z )w^3 \\
&+ (u_{4;C}x(y-z) + u_{4;R,L}z(z-x) + u_{4;R,R}z y)w^2\\
& + (u_{5;R,R}y z+ u_{5;R,L}(z-x) z + u_{5;R,C}x(y-z)) z w \\
& + (u_{6;R,R}y z+ u_{6;R,L}(z-x) z + u_{6;R,C}x(y-z)) z^2
 ]
\end{split}
\end{equation}
which yields the graded dimensions $b_k^{L,C,R}$ for $k=2,3,4,5,6$ for the Hitchin base :
\begin{equation}
\begin{split}
b_k^L &= \{0,1,0,0,0 \} \\
b_k^C &= \{1,1,1,0,0 \} \\
b_k^R &= \{0,1,2,3,3\}
\end{split}
\end{equation}
%\begin{itemize}%
%\item $b_k^L = (0,1,0,0,0)$
%\item $b_k^C = (1,1,1,0,0)$
%\item $b_k^R = (0,1,2,3,3)$
%\end{itemize}
On the left ($z=0$), we get
$$
0= w^2\cdot \bigl[w^4-xy\bigl(u_{2;C}w^2+ (u_{3;C}y + u_{3;L}(x-y))w + u_{4;C}xy\bigr)\bigr]
$$
which, upon turning off the center parameters becomes
$$
0=w^3\cdot\left[w^3-u_{3;L}xy(x-y)\right]
$$
We recognize, again, the irreducible component in square brackets as the spectral curve of the $E_6$ Minahan-Nemeschansky SCFT. Again, this SCFT is not obtained as the Hitchin system on a 3-punctured sphere, with punctures $([3,1^3], [3,1^3], X)$ for any choice of nilpotent $X$.

On the right ($x-y=0$), the spectral curve is
\begin{equation*}
\begin{split}
0 = w^6 - &x(x-z) [u_{2;C}w^4 + (u_{3;C}y+u_{3;R}z )w^3 \\
&+ (u_{4;C}x(x-z) + u_{4;R,L}z(z-x) + u_{4;R,R}z y) w^2\\
& + (u_{5;R,R}x z+ u_{5;R,L}(z-x) z + u_{5;R,C}x(x-z)) z w \\
& + (u_{6;R,R}x z+ u_{6;R,L}(z-x) z + u_{6;R,C}x(x-z)) z^2
 ]
\end{split}
\end{equation*}
Setting the center parameters to zero, this becomes
\begin{equation*}
\begin{split}
0 = w^6 - &x(x-z) [ u_{3;R}z w^3 \\
&+ (u_{4;R,L}z(z-x)+ u_{4;R,R}xz)w^2\\
& + (u_{5;R,R}x z+ u_{5;R,L}(z-x) z + u_{5;R,C}x(x-z)) z w \\
& + (u_{6;R,R}x z+ u_{6;R,L}(z-x) z + u_{6;R,C}x(x-z)) z^2
 ]
\end{split}
\end{equation*}
Here, we see that $\phi_6$ has a double zero at the node, rather than a simple zero, implying that $O=[5,1]$. The center parameters are the Casimirs of $H=SU(4)$. So the restricted node is $([5,1], SU(4))$.

We saw that the tame Hitchin system on $C_{0,4}$ may have a standard node or a restricted node depending on the residues of the Higgs field at each of those punctures. So, it is natural ask what are the general conditions under which restricted nodes could occur and how does one characterize or classify restricted nodes. We now take up these questions in a systematic way in \S\ref{Hitchinnodal}, \S\ref{flavour} and \S\ref{classifying}.

\section{Tame Hitchin Systems on Nodal Curves}
\label{Hitchinnodal}

In \S\ref{4_standard_and_restricted_nodes}, we found a family of Hitchin integrable systems, with base $\mathcal{B}\to \mathcal{M}_{0,4}$, which extended as a flat family to the boundary of the moduli space where $C$ develops a node. Over the boundary, we found symplectic sub-integrable systems, with bases $B_L\oplus B_R\hookrightarrow B$. 

We would like to extend this story to $\overline{\mathcal{M}}_{g,n}$. Let $\tilde{C}$ be the normalization of the nodal curve $C$. The complex structure moduli space of $\tilde{C}$ is a component of the boundary of $\mathcal{M}_{g,n}$. More specifically, there are two qualitatively different cases,
\begin{equation}
\mathcal{M}_{\tilde{C}}=
\begin{cases}[1.5]
\mathcal{M}_{g_L,n_L+1}\times \mathcal{M}_{g_R,n_R+1}&\text{where}\; g_L+g_R=g\;\text{and}\; n_L+n_R=n\\
\mathcal{M}_{g-1,n+2}
\end{cases}
\end{equation}
In each case, this a codimension-1 divisor in $\overline{\mathcal{M}}_{g,n}$. The former is called a ``separating node''; the latter is called a ``non-separating node.''\footnote{In \S\ref{further_degen}, we will be interested in higher-codimension components of the boundary of $\mathcal{M}_{g,n}$, where $C$ has multiple nodes. It will still make sense to ask whether normalizing a given node splits the curve into disconnected pieces. This will \emph{always} be the case when $g=0$. That is, \emph{all} of the irreducible components of the boundary divisor of $\overline{\mathcal{M}}_{0,n}$ are of the form $\overline{\mathcal{M}}_{0,n_1+1}\times \overline{\mathcal{M}}_{0,n_2+1}$, with $n_1+n_2=n$.
}

Our aim, in this section is to sketch the construction of a family of Hitchin integrable systems with base $\mathcal{B}\to \overline{\mathcal{M}}_{g,n}$ and a symplectic sub-integrable system with base $\tilde{\mathcal{B}}\hookrightarrow \mathcal{B}\vert_{\mathcal{M}_{\tilde{C}}}$. We further want to exhibit, in the case of a separating node,  a decomposition $\tilde{\mathcal{B}}= \mathcal{B}_L\oplus \mathcal{B}_R$ where $\mathcal{B}_L\to \overline{\mathcal{M}}_{g_L,n_L+1}$ and $\mathcal{B}_R\to \overline{\mathcal{M}}_{g_R,n_R+1}$

In most cases, the resulting sub-integrable system will, again, be a semistable $J$-Hitchin system on $\tilde{C}$. The exception will echo what we saw in \S\ref{4_standard_and_restricted_nodes} in the case of $\overline{\mathcal{M}}_{0,4}$: when the node is restricted --- something we will see happens only in the case of a separating node when one or both of the components has genus-zero --- that  genus-zero component will yield a complex integrable system which is \emph{not} a semistable $J$-Hitchin system.

\subsection{Hitchin system on a nodal curve}\label{23_hitchin_system_on_a_gorenstein_curve}

Replace the smooth base curve $C$ by a Gorenstein curve: roughly, it can be singular, as long as there is still a good canonical line bundle (also called the dualizing sheaf) $K_C$. Any curve that is a divisor in a smooth surface will do. The canonical line bundle is given by the adjunction formula. This includes any curve whose only singularities are nodes. On a nodal curve, the sections of the canonical bundle are 1-forms on the normalization with first order poles allowed at the (inverse images of) the nodes, with opposite residues at the two inverse images of each node.

As in the smooth case, the Hitchin system for $C$ and a reductive group $G$ is the space $Higgs$ of (isomorphism classes of) $K_C$-valued $G$-Higgs bundles on $C$. A $G$-Higgs bundle is a pair $(V,\Phi)$ where $V$ is a principal $G$-bundle on $C$ and $\Phi \in H^0(C, ad(V) \otimes K_C)$. For now we will focus on the case $G=GL(N)$, so $V$ is a vector bundle and $\Phi: V \to V \otimes K_C$; or $G=SL(N)$, where $det(V)$ is required to be $O_C$ and the trace of $\Phi$ is required to vanish . As in the smooth case, one can consider a GIT version where the Higgs bundles are subject to a stability condition; or one can allow all Higgs bundles and work with the resulting stack.

Also as in the smooth case, the spectral curve of $(V,\Phi)$ is the curve in the total space of $K_C$ defined by the vanishing of the characteristic polynomial of the endomorphism $\Phi$. The Hitchin base $B$ is defined to be the space of all spectral curves. This can be identified with the vector space:

\begin{equation}
B \coloneqq \bigoplus_{k} H^0(C, K_C^{\otimes k})
\end{equation}
where $k$ runs over the degrees of the $G$-invariants. For $G=SL(N)$, these degrees are $k=2,\dots,N$. The Hitchin map $h: Higgs \to B$ sends $(V,\Phi)$ to (the coefficients of) the characteristic polynomial of $\Phi$.

When the Higgs field has poles along a divisor $D$ consisting of distinct smooth points of $C$, one can again define a Hitchin map whose image is now given by

\begin{equation}
B \coloneqq \bigoplus_{k} H^0(C, (K_C(D))^{\otimes k})
\end{equation}

In the conformal case, the residue of $\Phi$ at each of the marked points, $p_i$, must lie in some specified nilpotent orbit, $O_i$. Correspondingly, 
\begin{equation}\label{naive_Hitchin_base}
B \coloneqq \bigoplus_{k} H^0(C, L_k)
\end{equation}
where
\begin{equation}\label{Lkdef}
L_k = \bigl(K_C(D)\bigr)^{\otimes k} \otimes \mathcal{O}\bigl(-\sum_{p_i\in D} \chi_k^{(i)} p_i\bigr)
\end{equation}

The line bundles $L_k$ over each fiber fit together to form a line bundle $\mathcal{L}_k$ over the universal curve $\pi\colon \overline{\mathcal{C}}\to \overline{\mathcal{M}}_{g,n}$. The Hitchin bases \eqref{naive_Hitchin_base} fit together to form a torsion-free sheaf
\begin{equation}\label{HitchinDirectNaive}
\mathcal{B} =\bigoplus_{k=2}^N \pi_* \mathcal{L}_k
\end{equation}
For $\overline{\mathcal{M}}_{0,4}$, we computed these direct images rather explicitly in  \S\ref{directimages}. 

By Riemann-Roch\footnote{For $g>1$ and no marked points, $\mathcal{L}_k=K_C^{\otimes k}$, which has vanishing $H^1$. Adding marked points increases  $\deg(\mathcal{L}_k)$ and again $H^1=0$. For $g=1$, stability requires at least one marked point. This ensures $\deg(\mathcal{L}_k)>0$ and hence $H^1=0$. It is only for $g=0$ that a nonzero $H^1$ is possible and we just impose by hand that $\deg(\mathcal{L}_k) \geq -1$. }, the graded dimensions of $B$, when $C$ is smooth, are
\begin{equation}\label{gradedBdims}
b_k= (g-1)(2k-1)+\sum_{p_i\in D}(k- \chi_k^{(i)})
\end{equation}
These are necessarily non-negative for $g>0$. For $g=0$, we assume that the $\{O_i\}$ are such that the $b_k$ are non-negative for each $k$ (i.e.~that the SCFT is ``OK''). For a \emph{stable} nodal curve, the same holds true \emph{except} when $C$ is reducible and one (or more) of the components is genus-0. In that case, $H^0(C,\mathcal{L}_k)$ and $H^1(C,\mathcal{L}_k)$ can jump in dimension (though the difference remains constant). In that case, as we shall see in \S\ref{hitchinreducible}, the definition of the Hitchin base \eqref{naive_Hitchin_base} will need to be modified to \eqref{HitchinBaseTrueDef}, so that its graded dimensions are still given by \eqref{gradedBdims}. Globally, this will mean modifying \eqref{HitchinDirectNaive} to
\begin{equation}\label{HitchinDirect}
\mathcal{B} =\bigoplus_{k=2}^N \pi_* \mathcal{L}'_k
\end{equation}
so that $B$ is actually a vector bundle over $ \overline{\mathcal{M}}_{g,n}$. The global definition of the $ \mathcal{L}'_k$, as line bundles over the universal curve will be given in \S\ref{globalstory}. We will first work out what they have to be, fiber-by-fiber, starting with curves with a single node, and progressing to more singular nodal curves in \S\ref{further_degen}.

We now proceed to study the behaviour at the codimension-1 boundaries of $\mathcal{M}_{g,n}$, i.e.~where $C$ is smooth apart from a single node. We will see that the behaviour of the Hitchin system in the nodal limit is that dictated by the ``standard node'' described in \S\ref{4_standard_and_restricted_nodes}, \emph{except} when $C$ has two components, one (or both) of which is genus-$0$. When this is the case, and when a certain condition on the collection of marked points on that component is satisfied, we obtain a restricted node, $(O,H)$. As will be clear from the analysis, the same conclusion applies if we further degenerate surface. We obtain a restricted node only in the case where one (or both) side(s) of the node consists of a tree of  $\mathbb{P}^1$s (with the same condition on the marked points on that side).

\subsection{Hitchin system on a reducible nodal curve}
\label{hitchinreducible}
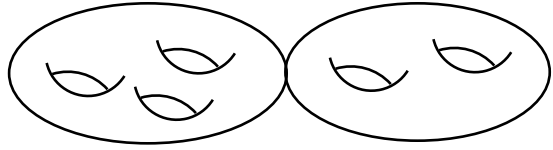
\begin{figure}
\begin{center}
% Graphic for TeX using PGF
% Title: /home/aswin/Work/Nodal/figs/reducible.dia
% Creator: Dia v0.97+git
% CreationDate: Tue Jan  7 00:02:03 2020
% For: aswin
% \usepackage{tikz}
% The following commands are not supported in PSTricks at present
% We define them conditionally, so when they are implemented,
% this pgf file will use them.
\ifx\du\undefined
  \newlength{\du}
\fi
\setlength{\du}{10\unitlength}
\begin{tikzpicture}[even odd rule]
\pgftransformxscale{1.000000}
\pgftransformyscale{-1.000000}
\definecolor{dialinecolor}{rgb}{0.000000, 0.000000, 0.000000}
\pgfsetstrokecolor{dialinecolor}
\pgfsetstrokeopacity{1.000000}
\definecolor{diafillcolor}{rgb}{1.000000, 1.000000, 1.000000}
\pgfsetfillcolor{diafillcolor}
\pgfsetfillopacity{1.000000}
\pgfsetlinewidth{0.100000\du}
\pgfsetdash{}{0pt}
\definecolor{diafillcolor}{rgb}{1.000000, 1.000000, 1.000000}
\pgfsetfillcolor{diafillcolor}
\pgfsetfillopacity{1.000000}
\pgfpathellipse{\pgfpoint{30.600000\du}{15.750000\du}}{\pgfpoint{5.250000\du}{0\du}}{\pgfpoint{0\du}{2.650000\du}}
\pgfusepath{fill}
\definecolor{dialinecolor}{rgb}{0.000000, 0.000000, 0.000000}
\pgfsetstrokecolor{dialinecolor}
\pgfsetstrokeopacity{1.000000}
\pgfpathellipse{\pgfpoint{30.600000\du}{15.750000\du}}{\pgfpoint{5.250000\du}{0\du}}{\pgfpoint{0\du}{2.650000\du}}
\pgfusepath{stroke}
\pgfsetlinewidth{0.100000\du}
\pgfsetdash{}{0pt}
\definecolor{diafillcolor}{rgb}{1.000000, 1.000000, 1.000000}
\pgfsetfillcolor{diafillcolor}
\pgfsetfillopacity{1.000000}
\pgfpathellipse{\pgfpoint{42.900000\du}{15.700000\du}}{\pgfpoint{5.000000\du}{0\du}}{\pgfpoint{0\du}{2.600000\du}}
\pgfusepath{fill}
\definecolor{dialinecolor}{rgb}{0.000000, 0.000000, 0.000000}
\pgfsetstrokecolor{dialinecolor}
\pgfsetstrokeopacity{1.000000}
\pgfpathellipse{\pgfpoint{40.800000\du}{15.700000\du}}{\pgfpoint{5.000000\du}{0\du}}{\pgfpoint{0\du}{2.600000\du}}
\pgfusepath{stroke}
\pgfsetlinewidth{0.100000\du}
\pgfsetdash{}{0pt}
\pgfsetbuttcap
\definecolor{dialinecolor}{rgb}{0.000000, 0.000000, 0.000000}
\pgfsetstrokecolor{dialinecolor}
\pgfsetstrokeopacity{1.000000}
\pgfpathmoveto{\pgfpoint{30.949978\du}{14.499911\du}}
\pgfpatharc{167}{32}{1.613125\du and 1.613125\du}
\pgfusepath{stroke}
\pgfsetlinewidth{0.100000\du}
\pgfsetdash{}{0pt}
\pgfsetbuttcap
\definecolor{dialinecolor}{rgb}{0.000000, 0.000000, 0.000000}
\pgfsetstrokecolor{dialinecolor}
\pgfsetstrokeopacity{1.000000}
\pgfpathmoveto{\pgfpoint{33.250017\du}{15.500019\du}}
\pgfpatharc{318}{253}{2.018356\du and 2.018356\du}
\pgfusepath{stroke}
\pgfsetlinewidth{0.100000\du}
\pgfsetdash{}{0pt}
\pgfsetbuttcap
\definecolor{dialinecolor}{rgb}{0.000000, 0.000000, 0.000000}
\pgfsetstrokecolor{dialinecolor}
\pgfsetstrokeopacity{1.000000}
\pgfpathmoveto{\pgfpoint{26.771947\du}{15.356879\du}}
\pgfpatharc{167}{32}{1.613125\du and 1.613125\du}
\pgfusepath{stroke}
\pgfsetlinewidth{0.100000\du}
\pgfsetdash{}{0pt}
\pgfsetbuttcap
\definecolor{dialinecolor}{rgb}{0.000000, 0.000000, 0.000000}
\pgfsetstrokecolor{dialinecolor}
\pgfsetstrokeopacity{1.000000}
\pgfpathmoveto{\pgfpoint{29.071985\du}{16.356987\du}}
\pgfpatharc{318}{253}{2.018356\du and 2.018356\du}
\pgfusepath{stroke}
\pgfsetlinewidth{0.100000\du}
\pgfsetdash{}{0pt}
\pgfsetbuttcap
\definecolor{dialinecolor}{rgb}{0.000000, 0.000000, 0.000000}
\pgfsetstrokecolor{dialinecolor}
\pgfsetstrokeopacity{1.000000}
\pgfpathmoveto{\pgfpoint{37.486947\du}{15.156879\du}}
\pgfpatharc{167}{32}{1.613125\du and 1.613125\du}
\pgfusepath{stroke}
\pgfsetlinewidth{0.100000\du}
\pgfsetdash{}{0pt}
\pgfsetbuttcap
\definecolor{dialinecolor}{rgb}{0.000000, 0.000000, 0.000000}
\pgfsetstrokecolor{dialinecolor}
\pgfsetstrokeopacity{1.000000}
\pgfpathmoveto{\pgfpoint{39.786985\du}{16.156987\du}}
\pgfpatharc{318}{253}{2.018356\du and 2.018356\du}
\pgfusepath{stroke}
\pgfsetlinewidth{0.100000\du}
\pgfsetdash{}{0pt}
\pgfsetbuttcap
\definecolor{dialinecolor}{rgb}{0.000000, 0.000000, 0.000000}
\pgfsetstrokecolor{dialinecolor}
\pgfsetstrokeopacity{1.000000}
\pgfpathmoveto{\pgfpoint{41.401947\du}{14.456879\du}}
\pgfpatharc{167}{32}{1.613125\du and 1.613125\du}
\pgfusepath{stroke}
\pgfsetlinewidth{0.100000\du}
\pgfsetdash{}{0pt}
\pgfsetbuttcap
\definecolor{dialinecolor}{rgb}{0.000000, 0.000000, 0.000000}
\pgfsetstrokecolor{dialinecolor}
\pgfsetstrokeopacity{1.000000}
\pgfpathmoveto{\pgfpoint{43.701985\du}{15.456987\du}}
\pgfpatharc{318}{253}{2.018356\du and 2.018356\du}
\pgfusepath{stroke}
\pgfsetlinewidth{0.100000\du}
\pgfsetdash{}{0pt}
\pgfsetbuttcap
\definecolor{dialinecolor}{rgb}{0.000000, 0.000000, 0.000000}
\pgfsetstrokecolor{dialinecolor}
\pgfsetstrokeopacity{1.000000}
\pgfpathmoveto{\pgfpoint{30.116947\du}{16.256879\du}}
\pgfpatharc{167}{32}{1.613125\du and 1.613125\du}
\pgfusepath{stroke}
\pgfsetlinewidth{0.100000\du}
\pgfsetdash{}{0pt}
\pgfsetbuttcap
\definecolor{dialinecolor}{rgb}{0.000000, 0.000000, 0.000000}
\pgfsetstrokecolor{dialinecolor}
\pgfsetstrokeopacity{1.000000}
\pgfpathmoveto{\pgfpoint{32.416985\du}{17.256987\du}}
\pgfpatharc{318}{253}{2.018356\du and 2.018356\du}
\pgfusepath{stroke}
\end{tikzpicture}
\end{center}
\caption{A Riemann surface $C$ of genus $g$ develops a separating node. The nodal curve has two components whose  genera $g_L,g_R$ obey $g=g_L+g_R$.}
\end{figure}

Let us first consider the case of a \emph{separating node}. The base curve $C_{g,n}$ of the Hitchin system has a single node at the point $p$ and is \emph{reducible}. The normalization $v\colon C_{L} \amalg C_{R} \rightarrow C_{g,n}$, where $C_L$ and $C_R$ have genus $g_{L,R}$ respectively, satisfying
$$
   g_{L}+g_{R}=g
$$
In this subsection, we will only consider degenerations where $C_L$ and $C_R$ are themselves smooth stable curves.

Let $\mathcal{C}_L$ and $\mathcal{C}_R$ be the corresponding divisors in $\overline{\mathcal{C}}$. Then $\mathcal{O}(-\mathcal{C}_L)$ and $\mathcal{O}(-\mathcal{C}_R)$ are line bundles on $\overline{\mathcal{C}}$. Restricted to each component of the fiber,
\begin{subequations}\label{globaltwist}
\begin{align}
\deg(\mathcal{O}_{C_L}(-\mathcal{C}_L))&=1\label{globaltwistlA}\\
\deg(\mathcal{O}_{C_R}(-\mathcal{C}_R))&=1\label{globaltwistlB}\\
\mathcal{O}_{C_L}(-\mathcal{C}_R))&= \mathcal{O}_{C_L}(-p)\label{globaltwistlC}\\
\mathcal{O}_{C_R}(-\mathcal{C}_L))&= \mathcal{O}_{C_R}(-p)\label{globaltwistlD}
\end{align}
\end{subequations}

Using the restriction maps to $C_L$ and $C_R$, we define the sheaves on $C$
\begin{equation}
\begin{split}
\mathcal{L}_{k,L}& = \ker(r_R: \mathcal{L}_k\to\mathcal{L}_k |_{C_R}) = \mathcal{L}_k\otimes \mathcal{O}_{C_L}(-p)\\
\mathcal{L}_{k,R}&= \ker(r_L:   \mathcal{L}_k\to\mathcal{L}_k |_{C_L}) = \mathcal{L}_k\otimes \mathcal{O}_{C_R}(-p)
\end{split}
\end{equation}
These fit into a short exact sequence
\begin{equation}\label{mySES}
0\to \mathcal{L}_{k,L}\oplus \mathcal{L}_{k,R}\to \mathcal{L}_k\to S_p\to 0
\end{equation}
where $S_p$ is a skyscraper sheaf supported at $p$.

First, let us assume that $H^1(C,\mathcal{L}_k)=0$. Then, since $H^0(C,S_p)=\mathbb{C}$, the long exact sequence associated to \eqref{mySES}
\begin{equation}
0\to H^0(C,\mathcal{L}_{k,L})\oplus H^0(C,\mathcal{L}_{k,R})\to H^0(C,\mathcal{L}_k)\xrightarrow{\alpha}H^0(S_p)\to H^1(C,\mathcal{L}_{k,L})\oplus H^1(C,\mathcal{L}_{k,R})\to 0
\end{equation}
splits, either as
\begin{subequations}
\begin{equation}\label{heresalpha}
\begin{gathered}
0\to H^0(C,\mathcal{L}_{k,L})\oplus H^0(C,\mathcal{L}_{k,R})\to H^0(C,\mathcal{L}_k)\xrightarrow{\alpha} \mathbb{C}\to 0\\
H^1(C,\mathcal{L}_{k,L})\oplus H^1(C,\mathcal{L}_{k,R})=0
\end{gathered}
\end{equation}
or as
\begin{equation}
\begin{gathered}
H^0(C,\mathcal{L}_k) = H^0(C,\mathcal{L}_{k,L})\oplus H^0(C,\mathcal{L}_{k,R})\\
H^1(C,\mathcal{L}_{k,L})\oplus H^1(C,\mathcal{L}_{k,R})=\mathbb{C}
\end{gathered}
\end{equation}
\end{subequations}
depending on whether the residue map $\alpha$ is nonzero.

If $H^1(C,\mathcal{L}_k\otimes\mathcal{O}_{C_L})\neq 0$, then it follows from the long exact sequence associated to
\begin{equation}
0\to \mathcal{L}_{k,R} \to \mathcal{L}_k\to \mathcal{L}_k\otimes\mathcal{O}_{C_L}\to 0
\end{equation}
that $H^1(C,\mathcal{L}_k)\neq 0$. Since this vanished on the smooth curve, we are in the situation where the cohomology groups of $\mathcal{L}_k$ jump in the nodal limit.

To fix this, we tensor with an appropriate power of the line bundles defined in \eqref{globaltwist}. Let $n_k^L$ be the smallest non-negative integer such that $H^1\bigl(C_L,\mathcal{L}_k\otimes \mathcal{O}_{C_L}(n_k^Lp)\bigr)=0$ and let $n_k^R$ be the smallest non-negative integer such that $H^1\bigl(C_R,\mathcal{L}_k\otimes \mathcal{O}_{C_R}(n_k^R p)\bigr)=0$. 

As we shall see in \S\ref{nilpnode} below,
\begin{subequations}
\begin{align}
n_k^L&=\max(0, -d_k^L-1)\label{nlkdef}\\
n_k^R&=\max(0, -d_k^R-1)\label{nrkdef}
\end{align}
 where $d_k^L$ and $d_k^R$ are defined in \eqref{virtdimLR}. Our "OK" assumption is that $H^1(C,\mathcal{L}_k)=0$ on the smooth curve. If that is the case, then at most one of $n_k^L$ and $n_k^R$ can be nonzero\footnote{The argument is used repeatedly in this section, so let us spell it out here. $n_k$ can be nonzero only if the component has genus-$0$. If both components have genus-$0$, then the OK condition is that the total degree of $\mathcal{L}_k$ is $\geq -1$. If the degree of $\mathcal{L}_k$ is $\leq -2$ on one component (the condition for $n_k$ to be nonzero on that component), then it must be positive on the other component.}. Set
\begin{equation}\label{isthislprimedef}
\mathcal{L}'_k \coloneqq \mathcal{L}_k\otimes \mathcal{O}(-n_k^L \mathcal{C}_L-n_k^R \mathcal{C}_R)
\end{equation}
Then, as before, we define
\label{twistdef}
\begin{equation}\label{Lprimedef}
\begin{split}
\mathcal{L}'_{k,L}& \coloneqq \mathcal{L}'_k\otimes \mathcal{O}_{C_L}(-p)=  \mathcal{L}_{k,L} \otimes \mathcal{O}((n_k^L-n_k^R) p)\\
\mathcal{L}'_{k,R} &\coloneqq \mathcal{L}'_k\otimes \mathcal{O}_{C_R}(-p)
=\mathcal{L}_{k,R} \otimes \mathcal{O}(-(n_k^L-n_k^R) p)
\end{split}
\end{equation}
\end{subequations}

It is important to note that, for $n^L_k >0$, the twisting \eqref{Lprimedef} does not introduce\footnote{Similarly, for $n^R_k>0$, we do not induce a nonzero $H^1(C,\mathcal{L}'_{k,L})$.} a nonzero $H^1(C,\mathcal{L}'_{k,R})$. To see this, note that, from the definitions of $\mathcal{L}_{k,R}$ and $n^L_k$, we can write $\mathcal{L}'_{k,R}$ in terms of the canonical bundle of the normalization
\begin{equation*}
\begin{split}
\mathcal{L}'_{k,R} &= K^{\otimes k}_{C_R} \otimes \mathcal{O}\Bigl((k-n^L_k-1)p+\sum_{p_i\in D_R} (k-\chi_k^{(i)})p_i\Bigr)\\
&= K^{\otimes k}_{C_R} \otimes \mathcal{O}\Bigl(\min\bigl(k-1,-1+g_L(2k-1)+\sum_{p_j\in D_L}(k-\chi_k^{(j)})\bigr)p+\sum_{p_i\in D_R} (k-\chi_k^{(i)})p_i\Bigr)
\end{split}
\end{equation*}
To show that $H^1(C_R,\mathcal{L}'_{k,R})=0$, we need to consider a few different cases.
\begin{itemize}
\item For $g_R>1$, $K^{\otimes k}_{C_R}$ has vanishing $H^1$ and the line bundle we twist it by has positive degree.
\item For $g_R=1$, $K^{\otimes k}_{C_R}$ is trivial but, again, we are twisting by a positive line bundle. In both cases, this leads to vanishing $H^1$.
\item For $g_R=0$, $D_R$ must consist of at least 2 marked points and the nilpotents located there must be such that $H^1(C,\mathcal{L}_k)=0$. This, again, is sufficient to ensure that $\deg(\mathcal{L}'_{k,R}) \geq -1$ and hence $H^1=0$.
\end{itemize}
As we shall see shortly, the twist \eqref{nlkdef} is only nontrivial when $g_L=0$, in which case, $\deg\bigl(\mathcal{L}_k\otimes \mathcal{O}_{C_L}(n_k^L p)\bigr)=-1$ and hence $H^0\bigl(C,\mathcal{L}_k\otimes \mathcal{O}_{C_L}(n_k^L p)\bigr)=0$ as well.

Finally, we can give our definition for the Hitchin base on the reducible nodal curve. The Hitchin base is
\begin{equation}\label{HitchinBaseTrueDef}
 B:=\bigoplus_{k=2}^N H^0(C,\mathcal{L}'_k)
\end{equation}
Or, globally, the family of Hitchin bases is 
\begin{equation}\label{GlobalHitchinBaseTrueDef}
\mathcal{B}:=\bigoplus_{k=2}^N \pi_*\mathcal{L}'_k
\end{equation}

On each component of the nodal curve, we define the Hitchin bases
\begin{equation}
\label{newbase}
\begin{split}
B_L& \coloneqq \bigoplus_{k=2}^N H^0(C, \mathcal{L}'_{k,L})\\
B_R& \coloneqq \bigoplus_{k=2}^N H^0(C, \mathcal{L}'_{k,R})
\end{split}
\end{equation}
By construction, we have the inclusion
\begin{equation}\label{inclusion}
H^0(C, \mathcal{L}'_{k,L})\oplus H^0(C, \mathcal{L}'_{k,R})\hookrightarrow H^0(C, \mathcal{L}'_{k})
\end{equation}
which is characterized by the following properties
\begin{itemize}
\item $H^0(C, \mathcal{L}'_{k,L})$ and $H^0(C, \mathcal{L}'_{k,R})$ are \emph{disjoint} subspaces of $H^0(C, \mathcal{L}'_{k})$ and hence the map in \eqref{inclusion} is injective.
\item For each $k$, the quotient in \eqref{inclusion} is either $0$ or $1$ dimensional. 
\begin{itemize}
\item When the twist, $n_k^L$ is nonzero then we have 
$$H^0(C, \mathcal{L}'_{k,L})=0\;\;\text{and}\;\; H^0(C, \mathcal{L}'_{k})=H^0(C, \mathcal{L}'_{k,R})\quad .$$
Conversely, when $n_k^R$ is nonzero, we have
$$H^0(C, \mathcal{L}'_{k,R})=0\;\;\text{and}\;\; H^0(C, \mathcal{L}'_{k})=H^0(C, \mathcal{L}'_{k,L})\quad .$$
In either case, the quotient vanishes, as it also does when the map $\alpha$ in \eqref{heresalpha} is zero.
\end{itemize}
\end{itemize}
 The space of \emph{center parameters},
\begin{equation}
B_C = B/(B_L\oplus B_R)
\end{equation}
is the direct sum of these $1$ dimensional spaces. The Hitchin fibers which are symplectic-dual to $B_C$ are the ones which become noncompact in the nodal limit.

Denoting the graded dimensions of these spaces as $b_k^L$, $b_k^R$ $b_k^C$ and $b_k$, we obviously have
\begin{equation}\label{btotal}
b_k = b_k^L + b_k^R + b_k^C
\end{equation}
The case where $b_k^C= 1$ for all $k$ corresponds to what we called a \textit{standard node} in \S\ref{4_standard_and_restricted_nodes}. Conversely, when some of the $b_k^C=0$, the node is \emph{restricted}.

The dimensions in \eqref{btotal} are given by
\begin{equation}\label{truedimLR}
b_k^L=\max\left(d_k^L-\max(-d_k^R-1,0),0\right),\qquad b_k^R=\max\left(d_k^R-\max(-d_k^L-1,0),0\right)
\end{equation}
where the index of $\mathcal{L}_{k,L}$ and $\mathcal{L}_{k,R}$ are
\begin{equation}\label{virtdimLR}
\begin{split}
d_k^L&= (g_L-1)(2k-1) +(k-1) + \sum_{p_i\in D_L}(k- \chi_k^{(i)})\\
d_k^R&= (g_R-1)(2k-1)+(k-1) + \sum_{p_i\in D_R}(k- \chi_k^{(i)})\\
\end{split}
\end{equation}
If $g_L$ and $g_R$ are both positive, then $d_k^L$ and $d_k^R$ are both $\geq k-1$ and hence, combining \eqref{virtdimLR} with \eqref{gradedBdims},  $b_k^C=1$ for each $k$. So we get the standard node.

Only if one or both of $C_{L,R}$ are genus-$0$, will a separating node be restricted. Without loss of generality, let $C_L$ have genus-$0$. In this case \eqref{virtdimLR} simplifies to
\begin{equation*}
d_k^L= -k + \sum_{p_i\in D_L} (k-\chi_k^{(i)})
\end{equation*}
where, for stability, we must have $\deg(D_L)\geq 2$. So if
\begin{equation}\label{bck0cond}
\sum_{p_i\in D_L} (k-\chi_k^{(i)}) < k  
\end{equation}
for some $k$, then we have the corresponding $b_k^C=0$, and hence a restricted node, as discussed in \S\ref{first_restricted_nodes}.

Our definition of restricted node amounted to asserting that $b_k^C = 0$ for some $k$. It will prove useful in \S\ref{classifying} to have some alternative formulations of this condition. To this end, we prove the following Proposition. 

\begin{prop}The following conditions are equivalent.
\label{centervanishinglemma}
\begin{enumerate}
\item[i.] $b_k^C = 0$ .
\item[ii.]Either\hfil\break
${\vphantom{\biggl(}}\quad H^1(C, \mathcal{L}'_{k,L})=\mathbb{C},\quad
H^0(C, \mathcal{L}'_{k,R})=0$\hfil\break
or\hfil\break
${\vphantom{\biggl(}}\quad H^0(C, \mathcal{L}'_{k,L})=0,\quad H^1(C, \mathcal{L}'_{k,R})=\mathbb{C}
$\hfil\break
(but not both).
\item[iii.] $H^0(C, \mathcal{L}'_k\otimes \mathcal{O}_{C_L}) = 0$ or $H^0(C, \mathcal{L}'_k\otimes \mathcal{O}_{C_R}) = 0$ (or both).
\item[iv.] $H^0(C, \mathcal{L}_k\otimes \mathcal{O}_{C_L}) = 0$ or $H^0(C, \mathcal{L}_k\otimes \mathcal{O}_{C_R}) = 0$ (or both).
\end{enumerate}
\end{prop}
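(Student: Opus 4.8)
The plan is to read all four conditions off two short exact sequences comparing $\mathcal{L}'_k$ with its restrictions to the components $C_L,C_R$, together with the vanishing $H^1(C,\mathcal{L}'_k)=0$ (the ``OK'' hypothesis, preserved by the twist \eqref{isthislprimedef}) and the facts already established in \S\ref{hitchinreducible}: at most one of $n_k^L,n_k^R$ is nonzero; the twist is trivial unless one of the components is genus $0$, and when present it makes the restriction of $\mathcal{L}'_k$ to that genus-$0$ component of degree $-1$; and $H^1$ of $\mathcal{L}'_{k,L}$ or $\mathcal{L}'_{k,R}$ is nonzero only on a genus-$0$ component, on which the relevant bundle then has degree exactly $-2$, so that this $H^1$ equals $\mathbb{C}$ and the companion $H^0$ vanishes. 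Throughout I use the identity $d_k^L+d_k^R=b_k-1\ge -1$, obtained by combining \eqref{gradedBdims} with \eqref{virtdimLR}.

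First I would settle (i)~$\Leftrightarrow$~(ii). Taking cohomology of the primed version of \eqref{mySES}, namely $0\to\mathcal{L}'_{k,L}\oplus\mathcal{L}'_{k,R}\to\mathcal{L}'_k\to S_p\to 0$, and using $H^1(C,\mathcal{L}'_k)=0$ (so that the connecting map $\mathbb{C}=H^0(S_p)\to H^1(\mathcal{L}'_{k,L})\oplus H^1(\mathcal{L}'_{k,R})$ is surjective), a dimension count gives $b_k^C = 1 - h^1(\mathcal{L}'_{k,L}) - h^1(\mathcal{L}'_{k,R})$. Hence $b_k^C=0$ iff exactly one of the two $H^1$'s is nonzero; that they cannot both be nonzero is forced by $d_k^L+d_k^R\ge -1$. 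Interpreting ``exactly one $H^1$ nonzero'' through the structural facts above (genus-$0$ component, degree $-2$, companion $H^0=0$) yields condition (ii).

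Next I would do (i)~$\Leftrightarrow$~(iii). From the restriction sequence $0\to\mathcal{L}'_{k,R}\to\mathcal{L}'_k\to\mathcal{L}'_k\otimes\mathcal{O}_{C_L}\to 0$, the identity $\mathcal{L}'_{k,L}=(\mathcal{L}'_k\otimes\mathcal{O}_{C_L})(-p)$, and a one-line degree count, one gets that $H^0(C,\mathcal{L}'_k\otimes\mathcal{O}_{C_L})=0$ precisely when $C_L$ is a $\mathbb{P}^1$ with $\deg(\mathcal{L}'_k\otimes\mathcal{O}_{C_L})=d_k^L\le -1$, i.e.\ precisely when $H^1(\mathcal{L}'_{k,L})=\mathbb{C}$; symmetrically on the right. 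With the previous step this gives (i)~$\Leftrightarrow$~(iii), the ``or both'' clause being vacuous since $d_k^L$ and $d_k^R$ cannot both be $\le -1$. For (iii)~$\Leftrightarrow$~(iv) I would observe that $\mathcal{L}'_k\otimes\mathcal{O}_{C_L}$ and $\mathcal{L}_k\otimes\mathcal{O}_{C_L}$ differ only by the boundary twist $\mathcal{O}_{C_L}\bigl((n_k^L-n_k^R)p\bigr)$; this is nontrivial only when a component is genus $0$, and there a short computation using $d_k^L+d_k^R=b_k-1$ and the ``OK'' inequality $b_k\ge 0$ shows that $H^0$ of the two restricted bundles vanishes simultaneously, and likewise on the right.

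The fiddliest part will be the degree bookkeeping underlying (iii) and the (iii)~$\Leftrightarrow$~(iv) comparison: one must keep the cases $g_\bullet\ge 2$, $g_\bullet=1$, and $g_\bullet=0$ (twisted and untwisted) straight, and check in each that the twist creates no spurious $H^1$ on the ``good'' component and that $H^1(C,\mathcal{L}'_k)=0$ persists. Essentially all of this has already been carried out in \S\ref{hitchinreducible}, so the substance of the proof is the assembly of those facts rather than any new computation.
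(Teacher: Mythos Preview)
Your proposal is correct and follows essentially the same route as the paper: the primed version of the short exact sequence \eqref{mySES} together with $H^1(C,\mathcal{L}'_k)=0$ handles (i)$\Leftrightarrow$(ii), the genus-$0$ equivalence $H^0(\mathcal{L}'_k\otimes\mathcal{O}_{C_L})=0\Leftrightarrow H^1(\mathcal{L}'_{k,L})\neq 0$ handles (ii)$\Leftrightarrow$(iii), and the degree bookkeeping under the twist handles (iii)$\Leftrightarrow$(iv). The only slip is cosmetic: when $n_k^L>0$ one has $\deg(\mathcal{L}'_k\otimes\mathcal{O}_{C_L})=-1$ rather than $d_k^L$, but your stated biconditional ``$H^0=0$ iff $g_L=0$ and $d_k^L\le -1$'' is unaffected and the argument goes through.
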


\begin{proof}
To prove the equivalence of (i) and (ii), consider the short exact sequence (equation \eqref{mySES}, but for $\mathcal{L}'_k$ instead of $\mathcal{L}_k$):
\begin{equation}\label{myprimed SES}
0\to \mathcal{L}'_{k,L}\oplus \mathcal{L}'_{k,R}\to \mathcal{L}'_k\to S'_p\to 0
\end{equation}
and the corresponding long exact sequence
\begin{equation}
0\to H^0(C,\mathcal{L}'_{k,L})\oplus H^0(C,\mathcal{L}'_{k,R})\to H^0(C,\mathcal{L}'_k)\xrightarrow{\alpha'}H^0(S'_p)\to H^1(C,\mathcal{L}'_{k,L})\oplus H^1(C,\mathcal{L}'_{k,R})\to 0
\end{equation}
The latter splits, either as
\begin{subequations}
\begin{equation}
\begin{gathered}
0\to H^0(C,\mathcal{L}'_{k,L})\oplus H^0(C,\mathcal{L}'_{k,R})\to H^0(C,\mathcal{L}'_k)\xrightarrow{\alpha'} \mathbb{C}\to 0\\
H^1(C,\mathcal{L}'_{k,L})\oplus H^1(C,\mathcal{L}'_{k,R})=0
\end{gathered}
\end{equation}
or as
\begin{equation}
\begin{gathered}
H^0(C,\mathcal{L}'_k) = H^0(C,\mathcal{L}'_{k,L})\oplus H^0(C,\mathcal{L}'_{k,R})\\
H^1(C,\mathcal{L}'_{k,L})\oplus H^1(C,\mathcal{L}'_{k,R})=\mathbb{C}
\end{gathered}
\end{equation}
\end{subequations}
depending on whether the residue map $\alpha'$ is nonzero. The former corresponds to $b^C_k=1$; the latter to $b^C_k=0$. But the latter holds if and only if (ii).

To prove the equivalence of (ii) and (iii), we note that $\deg(\mathcal{L}'_k\otimes \mathcal{O}_{C_{L}})\geq k(2g_{L}-1)$. So if $g_L\geq 1$ then  $H^1(\mathcal{L}'_{k,L})= H^1(\mathcal{L}'_k\otimes \mathcal{O}_{C_{L}}(-p))=0$ and $H^0(\mathcal{L}'_k\otimes \mathcal{O}_{C_{L}})\neq0$. To get $H^1(\mathcal{L}'_{k,L})\neq 0$ or $H^0(\mathcal{L}'_k\otimes \mathcal{O}_{C_{L}})=0$, we must have $g_L=0$. But if $g_L=0$ then  $H^1(\mathcal{L}'_k\otimes \mathcal{O}_{C_{L}}(-p))\neq 0\, \Leftrightarrow \, H^0(\mathcal{L}'_k\otimes \mathcal{O}_{C_{L}})=0$. The same applies to $\mathcal{L}'_{k,R}$.

Finally, let us consider (iv). If $n^L_k=n^R_k=0$, then $\mathcal{L}'_k=\mathcal{L}_k$ and (iii) is equivalent to (iv). So, without loss of generality, consider the case $g_L=0$ and  $n^L_k >0$. Then
$$ \deg(\mathcal{L}_k\otimes \mathcal{O}_{C_L}) = \deg(\mathcal{L}'_k\otimes \mathcal{O}_{C_L}) - n^L_k=-1- n^L_k\leq -2,$$
in which case $H^0(\mathcal{L}_k\otimes \mathcal{O}_{C_L})=0=H^0(\mathcal{L}'_k\otimes \mathcal{O}_{C_L})$.
\end{proof}

\subsection{The nilpotent at the node}\label{nilpnode}

Proposition \ref{centervanishinglemma} gives us the conditions under which we have a reduction in the number of center parameters. One observes that these condition obtain only if at least one of the components is a genus zero curve. For this section, let us imagine we are in one of these situations and without loss of generality, let us take $C_L$ to be the genus zero component.  We prove in \S\ref{classifying} that the non-zero center parameters can \textit{always} be identified with the invariant polynomials for some simple Lie subgroup $H \subset J$. This determines the $H$ in the pair $(O,H)$ with which we label restricted nodes. 

We now turn to the other entry in the pair. To understand its role, we will need to look in more detail at the behaviour of sections of $\mathcal{L}'_k$ at a node. 

%Recall that the $\chi_k^{(i)}$ range between $1$ (for the regular nilpotent, $[N]$) and $k-1$ (for the minimal nilpotent, $[2,1^{N-2}]$). So, for given $D_L$, the maximal value of the LHS of \eqref{bck0cond} is attained when all of the corresponding nilpotents are the minimal nilpotent. If all of the nilpotents at the $p_i\in D_L$ are the minimal nilpotent,  \eqref{bck0cond} reduces to
%$$
%  k > \deg(D_L)
%$$
%So necessary conditions for having a restricted node are
%\begin{itemize}
%\item The nodal curve is reducible.
%\item (At least) one of the components is genus-$0$.
%\item The number of marked points on that component is less than $N$.
%\item The collection of nilpotents at those marked points is such that \eqref{bck0cond} is satisfied for at least one value of $k$.
%\end{itemize}

%If all of the above conditions obtain, then the graded dimensions of the space of center parameters are read off by combining \eqref{truedimLR}, \eqref{virtdimLR} and \eqref{gradedBdims}. 

If $d_k^L$ and $d_k^R$ in \eqref{virtdimLR} are both $\geq -1$, for each $k$, then $O=[N]$. If, for some values of $k$, $d_k^L\leq -2$, then the corresponding  $\mathcal{L}_k\otimes \mathcal{O}_{C_L}$ has higher cohomology.
\begin{equation}
h^1(C,\mathcal{L}_k\otimes \mathcal{O}_{C_L}) = \max(-1-d_k^L,0)= n_k^L
\label{h1equation}
\end{equation}
That component of the Hitchin base, $B_R$, is given not by $H^0(C,\mathcal{L}_{k,R})$, but rather by
\begin{equation}\label{twist}
H^0(C,\mathcal{L}'_{k,R}) = H^0\left(C, \mathcal{L}_{k,R}\otimes \mathcal{O}(-n_k^L p)\right)
\end{equation}
That is, $\phi_k$ has a zero of order $n_k^L+1$, rather than a simple zero, at the node. This is the Hitchin base associated to $C_R$, where the nilpotent, $O$, at the node has vanishing orders of the $\phi_k$ given by
 
\begin{equation}
\chi_k = 1 + n_k^L
\label{modifiedvanishing}
\end{equation} 

We show in \S\ref{classifying} that $\chi_k$ as defined above is the set of vanishing orders for some nilpotent orbit $O$. Since the vanishing orders uniquely identify nilpotent orbits in $\mathfrak{sl}(N)$, this gives us the other entry in the pair $(O,H)$.  We also give a different prescription for determining $O$  in \S\ref{restricted_nodes} based on properties of the Higgs branch. 

Let us, however, pause to follow the intuition we developed using global methods in examples \hyperlink{example_5}{5}, \hyperlink{example_7}{7} of \S\ref{examples} to see if it matches the prescription in \eqref{modifiedvanishing}. In each of those examples, when we turn off the center parameters, some of the $\phi_k$ on $C_R$ have higher-order zeroes (instead of simple zeroes) at the node. We claim that these higher order zeros are precisely accounted for by the twist in \eqref{twist}.

Let's check these assertions. Specializing to $g_L=0$ and $\deg(D_L)=2$, \eqref{virtdimLR} simplifies to
$$
d_k^L= k-\sum_{p_i\in D_L} \chi_k^{(i)}
$$
For both examples, $d_k^R=k-2$.
\begin{itemize}
\item For \hyperlink{example_5}{example 5}, $\chi_k^{[2,1^2]} = k-1$, so $d_k^L=2-k$. Hence $b_k^L=(0,0,0)$, $b_k^R=(0,1,1)$, $b_k^C=(1,0,0)$ and $h^1(C,\mathcal{L}_k\otimes\mathcal{O}_{C_L})=(0,0,1)$. This agrees with our previous analysis, where we found that (after turning off the center parameters) $\phi_4$ had a double zero at the node.
\item For \hyperlink{example_7}{example 7}, $\chi_k^{[3,1^3]}=\max(k-2,1)$, so $d_k^L=(0,1,0,-1,-2)$. Hence $b_k^L=(0,1,0,0,0)$, $b_k^R=(0,1,2,3,3)$,  $b_k^C=(1,1,1,0,0)$ and  $h^1(C,\mathcal{L}_k\otimes\mathcal{O}_{C_L})=(0,0,0,0,1)$, which agrees with our previous analysis, where we found that (after turning off the center parameters) $\phi_6$ had a double zero at the node.
\end{itemize}

\subsection{Hitchin system on an irreducible nodal curve}\label{24_hitchin_system_on_an_irreducible_nodal_curve}

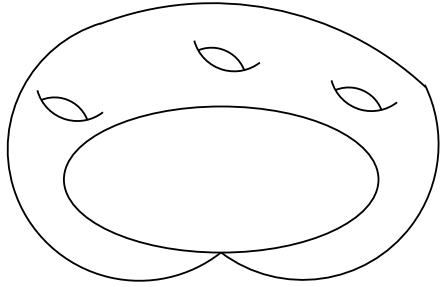
\begin{figure}
\begin{center}
% Graphic for TeX using PGF
% Title: /home/aswin/Diagram1.dia
% Creator: Dia v0.97+git
% CreationDate: Sun Jan  5 23:02:33 2020
% For: aswin
% \usepackage{tikz}
% The following commands are not supported in PSTricks at present
% We define them conditionally, so when they are implemented,
% this pgf file will use them.
\ifx\du\undefined
  \newlength{\du}
\fi
\setlength{\du}{7\unitlength}
\begin{tikzpicture}[even odd rule]
\pgftransformxscale{1.000000}
\pgftransformyscale{-1.000000}
\definecolor{dialinecolor}{rgb}{0.000000, 0.000000, 0.000000}
\pgfsetstrokecolor{dialinecolor}
\pgfsetstrokeopacity{1.000000}
\definecolor{diafillcolor}{rgb}{1.000000, 1.000000, 1.000000}
\pgfsetfillcolor{diafillcolor}
\pgfsetfillopacity{1.000000}
\pgfsetlinewidth{0.100000\du}
\pgfsetdash{}{0pt}
\definecolor{diafillcolor}{rgb}{1.000000, 1.000000, 1.000000}
\pgfsetfillcolor{diafillcolor}
\pgfsetfillopacity{1.000000}
%\pgfpathellipse{\pgfpoint{41.450000\du}{14.050000\du}}{\pgfpoint{8.500000\du}{0\du}}{\pgfpoint{0\du}{3.950000\du}}
\pgfusepath{fill}
\definecolor{dialinecolor}{rgb}{0.000000, 0.000000, 0.000000}
\pgfsetstrokecolor{dialinecolor}
\pgfsetstrokeopacity{1.000000}
\pgfpathellipse{\pgfpoint{41.450000\du}{14.050000\du}}{\pgfpoint{8.500000\du}{0\du}}{\pgfpoint{0\du}{3.950000\du}}
\pgfusepath{stroke}
\pgfsetlinewidth{0.100000\du}
\pgfsetdash{}{0pt}
\pgfsetbuttcap
{
\definecolor{diafillcolor}{rgb}{0.000000, 0.000000, 0.000000}
\pgfsetfillcolor{diafillcolor}
\pgfsetfillopacity{1.000000}
% was here!!!
\definecolor{dialinecolor}{rgb}{0.000000, 0.000000, 0.000000}
\pgfsetstrokecolor{dialinecolor}
\pgfsetstrokeopacity{1.000000}
\pgfpathmoveto{\pgfpoint{34.950213\du}{5.599934\du}}
\pgfpatharc{253}{52}{7.118235\du and 7.118235\du}
\pgfusepath{stroke}
}
\pgfsetlinewidth{0.100000\du}
\pgfsetdash{}{0pt}
\pgfsetbuttcap
\definecolor{dialinecolor}{rgb}{0.000000, 0.000000, 0.000000}
\pgfsetstrokecolor{dialinecolor}
\pgfsetstrokeopacity{1.000000}
\pgfpathmoveto{\pgfpoint{52.500178\du}{9.05561\du}}
\pgfpatharc{313}{249}{16.910069\du and 16.910069\du}
\pgfusepath{stroke}
\pgfsetlinewidth{0.100000\du}
\pgfsetdash{}{0pt}
\pgfsetbuttcap
{
\definecolor{diafillcolor}{rgb}{0.000000, 0.000000, 0.000000}
\pgfsetfillcolor{diafillcolor}
\pgfsetfillopacity{1.000000}
% was here!!!
\definecolor{dialinecolor}{rgb}{0.000000, 0.000000, 0.000000}
\pgfsetstrokecolor{dialinecolor}
\pgfsetstrokeopacity{1.000000}
\pgfpathmoveto{\pgfpoint{41.450217\du}{18.000165\du}}
\pgfpatharc{487}{334}{7.332817\du and 7.332817\du}
\pgfusepath{stroke}
}
\pgfsetlinewidth{0.100000\du}
\pgfsetdash{}{0pt}
\pgfsetbuttcap
\definecolor{dialinecolor}{rgb}{0.000000, 0.000000, 0.000000}
\pgfsetstrokecolor{dialinecolor}
\pgfsetstrokeopacity{1.000000}
\pgfpathmoveto{\pgfpoint{31.516724\du}{9.225577\du}}
\pgfpatharc{165}{52}{2.240625\du and 2.240625\du}
\pgfusepath{stroke}
\pgfsetlinewidth{0.100000\du}
\pgfsetdash{}{0pt}
\pgfsetbuttcap
\definecolor{dialinecolor}{rgb}{0.000000, 0.000000, 0.000000}
\pgfsetstrokecolor{dialinecolor}
\pgfsetstrokeopacity{1.000000}
\pgfpathmoveto{\pgfpoint{34.216763\du}{10.825712\du}}
\pgfpatharc{339}{248}{1.897349\du and 1.897349\du}
\pgfusepath{stroke}
\pgfsetlinewidth{0.100000\du}
\pgfsetdash{}{0pt}
\pgfsetbuttcap
\definecolor{dialinecolor}{rgb}{0.000000, 0.000000, 0.000000}
\pgfsetstrokecolor{dialinecolor}
\pgfsetstrokeopacity{1.000000}
\pgfpathmoveto{\pgfpoint{39.994699\du}{6.538553\du}}
\pgfpatharc{165}{52}{2.240625\du and 2.240625\du}
\pgfusepath{stroke}
\pgfsetlinewidth{0.100000\du}
\pgfsetdash{}{0pt}
\pgfsetbuttcap
\definecolor{dialinecolor}{rgb}{0.000000, 0.000000, 0.000000}
\pgfsetstrokecolor{dialinecolor}
\pgfsetstrokeopacity{1.000000}
\pgfpathmoveto{\pgfpoint{42.694738\du}{8.138687\du}}
\pgfpatharc{339}{248}{1.897349\du and 1.897349\du}
\pgfusepath{stroke}
\pgfsetlinewidth{0.100000\du}
\pgfsetdash{}{0pt}
\pgfsetbuttcap
\definecolor{dialinecolor}{rgb}{0.000000, 0.000000, 0.000000}
\pgfsetstrokecolor{dialinecolor}
\pgfsetstrokeopacity{1.000000}
\pgfpathmoveto{\pgfpoint{47.409699\du}{8.688553\du}}
\pgfpatharc{165}{52}{2.240625\du and 2.240625\du}
\pgfusepath{stroke}
\pgfsetlinewidth{0.100000\du}
\pgfsetdash{}{0pt}
\pgfsetbuttcap
\definecolor{dialinecolor}{rgb}{0.000000, 0.000000, 0.000000}
\pgfsetstrokecolor{dialinecolor}
\pgfsetstrokeopacity{1.000000}
\pgfpathmoveto{\pgfpoint{50.109738\du}{10.288687\du}}
\pgfpatharc{339}{248}{1.897349\du and 1.897349\du}
\pgfusepath{stroke}
\end{tikzpicture}
\end{center}
\caption{A Riemann surface $C$ of genus $g$ develops a non-separating node. The normalization $\tilde{C}$ is a Riemann surface of genus $g-1$ }
\end{figure}

Finally, let us consider the case where $C$ has a non-separating node. Here, the normalization of the nodal curve, $v: \tilde{C}\to C$, is a curve of genus $g_{\tilde{C}}=g-1$ with two points, $(q_1, q_2)$ covering the node $p$.

As before, define
\begin{equation}
\tilde{\mathcal{L}}_k= \mathcal{L}_k\otimes \mathcal{O}(-q_1-q_2)
\end{equation}
The degree
\begin{equation*}
\deg{\tilde{\mathcal{L}}_k}= 2k g_{\tilde{C}}+\sum_{p_i\in D}(k- \chi_k^{(i)})
\end{equation*}
is in the stable range so that $H^1(\tilde{C},\tilde{\mathcal{L}}_k)=0$. (This also means that $H^1(C,\mathcal{L}_k)=0$.)
The Hitchin base for the normalization is
\begin{equation}
\tilde{B} = \bigoplus_k H^0(\tilde{C},\tilde{\mathcal{L}}_k)
\end{equation}
which has graded dimensions
\begin{equation}\label{gradednonseparatingdims}
\tilde{b}_k = (g-2)(2k-1)  +2(k-1)+\sum_{p_i\in D}(k- \chi_k^{(i)})
\end{equation}
We have the  natural inclusion, $\tilde{B}\hookrightarrow B$, and (with a slight abuse of notation), we will call the quotient
$$
B_C= B/\tilde{B}
$$
the space of center parameters, as before.

Combining \eqref{gradedBdims} and \eqref{gradednonseparatingdims}, we have
\begin{equation}
b_k^C = b_k - \tilde{b}_k = 1
\end{equation}
for each $k$. So a non-separating node is always the standard node and $\chi_k=1$ for each $k$.

\subsection{Further degeneration}\label{further_degen}
\label{furtherdegeneration}

So far, we have assumed that the curve $C$ is smooth, except for a single node. We have seen that \emph{only when} the nodal curve has two components and when at least one of those components (say, $C_L$)  is genus-zero is it possible for the node to be restricted: $(O,H)\neq ([N],SU(N))$.

Without loss of generality we can take $C_L$ to be the genus-zero component, and the twists $n^R_k=0$, so that the vanishing order, $\chi_k$, at the node is determined by the twists $n^L_k$. As we shall see, in \S\ref{classifying}, the collection of $\chi_k$ determine the nilpotent $O$ at the node and moreover that if, for any $k$, $n^L_k\neq 0$ the \emph{all} of the $n^R_k=0$ (and vice versa). So $O$ is (in fact) entirely determined by the data on $C_L$. With one exception noted in \S\ref{restricted_nodes}, $H$ is also entirely determined by the data on $C_L$. 
Here, we will not assume this. We will study what happens for a fixed value of $k$. 
Hence we will use the pair $(\chi_k,b_k^C)$ as our stand-in for $(O,H)$.

In this subsection, we would like to inquire what happens upon further degeneration of $C_L$ and $C_R$. Let us first consider degenerating $C_L$. Since it has genus-zero, the degeneration has the form of a \emph{tree} of $\mathbb{P}^1$s, with the root node being the one we started with. For the present discussion, we only need to focus on the vanishing orders ${\color{red}\chi_k}$ at the root of the tree and graded dimension $b_k^C$ of the space of center parameters at the root node. 

First, we wish to show that all such trees give the same ${\color{red}\chi_k}$ at the root node. To see this, it suffices to note that we can pass from one tree to any other tree, via an elementary move on 4-punctured spheres. This is depicted in figure \ref{treemove}.

\begin{figure}
\begin{center}
\begin{tikzpicture}
\begin{scope}[shift={(0,.25)}]
\draw [thick] (5.25,-.75) arc [radius=3, start angle=45, end angle= 135];
\draw[thick]  (-.4,2.25) -- (3.6,-.25);
\draw[thick]  (2.25,2.25) -- (.75,.75);
\draw[fill] (0,2) circle [radius=0.07];
\draw[fill] (2,2) circle [radius=0.07];
\draw[fill] (1,1) circle [radius=0.07];
\draw[fill=blue,color=blue] (1.225,1.225) circle [radius=0.07];
\draw[fill=red,color=red] (3,.125) circle [radius=0.07];
\node[scale=0.9] at (.125,2.25){$3$};
\node[scale=0.9] at (1.825,2.25){$1$};
\node[scale=0.9] at (.75,1.125){$2$};
\node[scale=0.7] at (2.625,-.175){$(O,H)$};
\end{scope}
\node at (6,1){$\sim$};
\begin{scope}[shift={(7.75,0)}]
\draw [thick] (3,-.5) arc [radius=3, start angle=45, end angle= 135];
\draw[thick]  (-.5,2) -- (2.5,2);
\draw[thick]  (1,2.5) -- (1,-.25);
\draw[fill] (0,2) circle [radius=0.07];
\draw[fill] (2,2) circle [radius=0.07];
\draw[fill] (1,1) circle [radius=0.07];
\draw[fill=blue,color=blue] (1,2) circle [radius=0.07];
\draw[fill=red,color=red] (1,.375) circle [radius=0.07];
\node[scale=0.9] at (0,2.25){$3$};
\node[scale=0.9] at (2,2.25){$1$};
\node[scale=0.9] at (.75,1){$2$};
\node[scale=0.7] at (.5,.125){$(O,H)$};
\end{scope}
\node at (11.25,1){$\sim$};
\begin{scope}[shift={(14.5,.25)}]
\draw [thick] (1.25,-.75) arc [radius=3, start angle=45, end angle= 135];
\draw[thick]  (-.25,2.25) -- (1.25,.75);
\draw[thick]  (2.4,2.25) -- (-1.6,-.25);
\draw[fill] (0,2) circle [radius=0.07];
\draw[fill] (2,2) circle [radius=0.07];
\draw[fill] (1,1) circle [radius=0.07];
\draw[fill=blue,color=blue] (.775,1.225) circle [radius=0.07];
\draw[fill=red,color=red] (-1,.125) circle [radius=0.07];
\node[scale=0.9] at (.125,2.25){$3$};
\node[scale=0.9] at (1.825,2.25){$1$};
\node[scale=0.9] at (1.25,1.125){$2$};
\node[scale=0.7] at (-.625,-.125){$(O,H)$};
\end{scope}
\end{tikzpicture}
\caption{The basic ``tree move''. Fixing the locations of the three points, $1,2,3$, on $\mathbb{P}^1$ and letting the attachment point $\color{red}\bullet$ roam over the $\mathbb{P}^1$, we interpolate between the three trees.}\label{treemove}
\end{center}
\end{figure}
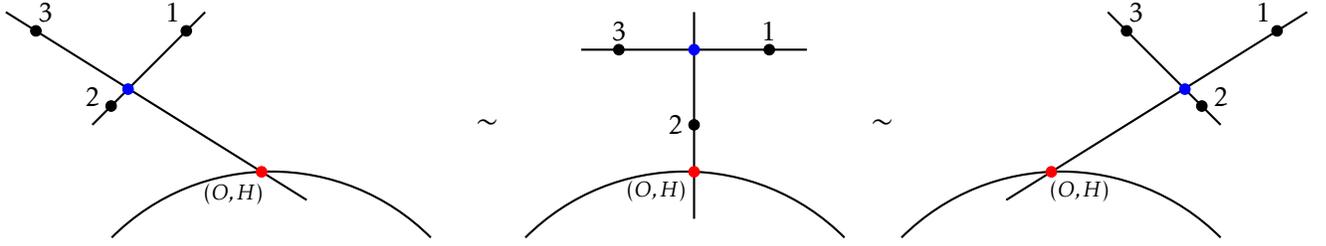

Let's compute the vanishing order $\color{red}\chi_k$ at the root of the tree on the left. Here, ``1'' and ``2'' fuse first, yielding the vanishing order
$$
{\color{blue}\tilde{\chi}_k}= 1+\max(0,\chi^{(1)}_k+\chi^{(2)}_k-k-1)
$$
at the node labeled in blue. This then combines with ``3'' to yield the vanishing order
\begin{equation}
\begin{split}
{\color{red}\chi_k}&=1+\max(0,{\color{blue}\tilde{\chi}_k}+\chi_k^{(3)}-k-1)\\
&=1+\max\bigl(0,\max(0,\chi_k^{(1)}+\chi_k^{(2)}-k-1)+\chi_k^{(3)}-k\bigr)
\end{split}
\end{equation}
at the root of the tree.
There are two cases to consider
\begin{itemize}
\item[a)] If $\chi_k^{(1)}+\chi_k^{(2)}\leq k$, then we have \[
\begin{split}
{\color{red}\chi_k}&=1+\max(0,\chi_k^{(3)}-k)\\
&=1
\end{split}
\]
\item[b)] If $\chi_k^{(1)}+\chi_k^{(2)}\geq k+1$, then we have \[
{\color{red}\chi_k}=1+\max(0,\chi_k^{(1)}+\chi_k^{(2)}+\chi_k^{(3)}-2k-1)
\]
\end{itemize}
In the latter case, if $\chi_k^{(1)}+\chi_k^{(3)}\leq k$ \emph{or} $\chi_k^{(2)}+\chi_k^{(3)}\leq k$ then we again have ${\color{red}\chi_k}=1$. Of course, it also gives ${\color{red}\chi_k}=1$ for $\chi_k^{(1)}+\chi_k^{(2)}\leq k$. Hence this formula subsumes case (a) and we can write
\begin{equation}\label{chitree}
{\color{red}\chi_k}=
1+\max(0,\chi_k^{(1)}+\chi_k^{(2)}+\chi_k^{(3)}-2k-1)
\end{equation}
This result is clearly invariant under permutations of 1,2,3 and hence applies to all three trees. Thus, we get the same ${\color{red}\chi_k}$ at the root of the tree, regardless of which tree we choose. Performing this move \emph{locally}, we can transform an \emph{arbitrary} tree of $\mathbb{P}^1$s into any other tree. Thus ${\color{red}\chi_k}$ at the root of the tree depends only on the $\chi_k^{(i)}$ on $C_L$, and not on how $C_L$ degenerates.

We can perform the same analysis for ${\color{red}b^C_k}$. With the exception  noted in  \S\ref{restricted_nodes}, the number of center parameters ${\color{red}b^C_k}$ is determined by the data on $C_L$. We have
\begin{equation}
{\color{red}b^C_k}=\min\bigl(1,\max(0,d^L_k+1)\bigr)
\end{equation}
where, in the case at hand,
\begin{equation*}
d_k^L=-k+\sum_{p_i\in D_L} (k-\chi_k^{(i)})
\end{equation*}
Applying this to the tree on the left,
\begin{equation*}
\begin{split}
1+d^L_k &= 1+k-{\color{blue}\tilde{\chi}_k}-\chi_k^{(3)}\\
&=k-\chi_k^{(3)}-\max(0,\chi_k^{(1)}+\chi_k^{(2)}-k-1)
\end{split}
\end{equation*}
Again, we have two cases
\begin{itemize}
\item[a)] If $\chi_k^{(1)}+\chi_k^{(2)}\leq k$, then $1+d^L_k= k-\chi_k^{(3)}$ and hence ${\color{red}b^C_k}=1$.
\item[b)] If $\chi_k^{(1)}+\chi_k^{(2)}\geq k+1$, then $1+d^L_k=2k+1-\chi_k^{(1)}-\chi_k^{(2)}-\chi_k^{(3)}$ and hence \[{\color{red}b^C_k}=\min\bigl(1,\max(0,2k+1-\chi_k^{(1)}-\chi_k^{(2)}-\chi_k^{(3)})\bigr).\]
\end{itemize}
In the latter case, if \emph{either} $\chi_k^{(1)}+\chi_k^{(3)}\leq k$ or $\chi_k^{(2)}+\chi_k^{(3)}\leq k$, then $2k+1-\chi_k^{(1)}-\chi_k^{(2)}-\chi_k^{(3)} \geq 2$ and hence ${\color{red}b^C_k}=1$. As before, this formula subsumes case (a) and we can write
\begin{equation}\label{bctree}
{\color{red}b^C_k}=
\min\bigl(1,\max(0,2k+1-\chi_k^{(1)}-\chi_k^{(2)}-\chi_k^{(3)})\bigr)
\end{equation}
As with \eqref{chitree}, this is manifestly symmetric under permutations of 1,2,3 and hence invariant under the ``tree move''. Thus the pair $({\color{red}\chi_k},{\color{red}b^C_k})$ at the root of the tree depends only on the $O_i$ on $C_L$, and not on how $C_L$ degenerates.

Equations \eqref{chitree},\eqref{bctree} ensure that $({\color{red}\chi_k},{\color{red}b^C_k})$  at the node is independent of how $C_L$ further degenerates.  The upshot is that $({\color{red}\chi_k},{\color{red}b^C_k})$  depends only on the puncture data and is completely independent of the complex structure of $C_L$. In fact, we can read off the general answer for an arbitrary number of marked points on a genus-zero $C_L$ from \eqref{virtdimLR}, \eqref{modifiedvanishing}, \eqref{twistdef}, \eqref{btotal} and \eqref{truedimLR}:
\begin{subequations}\label{final}
\begin{equation}\label{finalchi}
\chi_k=1+\max\bigl(0,k-1-\sum_{p_i\in D_L} (k-\chi_k^{(i)})\bigr)
\end{equation}
\begin{equation}\label{finalbc}
b^C_k=\min\Bigl(1,\max\bigl(0,1-k+\sum_{p_i\in D_L} (k-\chi_k^{(i)})\bigr)\Bigr)
\end{equation}
independent of the complex structure of $C_L$ and, in particular, of whether $C_L$ is smooth or degenerate. While we arrived at \eqref{final} through manipulations which preserved the number of components of $C_L$, the final result is independent of the complex structure of $C_L$ and holds for both $C_L$ smooth or nodal, with an arbitrary number of irreducible components.

As alluded to above, there is one case where a contribution to $b^C_k$ comes from both $C_L$ and $C_R$. This occurs (see \S\ref{restricted_nodes}) only when both $C_L$ and $C_R$ are genus-zero. In that case, we can replace \eqref{finalbc} by
\begin{equation}\label{finalbcmode}
b^C_k=\min\Bigl(1,\max\bigl(0,1-k+\sum_{p_i\in D_L} (k-\chi_k^{(i)})\bigr),
\max\bigl(0,1-k+\sum_{p_i\in D_R} (k-\chi_k^{(i)})\bigr)\Bigr)
\end{equation}
\end{subequations}

Similar considerations apply to degenerations of $C_R$. If $g_R\geq1$, then there are no constraints coming from $C_R$, whether smooth or degenerate. If $g_R=0$, then  our assumption that $H^1(C,\mathcal{L}_k)=0$  implies that the only possible constraint coming from $C_R$ is embodied in \eqref{finalbcmode}, and (repeating the arguments of this subsection) this persists under further degenerations of $C_R$.

\subsection{The global story}\label{globalstory}

In \S\ref{hitchinreducible}, we sketched the construction of a family of Hitchin bases that extended to the boundary of the moduli space, where the curve $C$ develops a node. Here we will sketch the general story, leaving most of the details to a followup \cite{Balasubramanian:global}.

For simplicity, let us first consider the genus-0 case, where all nodes are separating nodes. The components of the boundary of $\overline{\mathcal{M}}_{0,n}$ are labeled by subsets $S\subset \{p_1,p_2,\dots,p_n\}$ such that both $S$ and its complement, $S^\vee$ contain at least two points\footnote{Since there is no invariant distinction between left and right, exchanging $S\leftrightarrow S^\vee$ yields the same component of the boundary.}. The corresponding nodal curve $C$ has two irreducible components $C_{S}$ and $C_{S^\vee}$ (previously, we called these $C_L$ and $C_R$), such that the marked points in $S$ lie on $C_{S}$ and the marked points in $S^\vee$ lie on $C_{S^\vee}$.

For each $S$ and $k=2,\dots,N$ we assign a non-negative integer, $n_k^S$ as follows. Let $\mathcal{L}_k\to \overline{\mathcal{C}}$ be the Hitchin line bundle over the universal curve, obtained by fitting together the line bundles \eqref{Lkdef} over each fiber. As in \eqref{h1equation},
\begin{equation}
\label{nksdefgenuszero}
%\label{nksdef}
n_k^S\coloneqq\max\Bigl(0,k-1-\sum_{p_i\in S}(k-\chi_k^{(i)})\Bigr)\quad.
\end{equation}
When $C_S$ is a smooth genus-0 curve, $h^1(C_S, \mathcal{L}_k)= n_k^S$.

Let $\mathcal{C}_S$ be the Cartier divisor in $\overline{\mathcal{C}}$ corresponding to $C_S$.
We define the line bundle $\mathcal{L}'_k\to \overline{\mathcal{C}}$ to be
\begin{equation}\label{globallprimedef}
\mathcal{L}'_k\coloneqq \mathcal{L}_k\otimes \mathcal{O}\bigl(-\sum_S n_k^S \mathcal{C}_S\bigr)
\end{equation}
The family of Hitchin bases
\begin{equation}\label{hitchinbaseglobaldef}
\mathcal{B} =\bigoplus_k \pi_* \mathcal{L}'_k
\end{equation}
is a vector bundle\footnote{The astute reader will note that this is a generalization of the procedure developed in \S\ref{directimages} for the case of $\overline{\mathcal{M}}_{0,4}$. The twist \eqref{globallprimedef} is a direct generalization of \eqref{substituteL}. } over $\overline{\mathcal{M}}_{0,n}$.

The generalization to higher genus is straightforward. The Deligne-Mumford compactification, $\overline{\mathcal{M}}_{g,n}$, now contains boundary components corresponding to both separating and non-separating nodes. At the former, $C$ is a reducible curve $C=C_S \cup C_{S^\vee}$, where $C_S$ has genus $g_{C_S}$ and contains a subset $S\subset\{p_1,p_2,\dots,p_n\}$ of the marked points\footnote{For stability, $S$ must contain at least two points  when $g_{C_S}=0$. When $g_{C_S}>0$, there's no condition on the number of points in $S$. Similarly, for $g-g_{C_S}>0$, there's no condition on the number of points in $S^\vee$.}
and $C_{S^\vee}$ has genus $g-g_{C_S}$ and contains the complementary set of marked points.
%For the purpose of defining the line bundle $\mathcal{L}'_k$, we restrict to the subset of separating nodes for which $g_{C_S}=0$.
We define 
\begin{equation}\label{nksdef}
n_k^S\coloneqq\max\Bigl(0,k-1- g_{C_S}(2k-1)-\sum_{p_i\in S}(k-\chi_k^{(i)})\Bigr)
\end{equation}
such that, when $C_S$ is smooth, $h^1(C_S, \mathcal{L}_k)= n_k^S$. For $g_{C_S} >0$ and $C_S$ smooth, we are in the stable range, where $h^1(C_S, \mathcal{L}_k)= n_k^S=0$. Similarly as we saw in \S\ref{24_hitchin_system_on_an_irreducible_nodal_curve}, when $C$ is smooth except for a non-separating node,  we also have $h^1( \mathcal{L}_k)=0$. So we might as well restrict ourselves to the case where $C_S$ has genus-0.
For each such $S$, let $\mathcal{C}_S$ be the Cartier divisor in $\overline{\mathcal{C}}$ corresponding to $C_S$. As in the genus-0 case, $\mathcal{L}'_k$ is defined by \eqref{nksdef}, \eqref{globallprimedef}. In Appendix \ref{Proof}, we prove
(see also  \cite{Balasubramanian:global}):

\begin{theorem}\label{hitchinglobalthm}
The family of Hitchin bases
\begin{equation}
\mathcal{B} \coloneqq \bigoplus_{k=2}^N \pi_* \mathcal{L}'_k
\end{equation}
is a vector bundle over $\overline{\mathcal{M}}_{g,n}$.
\end{theorem}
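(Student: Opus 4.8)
The plan is to reduce the statement to a fibrewise cohomology–vanishing assertion and then verify that assertion stratum by stratum on $\overline{\mathcal{M}}_{g,n}$. First one checks that $\mathcal{L}'_k$, as defined by \eqref{nksdef} and \eqref{globallprimedef}, is a genuine line bundle on the universal curve $\overline{\mathcal{C}}$: the relative dualizing sheaf $K_{\overline{\mathcal{C}}/\overline{\mathcal{M}}_{g,n}}$ is invertible on a family of at-worst-nodal curves, the marked-point sections $p_i$ are Cartier, and each boundary divisor $\mathcal{C}_S$ is Cartier because $\overline{\mathcal{C}}$ is smooth as a Deligne–Mumford stack. Since $\pi$ is proper and flat and $\overline{\mathcal{M}}_{g,n}$ is reduced, the cohomology-and-base-change theorem applies: if $H^1(C_t,\mathcal{L}'_k\vert_{C_t})=0$ for every point $t\in\overline{\mathcal{M}}_{g,n}$, then $R^1\pi_*\mathcal{L}'_k=0$ and $\pi_*\mathcal{L}'_k$ is locally free, with formation commuting with base change and $(\pi_*\mathcal{L}'_k)\otimes k(t)\cong H^0(C_t,\mathcal{L}'_k\vert_{C_t})$. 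Because $\chi(C_t,\mathcal{L}'_k\vert_{C_t})$ is locally constant in a flat family and $\overline{\mathcal{M}}_{g,n}$ is connected, this simultaneously identifies the rank of $\pi_*\mathcal{L}'_k$ with $b_k$ of \eqref{gradedBdims}. (Note that on $\overline{\mathcal{M}}_{0,4}$, which is one-dimensional, torsion-freeness already sufficed; here the base is higher-dimensional, so the fibrewise vanishing is genuinely what is needed.) Thus the theorem reduces to the single claim that $H^1(C_t,\mathcal{L}'_k\vert_{C_t})=0$ for all $t$ and all $k=2,\dots,N$.

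Over the interior $\mathcal{M}_{g,n}$ this is immediate: there $\mathcal{L}'_k\vert_{C_t}=L_k$ of \eqref{Lkdef}, whose degree lies in the stable range for $g\geq 1$ (and for $g=0$ is $\geq -1$ by the OK hypothesis), so $H^1$ vanishes. For $t$ in the boundary, $C_t$ is a stable nodal curve and I would argue by induction on the number of nodes, using the normalization sequence $0\to\mathcal{L}'_k\vert_{C_t}\to\nu_*\nu^*(\mathcal{L}'_k\vert_{C_t})\to\bigoplus_{\mathrm{nodes}}\mathbb{C}\to 0$ (equivalently the component-restriction sequences of the type \eqref{mySES}). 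From its long exact sequence, $H^1(C_t,\mathcal{L}'_k\vert_{C_t})$ vanishes provided $H^1$ vanishes on each irreducible component and the evaluation map on $H^0$ surjects onto the one-dimensional spaces at the nodes. The single-node cases are precisely what \S\ref{hitchinreducible} (separating node) and \S\ref{24_hitchin_system_on_an_irreducible_nodal_curve} (non-separating node) establish: the twists $n_k^S$ are chosen so that the restriction of $\mathcal{L}'_k$ to every component has degree $\geq -1$ on a genus-zero component and $\geq 2g-1$ on a positive-genus one, which forces componentwise $H^1=0$, while Proposition \ref{centervanishinglemma} controls the residue/evaluation map. For the inductive step one normalizes one node: a non-separating node, or a separating node both of whose sides have positive genus, carries trivial twist by \eqref{nksdef}, so the local term is harmless and one drops to a curve with fewer nodes; a separating node with a genus-zero side is the only place where a nontrivial twist sits, and there the \emph{tree-move} analysis of \S\ref{further_degen} shows that the data $(\chi_k,b_k^C)$ at the node depend only on the puncture data on that side, not on how it degenerates, so the vanishing verified on a single smooth genus-zero component propagates to the whole tree of $\mathbb{P}^1$s and, via \eqref{final}, to an arbitrary number of marked points.

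\textbf{Main obstacle.} The hard part is this last point: the \emph{global consistency} of the boundary twists. One must show that on a deeply degenerate fibre the line bundle $\mathcal{O}\bigl(-\sum_S n_k^S\mathcal{C}_S\bigr)$ restricts, component by component, to exactly the degree profile that kills $H^1$ everywhere simultaneously --- i.e.\ that twisting down along the divisor of one genus-zero component neither creates higher cohomology on a neighbouring component nor spoils surjectivity of the node-evaluation maps, and that the integers $n_k^S$ from \eqref{nksdef} are mutually compatible across all strata (this compatibility is what makes $\mathcal{L}'_k$ a well-defined global object and not merely a stratum-by-stratum recipe). Equivalently, one needs that for \emph{any} stable nodal curve the only components on which the naive $\mathcal{L}_k$ acquires higher cohomology are genus-zero components, and that a single twist by the corresponding boundary divisor cures this without side effects --- which rests on the dual-graph combinatorics encapsulated in \eqref{chitree}, \eqref{bctree}, \eqref{final}. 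The induction sketched above organizes the argument, but the genus-zero tree estimate is where the real work lies; carrying it out in full is the content of Appendix \ref{Proof} (and \cite{Balasubramanian:global}).
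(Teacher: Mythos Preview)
Your reduction to fibrewise $H^1$-vanishing via cohomology and base change is exactly the paper's starting point, and your identification of the genus-zero tree combinatorics as the crux is accurate. But the inductive mechanism you propose differs from the paper's in a way worth noting.

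You organize the induction via the full normalization sequence and argue for componentwise $H^1=0$ together with surjectivity of the node-evaluation maps, invoking the tree-move analysis of \S\ref{further_degen} for compatibility across deep strata. The paper instead introduces the notion of a \emph{blighted} irreducible component (genus zero, a single branch of a node, and $\sum_{p_i\in S_a}\pi^{(i)}_k<k-1$) and proceeds by a \emph{pruning} induction: one shows that on a blighted leaf $C_a$ the twist makes $\mathcal{L}'_k\vert_{C_a}\cong\mathcal{O}_{\mathbb{P}^1}(-1)$, so both $H^0$ and $H^1$ vanish there; the restriction sequence $0\to\mathcal{L}'_k\otimes\mathcal{O}_{\check{C}}(-p)\to\mathcal{L}'_k\to\mathcal{L}'_k\otimes\mathcal{O}_{C_a}\to 0$ then identifies $H^*(C,\mathcal{L}'_k)$ with $H^*(\check{C},\check{\mathcal{L}}'_k)$, where $\check{C}$ is the curve with $C_a$ removed and a new marked point inserted at the former node with $\chi_k=1+n^a_k$; crucially, the paper checks that $\check{\mathcal{L}}'_k$ is exactly the line bundle the recipe \eqref{globallprimedef} would produce for $\check{C}$, so one may iterate. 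The base case (no blighted components) is then dispatched in one line by Serre duality on the nodal curve: $h^1(\mathcal{L}_k)=h^0(K_C\otimes\mathcal{L}_k^{-1})$, and the latter vanishes because $K_C\otimes\mathcal{L}_k^{-1}$ has non-positive degree on every component and strictly negative total degree by the OK hypothesis.

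The paper's route buys two simplifications over yours. First, the Serre-duality step in the base case avoids altogether the need to verify surjectivity of node-evaluation maps --- which in your scheme is a genuine issue, since on a blighted component $H^0$ vanishes and surjectivity must come entirely from the neighbouring side. Second, the pruning operation packages your ``global consistency of twists'' worry into a single bookkeeping check (that $\mathcal{L}'_k(-p)\vert_{\check{C}}=\check{\mathcal{L}}'_k$), rather than appealing to the tree-move invariance of \S\ref{further_degen}. Your approach is not wrong, but it is more laborious and leaves the evaluation-map step implicit.
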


\section{Flavour Considerations and the Higgs Branch}
\label{flavour}

We have argued for the labeling of restricted nodes by a pair of the form $(O,H)$, where $O$ is a (Hitchin) nilpotent orbit in $\mathfrak{j}=\mathfrak{sl}(N)$ and the center parameters associated to the node are the Casimirs of the compact simple Lie group $H\subset SU(N)$. From a purely Hitchin system/Higgs bundles standpoint, it is not obvious why the center parameters should be the Casimirs of a subgroup (in particular a \textit{simple} subgroup) $H\subset J$. It is equally unclear which simple subgroups can arise in this way.

The purpose of this section is to use Higgs branch considerations to answer the question ``What are the possible pairs $(O,H)$ that could conceivably arise in the nodal limit?'' In  \S\ref{classifying}, we will use the results on Higgs bundles on nodal curves that we obtained in \S\ref{Hitchinnodal} to study the same question from a purely Coulomb branch point of view.

In the physics, the meaning of $H$ is clear. In the nodal limit, the SCFT becomes a weakly coupled gauge theory, with gauge group $H$, where the symmetry that is gauged is an $H$ subgroup of the flavour symmetry group of the SCFT associated to the normalization of the nodal curve $C$. This flavour symmetry group is the group of hyperK\"ahler isometries of the Higgs branch of that SCFT. 

\subsection{Flavour symmetry}\label{flavorconsiderations}

To see the flavour symmetry, it is more natural to consider the ``Nahm'' nilpotent orbit $O_N$, rather than the ``Hitchin'' nilpotent orbit $O_H$ which is the residue of $\Phi(z)$.

In type-A, the Nahm partition is just the transpose of the Hitchin partition. For types D and E, the map between $O_N$ and $O_H$ is more nontrivial \cite{Chacaltana:2012zy,Balasubramanian:2014jca}. In type-A, all nilpotent orbits are special (the map between Nahm and Hitchin orbits, given by the transpose, is an involution), beyond type-A there are non-special orbits.  When $O_N$ is non-special, the image on the Hitchin side is a pair $(O_H,\Gamma)$ where $O_H$ is special and the finite group $\Gamma$ is a subgroup, $\Gamma\subset\overline{A}(O_H)$, of Lusztig's canonical quotient \cite{MR742472} of the group $A(O_H)$ defined in \eqref{AOdef}. That is, the Hitchin system data is enriched by a finite group associated to each non-special Nahm orbit at a puncture.

For this reason, the physicists prefer to label punctures by their Nahm nilpotent orbit. Since, in this paper, we have restricted ourselves to type-A, where all nilpotent orbits are special, we have labeled punctures by their Hitchin nilpotent orbit.

By Jacobson-Morozov, each choice of nilpotent, X, corresponds to a distinguished triple --- an embedding $\rho\colon \mathfrak{sl}(2)\hookrightarrow \mathfrak{j}$. A nilpotent orbit $O\ni X$ thus corresponds to an $\mathfrak{sl}(2)$ embedding up to conjugacy. For $A_{N-1}$, such an embedding up-to-conjugacy defines a 
partition of $N$ which we denote by $[q_1^{n_1},q_2^{n_2},\dots]$ where $q_1\geq q_2\geq\dots$ and $\sum_i n_i q_i =N$. Associated to this partition is an $N$-box Young diagram with $n_1$ columns of height $q_1$, $n_2$ columns of height $q_2$, etc.

Let $\mathfrak{f}\subset \mathfrak{j}$ be the subalgebra that centralizes the embedding $\rho$ (i.e.~fixes every element of $im(\rho)\subset\mathfrak{j}$) corresponding to a given \emph{Nahm} nilpotent, $X\in O_N$. We denote the flavour symmetry $F$ to be the corresponding compact Lie group (which depends only on the orbit, $O_N$). For $\mathfrak{j}=A_{N-1}$,
\begin{displaymath}
F=S\left(\prod_i U(n_i)\right)
\end{displaymath}

For each simple subgroup, $F_i=SU(n_i)\subset F$, we assign a level $k_i\in \mathbb{N}$, as follows. Decompose $\mathfrak{j}$ under $\mathfrak{sl}(2)\times \mathfrak{f}_i$ as $\mathfrak{j}= \oplus_n V_n\otimes R_{i,n}$ where $V_n$ is the $n$-dimensional irrep of $\mathfrak{sl}(2)$ and $R_{i,n}$ is a (possibly reducible) representation of $\mathfrak{f}_i$. Let $l_{i,n}$ be the index\footnote{For a highest-weight representation $R$ with highest weight $\lambda$, $l(R) = \frac{\dim(R)}{\dim(adj)} (\lambda,\lambda+2\delta)$, where $\delta$ is the Weyl vector. The normalization is such that the defining representation of $SU(N)$ has $l=1$.} of $R_{i,n}$. Then

\begin{displaymath}
k_i= \sum_n l_{i,n}
\end{displaymath}
For $A_{N-1}$, the level $k_i$ of $SU(n_i)$ is just twice the total number of boxes in the first $q_i$ rows of the Young diagram. For example consider $O_N=[3^2,2^2]$. This has $F=S(U(2)^2)\sim SU(2)^2\times U(1)$ (where we ignore a discrete quotient), and the two $SU(2)$s have levels $k=20$ and $k=16$.

For later reference, we define the complementary level, for any simple $H\subset F_i$

\begin{equation}
k'_i(H) = 4h^\vee(H) - k_i
\end{equation}
where $h^\vee$ is the dual Coxeter number.

\subsection{Restricted nodes}\label{restricted_nodes}

A restricted node is a pair, $(O,H)$, consisting of a Hitchin nilpotent orbit and a simple ($SU(l)$ or $Sp(l)$) subgroup, $H\subset F_i\subset F_O$, of its flavour symmetry group. We could denote the standard node as $([N],SU(N))$ but will refrain, so as not to unduly clutter the notation.

With the preliminaries of \S\ref{flavorconsiderations},  we can state the algorithm (which was first used in \cite{Chacaltana:2010ks}, though the differences of notation would make that hard to discern) for determining the restricted nodes that can appear.

For any given $O$, the allowed $H$s are those simple subgroups $H\subset F_i$ for which the complementary level is non-negative\footnote{ $k(H)$ and $k'(H)$ are the levels of the current algebras for the $H$-flavour symmetry of the SCFTs associated to $C_R$ and $C_L$, respectively. The vanishing of the $\beta$-function for $H$ requires $k(H)+ k'(H) - 4h^\vee(H) = 0$, where $k(H)$ and $k'(H)$ are the contributions to the $\beta$-function from the ``matter'' sectors while $-4h^\vee$ is the contribution from the vector multiplet for gauge group $H$. Unitarity of the SCFTs requires $k(H),k'(H)\geq 0$.}, $k'(H) \geq 0$. 

For example, given $O=[N]$, the allowed $H$s are

\begin{displaymath}
H=
\begin{cases}[1.5]
SU(l),&2N\geq2l\geq N\\
Sp(l),&N\geq2l\geq N-2
\end{cases}
\end{displaymath}

More generally, $H$ can arise either as
\begin{itemize}
\item an $SU(l)$ or $Sp(l)$ subgroup of the $SU(n)$ associated to a Nahm partition of the form $[...,1^n]$, with $l$ large enough so that $k'(H)\geq 0$ or
\item the $SU(n)$ associated to the Nahm partition $[2^n]$.
\end{itemize}
By Theorem \ref{ohtheorem}(C), the latter case does not occur in the untwisted $A_{2n-1}$ theory. It does, however arise in the collision of punctures from the twisted sector \cite{Chacaltana:2012ch}\footnote{\label{SOodd}The astute reader might object that there appear to be two more possibilities for $(O,H_{k'})$ that are not on this list. You might think that $([2n],SO(2n-1)_0)$ (for $J=SU(2n)$)  or $([2n-1],SO(2n-1)_2)$ (for $J=SU(2n-1)$) are allowed by the $k'(H)\geq 0$ condition. As we shall prove in Prop \ref{alternatingproposition} of \S\ref{classifying}, there's a unique case where the Casimirs $\vec{b}^C=(1,0,1,0,\dots,1)$ arise at the node: namely, when $C_L$ contains two marked points with Hitchin partitions $[2^n],[2^n]$. But that uniformly leads to the theory on $C_L$ being two hypermultiplets in the defining representation of $Sp(n)$  --- yielding $(O,H_{k'})=([2n],Sp(n)_4)$. There's no collision of punctures that yields $\vec{b}^C=(1,0,1,0,\dots,1)$ and an \emph{empty} theory on $C_L$. For $([2n-1],SO(2n-1)_2)$, the story is even simpler. For $n>3$, there's a lower bound on the level of an $SO(2n-1)$ current algebra in a unitary $\mathcal{N}=2$ SCFT. This lower bound is $k'\geq 4$ and is saturated by a free hypermultiplet in the vector representation. So there's no candidate for the theory on $C_L$.}.

In the former case, let the Hitchin partition be $O=[p_1,p_2,\dots]$, with $p_1-p_2=m$. Then
\begin{equation}\label{row1}
H=
\begin{cases}
SU(l),&2m\geq2l\geq p_1\\
Sp(l),&m\geq2l\geq p_1-2
\end{cases}
\end{equation}
In particular, for \eqref{row1} to have solutions, we must have $p_1\geq 2\max(p_2,1)$.

The next concept we need to introduced is the partial-ordering on the set of nilpotent orbits, induced by orbit-closure. This ordering is typically captured by the Hasse diagram. Here is the Hasse diagram for $A_5$:

\begin{center}
\begin{tikzpicture}
\node[style={armygreen}] at (0,0) (a) {$[6]$};
\node[style={armygreen}] at (0,1) (b) {$[5,1]$};
\node[style={armygreen}] at (0,2) (c) {$[4,2]$};
\node[style={armygreen}] at (-1,3) (d1) {$[4,1^2]$};
\node[style={armygreen}] at (1,3) (d2) {$[3^2]^\star$};
\node at (0,4) (e) {$[3,2,1]$};
\node[style={armygreen}] at (-1,5) (f1) {$[3,1^3]$};
\node at (1,5) (f2) {$[2^3]$};
\node at (0,6) (g) {$[2^2,1^2]$};
\node at (0,7) (h) {$[2,1^4]$};
\node at (0,8) (i) {$[1^6]$};
\draw[-] (a) to (b);
\draw[-] (b) to (c);
\draw[-] (c) to (d1);
\draw[-] (c) to (d2);
\draw[-] (d1) to (e);
\draw[-] (d2) to (e);
\draw[-] (e) to (f1);
\draw[-] (e) to (f2);
\draw[-] (f1) to (g);
\draw[-] (f2) to (g);
\draw[-] (g) to (h);
\draw[-] (h) to (i);
\end{tikzpicture}
\end{center}
where we have denoted in green the possible $O$s, whose flavour symmetry group admits a subgroup $H$ satisfying $k'(H) \geq 0$. 

This leads to 13 possible\footnote{In the table, $(\pi_O)_k= k-(\chi_O)_k$ as usual. 
$\pi'_k = 2k-1-\pi_k -b^C_k$ and the twisting $n_k=\max(\pi'_k-k,0)$. $(n_h,n_v)$ are the contributions from the branch of the node to the effective number of hypermultiplets and the effective number of vector multiplets for the SCFT associated to $C_L$. Note, for instance, that $n_h$ depends only on $O$ and not on $H$.} restricted nodes for $A_5$:

\begin{center}
\begin{tabular}{c|c|c|l}
$(O,H_{k'})$&$\vec{\pi}_O$&$\vec{\pi}'_{O}$&$(n_h,n_v)$\\
\hline
$([6],{SU(5)}_8)$&(1, 2, 3, 4, 5)&(1, 2, 3, 4, 6)&(140, 136)\\
$([6],{Sp(3)}_4)$&(1, 2, 3, 4, 5)&(1, 3, 3, 5, 5)&(140, 139)\\
$([6],{SU(4)}_4)$&(1, 2, 3, 4, 5)&(1, 2, 3, 5, 6)&(140, 145)\\
$([6],{Sp(2)}_0)$&(1, 2, 3, 4, 5)&(1, 3, 3, 5, 6)&(140, 150)\\
$([6],{SU(3)}_0)$&(1, 2, 3, 4, 5)&(1, 2, 4, 5, 6)&(140, 152)\\
$([5,1],{SU(4)}_6)$&(1, 2, 3, 4, 4)&(1, 2, 3, 5, 7)&(156, 156)\\
$([5,1],{Sp(2)}_2)$&(1, 2, 3, 4, 4)&(1, 3, 3, 5, 7)&(156, 161)\\
$([5,1],{SU(3)}_2)$&(1, 2, 3, 4, 4)&(1, 2, 4, 5, 7)&(156, 163)\\
$([4,2],{SU(2)}_0)$&(1, 2, 3, 3, 4)&(1, 3, 4, 6, 7)&(168, 177)\\
$^\star([3^2],{SU(3)}_0)$&(1, 2, 2, 3, 4)&(1, 2, 5, 6, 7)&(176, 179)\\
$([4,1^2],{SU(3)}_4)$&(1, 2, 3, 3, 3)&(1, 2, 4, 6, 8)&(180, 183)\\
$([4,1^2],{SU(2)}_0)$&(1, 2, 3, 3, 3)&(1, 3, 4, 6, 8)&(180, 188)\\
$([3,1^3],{SU(2)}_2)$&(1, 2, 2, 2, 2)&(1, 3, 5, 7, 9)&(210, 215)\\
\end{tabular}
\end{center}

Which restricted node occurs at a given degeneration of $C$ can now be summarized by the following algorithm.

\hypertarget{algorithm}{}\begin{enumerate}
\item Let $\vec{\pi}_O$ be the vector of pole orders, corresponding to the Hitchin nilpotent orbit $O$ (recall that these are related to the $\vec{\chi}_O$ by $(\pi_O)_k=k-(\chi_O)_k$). For the regular Hitchin nilpotent, $\vec{\pi}_{[N]}=(1,2,3,\dots,N-1)$ for $k=2,3,\dots,N$.

\item Consider a separating node, where punctures $O_1,\dots O_n$ appear on the (genus-$0$) curve on the left. Form the vector $\vec{\pi} = \sum_{i=1}^n \vec{\pi}_{O_i}$.

\item Among the allowed $(O,H)$, find the \emph{largest} $O$ (the one lowest on the Hasse diagram) such that

\begin{enumerate}%
\item $\vec{\pi}-\vec{\pi}_O$ has only non-negative entries.
\item $H\subset F_{O}$ is the highest-rank simple subgroup of the flavour symmetry of $O$ whose independent Casimirs correspond to a subset of the positive entries of $\vec{\pi}-\vec{\pi}_O$. For a generic curve (and collection of punctures) on the right, these are the center parameters.
\item If $\vec{\pi}-\vec{\pi}_{[N]}$ has all positive entries, then the node is the standard node.
\end{enumerate}
\end{enumerate}
With one exception, this pair $(O,H)$ is the restricted node. The exception occurs when both the left and right components of the nodal curve impose such a restriction. In type-A, this occurs when $C$ has genus-0 in the $A_{2n-1}$ theory. If the Hitchin nilpotents at the punctures consist of two copies of $[2^n]$ and some number of ``hook partitions" of the form $[l_i+1, 1^{2n-l_i-1}]$ with
\begin{equation}
l_i\geq 1,\qquad \sum_i l_i = n-1
\end{equation}
then, when $C$ degenerates in such a way that the two $[2^n]$ punctures end up on one component (w.l.o.g, $C_R$) and the hook partitions end up on the other component ($C_L$), then $C_L$  picks out the restricted node $([2n],SU(2n-1))$ and $C_R$ picks out the restricted node $([2n],Sp(n))$. The actual restricted node is $([2n],Sp(n-1))$. I.e., $H=H_L\cap H_R$. We saw an example of this in \hyperlink{example_3}{example 3} of \S\ref{examples}.

For the purposes of this paper, we will refer to the above algorithm to find $(O,H)$ as the \textit{Higgs-Coulomb} algorithm due to the fact that the algorithm actually involves properties of \textit{both} the Higgs and Coulomb branches. We choose this terminology primarily to distinguish it from the discussion in \S\ref{classifying} where the Higgs branch does not play any role. There is an alternative proposal to find $H$ purely from the Higgs branch geometry due to \cite{Gaiotto:2011xs}. We comment on the relationship between our work and this proposal in \S\ref{compatibility}.

Note that, since the Hasse ordering is only a partial-ordering, one might worry that the procedure for selecting $O$ is ambiguous. For instance, in the $A_5$ theory, might we be unable to choose between $[4,1^2]$ and $[3^2]$? Fortunately, this ambiguity never arises. In the case at hand, if $\vec{\pi}-\vec{\pi}_{[4,1^2]}$ \emph{and} $\vec{\pi}-\vec{\pi}_{[3^2]}$ are both non-negative, then so is $\vec{\pi}-\vec{\pi}_{[4,2]}$. More generally, if both $\vec{\pi}-\vec{\pi}_{O_a}$ and $\vec{\pi}-\vec{\pi}_{O_b}$ are non-negative, then either one orbit lies in the closure of the other, or both orbits lie in the closure of $O_c$, which also satisfies $\vec{\pi}-\vec{\pi}_{O_c}$ non-negative.

Carrying out this procedure for $\mathfrak{j}=A_5$, we find that every pair $(O,H)$ except $([3^2],SU(3)_0)$ is indeed realized at the restricted node for some set of defects $O_i$ on the left. We will see in \S\ref{classifying} that one can actually give an \textit{a priori} explanation for why this pair does not occur from a Coulomb branch perspective.

Fairly obviously, this procedure yields the same result for $O$ as the twisting procedure described in \S\ref{nilpnode} (replacing $\mathcal{L}_{k,R}$ by $\mathcal{L}'_{k,R}= \mathcal{L}_{k,R}\otimes \mathcal{O}(-n_k p)$, where $n_k= \max(k-1-\pi_k, 0)$). The procedure for arriving at $H$ seems rather divorced from the cohomological computation of the center parameters in \S\ref{hitchinreducible}. We shall see in \S\ref{compatibility} that these two rather different looking approaches also yield the same answer for $H$. 

\section{Classifying Restricted Nodes}\label{classifying}

In this section, we would like to classify the possible restricted nodes from a purely Hitchin system perspective. Our starting point will be the conditions derived in \S\ref{Hitchinnodal} (equivalently, any of the conditions in Proposition \ref{centervanishinglemma}) for a reduction in center parameters \eqref{bck0cond} and the condition \eqref{h1equation} for the occurrence of non-zero $h^1(\mathcal{L}_k\otimes \mathcal{O}_{C_{L}})$.  

We now state our main results as a theorem and then prove them below.

\begin{theorem}
\label{ohtheorem}
In any tame $SL_N$ Hitchin system that is \textit{OK} in the sense of \S\ref{13_hitchin_systems_corresponding_to_good_theories}, the following statements are true for every restricted node : 

\begin{enumerate}[label=(\Alph*)]
\item The integers $\chi_k = 1 + n_k$ are the vanishing orders corresponding to a nilpotent orbit $O$ in $\mathfrak{sl}(N)$. We call this the Hitchin orbit at the node. 
\item The graded dimension of the center $B_C$ is of the form $b_k^C = (1,1,1, \ldots,1,0,0...,0) $ (or) $b_k^C = (1,0,1,0,1,0\ldots,0,1,0,0,\ldots,0)$, where $k=2,3,4,\ldots N$.
\item Recall that $b^C = \sum_k b_k^C$ and $[p_i]$ are the parts of the Hitchin orbit $O$. The allowed orbits obey $p_1 > 2p_2$ and furthermore, we always have $  p_1  - 2 \leq 2 b^C \leq 2(p_1 - p_2 - 1) $.
\end{enumerate}
\end{theorem}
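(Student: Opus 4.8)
The plan is to reduce all three assertions to combinatorics of the vanishing‑order vectors. By \S\ref{nilpnode}--\S\ref{further_degen} we may assume the restricted node has a (possibly degenerate) genus‑zero side $C_L$ carrying punctures with Hitchin orbits $O_1,\dots,O_t$, $t\geq 2$; writing $\pi_k^{(i)}\coloneqq k-\chi_k^{(i)}$ for the pole orders of $O_i$ and $\Pi_k\coloneqq\sum_{i=1}^t\pi_k^{(i)}$, the formulas \eqref{finalchi}, \eqref{finalbc} become
\begin{equation*}
\pi_k^{O}\coloneqq k-\chi_k=\min(k-1,\Pi_k),\qquad b_k^C=\begin{cases}1,&\Pi_k\geq k,\\ 0,&\Pi_k\leq k-1,\end{cases}\qquad k=2,\dots,N,
\end{equation*}
the two‑sided case \eqref{finalbcmode} being handled by symmetry. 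Since iterating the pairwise rule $(\vec\pi',\vec\pi'')\mapsto\min(\vec\pi_{[N]},\vec\pi'+\vec\pi'')$ reproduces $\min(\vec\pi_{[N]},\sum_i\vec\pi^{(i)})$ (as one sees by induction from \eqref{chitree}), it is enough to prove (A) for $t=2$ and then induct on $t$. The whole argument will hinge on one elementary observation: \emph{if the $(k-1)$st and $k$th boxes of a Young diagram filled column by column lie in the same column $c$, then $c\leq k/2$}, since the $c-1$ columns preceding it are each at least two boxes tall. In particular $\Delta\Pi_k\geq 2$ forces this for both summand orbits, so $\pi_{k-1}^{(i)}\geq \tfrac{k}{2}-1$ for those $i$, and hence $\Pi_{k-1}\geq k-2$.

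For (A): this last remark shows that $\{k:\Pi_k\geq k-1\}$ is an interval $\{1,\dots,m\}$ (once $\Pi_k$ drops below $k-1$ it cannot jump by $2$ and so stays below), and that $\Pi_m=\Pi_{m+1}=m-1$; thus $\vec\pi^O$ equals $\vec\pi_{[N]}$ on $\{2,\dots,m\}$ and equals $\vec\Pi$ on $\{m{+}1,\dots,N\}$, so the first part of $O$ is $p_1=m$, and a short case check on whether $\Delta\Pi_k\leq 1$ gives $0\leq\Delta\pi_k^O\leq 1$, i.e.\ $\chi$ is nondecreasing with unit steps. The remaining point is that the column heights $p_1\geq p_2\geq\cdots$ of $O$ are nonincreasing: for $k>m$ one has $\Delta\Pi_k\leq 1$, so exactly one $O_i$ carries each unit step, and $p_j$ ($j\geq 2$) is the length of the $j$th maximal run of such unit steps; one must show these run‑lengths are nonincreasing and bounded by $m$. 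This comes down to how the column boundaries of the $O_i$ (each nonincreasingly spaced) interleave inside $\vec\Pi$ past $m$, and I expect this interleaving bookkeeping to be the main obstacle of the proof.

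For (B): the implication $\Pi_k\geq k\Rightarrow\Pi_{k-1}\geq k-2$ shows $\{k:b_k^C=1\}$ is an initial segment apart from isolated indices where $\Pi_{k-1}=k-2$ sits exactly at the threshold. I would then rule out the local patterns $(b_k^C,b_{k+1}^C,b_{k+2}^C,b_{k+3}^C)=(1,0,0,1)$ and $(1,1,0,1)$: in each, $b_{k+3}^C=1$ together with the intervening zero forces $\Delta\Pi_{k+3}\geq 2$ and pins the value of $\Pi_{k+2}$, while applying the observation to the $\geq 2$ orbits responsible for $\Delta\Pi_{k+3}\geq 2$---none of which can have caused the preceding flat step of $\vec\Pi$---forces $\Pi_{k+2}$ to exceed its pinned value, a contradiction. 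Hence two consecutive zeros are absorbing and a lone zero makes the pattern alternate back to $k=2$, giving exactly the dichotomy $\vec{b}^C=(1,\dots,1,0,\dots,0)$ or $\vec{b}^C=(1,0,1,0,\dots,1,0,\dots,0)$; the second case is the content of Proposition~\ref{alternatingproposition}.

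For (C): by (B), $b^C=\#\{k\in\{2,\dots,p_1\}:\Pi_k\geq k\}$, and on $\{2,\dots,p_1\}$ we have $\Pi_k\in\{k-1,k,k+1,\dots\}$, so $\#\{k\in\{2,\dots,p_1\}:\Pi_k=k-1\}=p_1-1-b^C$. Then $2b^C\geq p_1-2$ is equivalent to $\#\{\Pi_k=k-1\}\leq b^C+1$, and $2b^C\leq 2(p_1-p_2-1)$ to $p_2\leq\#\{\Pi_k=k-1\}$; I would prove both by matching, step by step, the second‑column block of $O$ (which forces $\Delta\Pi_j=1$ for the $p_2-1$ consecutive values $j=p_1{+}2,\dots,p_1{+}p_2$, with $\Delta\Pi_{p_1+1}=0$) against the first‑column block, tracking which $O_i$ supplies each unit step and again invoking the nonincreasing column heights of the $O_i$. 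The same accounting exhibits, in some $O_i$, a column of height $\geq p_2$ beginning at box $p_1{+}1$ whose predecessors already contain at least $2p_2$ boxes, which yields $p_1>2p_2$. As in (A), the only genuinely delicate point is this column‑boundary bookkeeping; with it in hand, every assertion reduces to a counting identity.
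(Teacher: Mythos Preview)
Your proposal is essentially correct and follows the same strategy as the paper: reduce to the combinatorial formulas \eqref{finalchi}--\eqref{finalbcmode} and extract the constraints on $(O,H)$ from the interplay between the column structure of the $O_i$ and the thresholds $\Pi_k\gtrless k-1,\,k$. Your ``elementary observation'' that boxes $k-1,k$ in the same column force that column to have index $\leq k/2$ is precisely the content of the paper's $k_{\min}=p_i(i-1)+1$ (equation \eqref{kminimal}), and both proofs hinge on it.

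The organization differs slightly. The paper first reduces to $\deg(D_L)=2$ (two punctures) via the tree moves of \S\ref{further_degen}, then does an explicit case split on $b_3^C$ for (B) and a direct multiplicity analysis for (C); you instead work with general $t$ and argue by excluding the local patterns $(1,0,0,1)$ and $(1,1,0,1)$. Your pattern-exclusion argument does go through---for instance, in $(1,0,0,1)$ the flat step at $k{+}1$ forces $\Delta\pi_{k+1}^{(i)}=0$ for all $i$, while $\Delta\Pi_{k+3}\geq 2$ produces at least two orbits whose column at $k{+}2$ has height $\geq 2$, and for all but at most one of them this propagates back to force $\Delta\pi_{k+1}^{(i)}=1$, a contradiction---but your write-up (``forces $\Pi_{k+2}$ to exceed its pinned value'') mislocates where the contradiction actually lands. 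The places you flag as ``the main obstacle'' (the nonincreasing column heights of $O$ in (A), the column-boundary bookkeeping in (C)) are exactly where the paper does its detailed work in \S\ref{proof1aDL2} and \S\ref{proof1d}; you have correctly identified the hard steps but not carried them out.
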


In the process of proving Th \ref{ohtheorem}, we will show that the following useful proposition also holds. 

\begin{prop}
\label{alternatingproposition}
There is a unique choice of marked points on $C_L$ for which the graded dimension of the center $B_C$ is of the form $b_k^C = (1,0,1,0,\ldots,0,1), k = 2,3,\ldots N$ and $N$ is even. The corresponding $C_L$ is a $\mathbb{P}^1$ with $\text{deg}(D_L)=2$ and the residues of the Higgs field at the two marked points live in the nilpotent conjugacy class $[2^{N/2}]$.
\end{prop}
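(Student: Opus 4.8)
The plan is to reduce the statement entirely to the closed-form expression \eqref{finalbc} for the graded dimension of the space of center parameters contributed by a genus-zero component $C_L$, which by the results of \S\ref{further_degen} depends only on the Hitchin orbits $O_i$ at the marked points of $C_L$ and not on its complex structure. Writing $\pi^{(i)}_k=k-\chi^{(i)}_k$ for the pole orders of $O_i$, \eqref{finalbc} gives $b^C_k=1$ precisely when $\sum_i\pi^{(i)}_k\geq k$ and $b^C_k=0$ precisely when $\sum_i\pi^{(i)}_k\leq k-1$. Since $N$ is even, the target pattern $\vec b^C=(1,0,1,0,\dots,0,1)$ for $k=2,3,\dots,N$ means $b^C_k=1$ for every even $k$ and $b^C_k=0$ for every odd $k$, which is equivalent to the system
\[
\sum_i\pi^{(i)}_{k}\geq k\quad(k\ \text{even}),\qquad \sum_i\pi^{(i)}_{k}\leq k-1\quad(k\ \text{odd}),\qquad 2\leq k\leq N.
\]
So the task becomes: classify the collections of (nonzero) nilpotent orbits on $C_L$ solving this system; stability will then pin down $C_L$ itself.

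First I would record the relevant combinatorics of the $\chi_k$. Recall from \S\ref{21_nilpotent_orbits_and_spectral_curves_local_story_on_a_smooth_curve} that $\chi_k$ is the index of the column containing the $k$-th box when the Young diagram of $O$ (columns ordered by decreasing height) is filled column-by-column; hence $\chi_{k+1}-\chi_k\in\{0,1\}$, so $\pi_k$ is non-decreasing in $k$. For any nonzero nilpotent the first column has height $\geq2$, so $\pi^{(i)}_2=1$ for every $i$, and since $\sum_i\pi^{(i)}_2=\deg(D_L)$ the condition $b^C_2=1$ forces $\deg(D_L)\geq2$. Assuming $N\geq4$ (for $N=2$ the pattern is the standard one), the condition $b^C_3=0$ belongs to the system; as $\pi^{(i)}_3\geq\pi^{(i)}_2=1$ for every $i$, $\sum_i\pi^{(i)}_3\leq2$ forces $\deg(D_L)\leq2$, hence $\deg(D_L)=2$ and $\pi^{(1)}_3=\pi^{(2)}_3=1$. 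But $\pi^{(i)}_3=1$ means box $3$ lies in the second column, i.e. the largest part of $O_i$ equals $2$, so $O_i=[2^{a_i},1^{N-2a_i}]$ with $1\leq a_i\leq N/2$; a direct count of the filling then gives $\pi^{(i)}_k=\min(\lfloor k/2\rfloor,a_i)$.

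To conclude, I would feed this back into the even-$k$ inequalities. For $k=2m$ we need $\min(m,a_1)+\min(m,a_2)\geq2m$; since each term is $\leq m$, this forces $a_1\geq m$ and $a_2\geq m$, and taking $m=N/2$ together with $a_i\leq N/2$ gives $a_1=a_2=N/2$, i.e. both residues lie in $[2^{N/2}]$. Conversely, for $a_1=a_2=N/2$ one has $\pi^{(i)}_k=\lfloor k/2\rfloor$, hence $\sum_i\pi^{(i)}_k=2\lfloor k/2\rfloor$, which is $k$ for $k$ even and $k-1$ for $k$ odd --- exactly the required pattern. This establishes both existence and uniqueness of the puncture data; and since a stable genus-zero curve carrying only two marked points and a single node cannot degenerate further, $C_L=\mathbb{P}^1$ with $\deg(D_L)=2$.

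The step I expect to require the most care is the exceptional situation of \S\ref{restricted_nodes}, where the opposite component $C_R$ is also genus-zero and $b^C_k$ is governed by \eqref{finalbcmode} --- the minimum of the contributions from the two sides --- so a priori neither side's puncture data alone need produce the pattern. The resolution is that for even $k$ the value $b^C_k=1$ forces \emph{both} contributions to equal $1$, while for odd $k$ the minimum vanishes; running the argument above on whichever side has contribution $1$ on all even $k$ (and hence satisfies the odd-$k$ bound $\sum\pi^{(i)}_3\leq2$ on that side) shows that that side must carry exactly two $[2^{N/2}]$ punctures and by itself realizes the full alternating pattern. This is the component the Proposition refers to, and it is the form in which the Proposition is invoked in footnote \ref{SOodd} and in the proof of Theorem \ref{ohtheorem}.
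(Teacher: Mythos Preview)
Your main argument is correct and takes a genuinely different route from the paper. The paper embeds the proof of Proposition \ref{alternatingproposition} inside Case 2 of the proof of Theorem \ref{ohtheorem}(B): it first \emph{assumes} $\deg(D_L)=2$ (as part of the overall proof strategy of \S\ref{proofstrategy}), then argues iteratively that $b_3^C=0$ forces $p_1^{(1)}=p_1^{(2)}=2$, that $b_4^C=1$ forces $p_2^{(1)}=p_2^{(2)}=2$, and so on, peeling off one part of each partition at a time; the extension to $\deg(D_L)>2$ is then handled separately via the tree argument of \S\ref{furtherdegeneration}. You instead use the closed form \eqref{finalbc} directly, extract $\deg(D_L)=2$ in one step from the $k=3$ inequality, and then finish with the clean observation $\pi^{(i)}_k=\min(\lfloor k/2\rfloor,a_i)$ for partitions of the form $[2^{a_i},1^{N-2a_i}]$. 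This is more self-contained and arguably more transparent; the paper's inductive approach buys it the simultaneous proof of Theorem \ref{ohtheorem}(B), of which the Proposition is a by-product.

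There is one slip in your treatment of the two-sided case. You write that ``whichever side has contribution $1$ on all even $k$ \dots\ hence satisfies the odd-$k$ bound $\sum\pi^{(i)}_3\leq2$ on that side.'' That implication is false (e.g.\ three regular punctures give contribution $1$ for every $k$, including $k=3$). The correct logic is: since $b^C_k=1$ for even $k$ in \eqref{finalbcmode}, \emph{both} sides satisfy the even-$k$ inequalities; since $b^C_3=0$, \emph{at least one} side satisfies $\sum\pi^{(i)}_3\leq2$. Run your argument on that side (it uses only $k=3$ and the even-$k$ inequalities), conclude it carries exactly two $[2^{N/2}]$ punctures, and then your converse check shows that side alone realizes the full alternating pattern. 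With this reordering your two-sided paragraph goes through.
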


\subsection{Proof Strategy}
\label{proofstrategy}

First, recall from \S\ref{24_hitchin_system_on_an_irreducible_nodal_curve} that every non-separating node is standard. So, we only need to consider separating nodes to prove Th \ref{ohtheorem}. Let us denote a separating node to be \textit{one sided} if either $h^0(\mathcal{L}_k\otimes \mathcal{O}_{C_{L}}) = 0$ for some values of $k$ or $h^0(\mathcal{L}_{k'}\otimes \mathcal{O}_{C_{R}})=0$ for some values of $k'$ but not both. We will denote a separating node to be \textit{two sided} if both $h^0(\mathcal{L}_k\otimes \mathcal{O}_{C_{L}}) = 0$ and $h^0(\mathcal{L}_{k'}\otimes \mathcal{O}_{C_{R}})=0$ for some values of $(k,k')$ with $k \neq k'$.\footnote{Such a possibility for some $k=k'$ is ruled out by the fact that we are only considering nodal degenerations of OK Hitchin systems.}  It will turn out that a vast majority of restricted nodes arise from one sided separating nodes. For one sided separating nodes, the problem of classifying the allowed nodal degenerations is symmetric between the left and right. So, without loss of generality, we will assume that a one sided separating node has $h^0(\mathcal{L}_k\otimes \mathcal{O}_{C_{L}}) = 0$ and $h^0(\mathcal{L}_k\otimes \mathcal{O}_{C_{R}}) > 0$.

We can now outline our proof strategy. 
\begin{enumerate}
\item First, we prove Th \ref{ohtheorem} for one sided separating nodes with $\deg(D_L)=2$.
\item As a second step, we extend the proof to the $\deg(D_L) > 2$ cases by appealing to \S\ref{furtherdegeneration} and reducing the the problem to the $\deg(D_L) = 2$ case. 
\item In the third and final step, we treat the two sided separating nodes. 
\end{enumerate}

\subsection{One sided nodes with $\deg(D_L)=2$}

\subsubsection{Proof of \ref{ohtheorem}(A)}
\label{proof1aDL2}

Let us begin by recalling (from \S\ref{21_nilpotent_orbits_and_spectral_curves_local_story_on_a_smooth_curve}) how to obtain the vanishing orders $\chi_k$ associated to a Hitchin orbit whose partition label is $[p_i]$. We represent the Hitchin partition as a Young diagram by using its parts as column sizes. We fill the first column with `1's, the second column with `2's and so on. We then write down the numbers in the diagram column by column, dropping the leading `1'. The string of numbers so obtained are the vanishing orders. For the purposes of this section, we will include the leading `1' (corresponding to $k=1$) and form a vector $\vec{\chi}$ whose entries are $(\chi_1,\chi_2,\ldots \chi_N)$. We choose this convention, which is at variance with the choice in \S\ref{flavour} since it simplifies some of the combinatorial formulae. With this choice, the multiplicity of any integer $i$ in $\vec{\chi}$ is given by the part $p_i$. 

It also follows that any non-decreasing sequence of integers $\vec{\chi}$ with multiplicities $p_i$ obeying the conditions
\begin{enumerate}[label=(\alph*)]
\item $p_1 \geq p_2 \geq p_3 \geq p_4, \ldots $ 
\item $\sum p_i = N-1$
\end{enumerate}
will correspond to the vanishing orders of some nilpotent orbit.  

In what follows, we will need the following combinatorial fact. If there is some entry $i$ in $\vec{\chi}$ occurring with multiplicity $p_i$, then conditions $(a),(b)$ imply that the minimum value of $k$ for which $\chi_k = i$ is given by 
\begin{equation}
k_{min} = p_i(i-1) + 1.
\label{kminimal}
\end{equation}
And the minimal $k$ is achieved when $p_1 = p_2 = p_3 = \cdots = p_{i-1} = p_i$.

In order to prove a statement like \textit{Th \ref{ohtheorem}(A)}, we need to show that the vanishing orders $\vec{\chi}_{nodal}$ of $\mathcal{L}'_{k,R}$ at the node $p$ (obtained from Eq \eqref{modifiedvanishing}) correspond to the vanishing orders arising from \textit{some} nilpotent orbit. Let us take multiplicity of an integer $i$ in $\vec{\chi}_{nodal}$ to be some $\alpha_i$. We need to show that the sequence $\vec{\chi}_{nodal}$ from Eq \eqref{modifiedvanishing} is non-decreasing and the multiplicities $\alpha_i$ obey the conditions $(a),(b)$ above. Henceforth, we will drop the `\textit{nodal}' subscript and refer to the sequence of nodal vanishing orders as just $\vec{\chi}$.

%Henceforth, we will drop the subscript and refer to the nodal vanishing order on $C_R$ as just $\vec{\chi}$.

%we also need to understand what kinds of sequences can occur as vnishing orders associated to nilpotent orbits. Let $\alpha_i$ be the multiplicity of the integer $i$ in the sequence $\vec{\chi}$. If $\vec{\chi}$ is associated to a nilpotent orbit, then it arises from \text{some} Young diagram. This implies that the following are true about the mutiplicities $\alpha_i$ :

In this section, we want to restrict to the cases with $\textit{deg}(D_L)=2$. Let the residues of the Higgs field $\phi$ at the smooth points of $D_L$ live in conjugacy classes corresponding to partitions $[p_i]^{(1)}$ and $[p_i]^{(2)}$. Let the corresponding vanishing orders be $\chi_k^{(1)},\chi_k^{(2)}$ respectively. The component $C_R$ is taken to be sufficiently generic that we have  $h^0(\mathcal{L}_k\otimes \mathcal{O}_{C_{R}}) > 0$ for all $k$.  

\begin{center}
\begin{tikzpicture}
\begin{scope}[scale=.75]
\node[scale=1.2] at (-.8,1){$[p_i]^{(1)}$};
\node[scale=1.2] at (-.8,-1){$[p_i]^{(2)}$};
\draw[thick] (3.0,1.0) arc (120:60:2);
\draw[thick] (3.2,1.112) arc (-120:-58.5:1.5);
\draw[thick] (5.0,-1.0) arc (120:60:2);
\draw[thick] (5.2,-0.888) arc (-120:-58.5:1.5);
%\node[scale=1.5] at (5,1){$[N]_{(E_3)}$};
%\node[scale=1.5] at (5,-1){$[N]_{(E_4)}$};
\draw[thick] (0,0.0) circle [radius=2.5];
\draw[thick] (5,0.0) circle [radius=2.5];
\end{scope}
\end{tikzpicture}
\end{center}

Let us now assume that this degeneration leads to a restricted node with $n_k^L > 0$. This implies that $h^1(\mathcal{L}_k\otimes \mathcal{O}_{C_{L}}) > 0$ for some values of $k$.  From \eqref{h1equation}, we know that this happens iff $d_k^L\leq -2$. This reduces to (upon using \eqref{virtdimLR})  

\begin{equation}
\chi_k^{(1)} + \chi_k^{(2)} > k+1.
\label{h12puncture}
\end{equation}
The modified vanishing orders $\chi_k = 1 + n_k$ from Eq \eqref{modifiedvanishing} reduce to
\begin{equation}
\chi_k = \max(1, \chi_k^{(1)} + \chi_k^{(2)} - k ).
\label{defnofchik}
\end{equation}
We immediately see that $\chi_2 = \chi_3 = 1$. So, $\vec{\chi}$ is a sequence of the form
\begin{equation}
\vec{\chi} = (1,1,1,\ldots).
\end{equation}

\subsubsection*{Proof that $\vec{\chi}$ is non-decreasing }
%We will now prove that $\vec{\chi}$ is a non-decreasing sequence of integers. 

Let us assume that there exists some $k$ for which 
\begin{equation}
\label{assumedecreasing}
\chi_k > \chi_{k+1} > 1 .
\end{equation}
This is possible iff
\begin{equation}
\begin{split}
\chi_{k+1}^{(1)} &= \chi_k^{(1)} \\
\chi_{k+1}^{(2)} &= \chi_k^{(2)}
\end{split}
\end{equation}
Using \eqref{kminimal}, this implies that
\begin{equation}
\begin{split}
k &\geq 2 (\chi_k^{(1)}  - 1) + 1 \\ 
k &\geq 2 (\chi_k^{(2)}  - 1) + 1
\end{split}
\end{equation}
Adding the two inequalities, we get $\chi_k^{(1)} + \chi_k^{(2)} - k \leq 1$ which contradicts our original assumption in \eqref{assumedecreasing} that $\chi_k > 1$. It follows that $\vec{\chi}$ is a non-decreasing sequence. 

\subsubsection*{Proof of conditions $(a),(b)$}

We still need to understand the multiplicities $\alpha_i$ of the integers occurring in $\vec{\chi}$. What are the allowed values of $\alpha_i$? We will study the possibilities case by case.

The cases $\alpha_1=N$ and $\alpha_1=N-1$ can clearly occur and they correspond to the regular orbit $[N]$ and the sub-regular orbit $[N-1,1]$ respectively. The case $\alpha_1=3,\alpha_2=1$ can also occur. This occurs for the extreme case where $\vec{\chi}^{\,(1),(2)}_k = (1,2,3,4,\ldots,N-1)$. This corresponds to the case where the two Hitchin orbits at $E_{1,2}$ are both the minimal nilpotent orbits and the corresponding $\vec{\chi} = (1,1,2,3,4,\ldots N-2)$. This is nothing but the vanishing orders for the Hitchin nilpotent $[3,1^{N-3}]$. This is the smallest nilpotent orbit that can occur at the node. The generalization to $\alpha_1 > 2, \alpha_2 =1$ is straightforward and leads to the vanishing orders for a hook type Hitchin orbit $[\alpha_1 + 1, 1^{N- \alpha_1 - 1}]$. This covers all instances with $\alpha_2 = 0,1$. We clearly get vanishing orders $\chi_k$ corresponding to a nilpotent orbit in each of these cases. 

Let us now turn to cases with $\alpha_2 > 1$. If `2' occurs exactly at the locations $k=l,l+1,l+2,l+\alpha_2$ in $\vec{\chi}$, then it follows that there is a repeated entry in either $\vec{\chi}^{\,(1)}$ or $\vec{\chi}^{\,(2)}$ (but not both) at the locations $k=l,l+1,l+2,\ldots,l+\alpha_2$. Let this repeated entry be the integer $i$ and let $\vec{\chi}^{\,(1)}$ contain these repeated entries. Now, $\max(\chi^{(2)}_l) = l-1$. So, if $\chi_l=2$, then $ \geq 3$. From Eq \eqref{kminimal}, we have $l_{min} = \alpha_2 (i-1)$. This implies that $\alpha_1 \geq l_{min} - 2 = \alpha_2 (i-1) - 2 $. When $i \geq 3, \alpha_2 > 1$, we see that $\alpha_1 > \alpha_2$. So, we have shown that $\vec{\chi}$ always satisfies condition (b). 

We are finally left with checking condition (a) for $\vec{\chi}$ in cases with $\alpha_2 > 1$. Let us say that (a) is violated. In other words, we have $\alpha_{j} > \alpha_{i}$ for some $j > i \geq 2$. Now, examining the possibilities (we omit the details), one can see that this is possible only if $\vec{\chi}^{\,(1)}$ or $\vec{\chi}^{\,(2)}$ itself were to violate condition (a). So, we arrive at a contradiction. Hence (a) always holds for $\chi_k$. This completes the proof of \textit{\ref{ohtheorem}(A)}.

%Now, in \S\ref{hitchinreducible}, we saw that all restricted nodes arise from $n$ punctured spheres with $n < N$. For $n >4$, one can go all the way to the $(n-3)$-fold intersection of the boundary in $\overline{\mathcal{M}}_{g,n}$ which is a point and inductively use the result for a four punctured sphere. So, it follows that Th 1(A) holds for any $n$ punctured sphere with at least two regular nilpotent residues. The condition that there be atleast two regular nilpotent orbits too can be relaxed as long as we still have a global Hitchin system that is \textit{OK}. 

%\subsection{Proof of 1(B)}
%\label{proof1b}

%From Th 1(A), it follows that $\chi_k$ always corresponds to a particular Hitchin nilpotent orbit which we denote by $O$. Now, consider the partion label for the corresponding Nahm orbit (it is the transpose of the Hitchin partition). Let $m_i$ be the multiplicity of the part $i$ in the Nahm partition. Then, it follows immediately that $\alpha_1 - \alpha_2 = m_1$. We noted in the proof of 1(A) that $\alpha_1 - \alpha_2 > 1$ in $\chi_k$. This implies that $m_1 > 1$. This condition is a simple way to \textit {rule out} the possibility of certain nilpotents at the node. 

\subsubsection{Proof of \ref{ohtheorem}(B)}
\label{proof1c}

Having deduced the Hitchin nilpotent $O$ at the node, we now turn to constraining the possible non-zero center parameters. We have already seen that $b_k^C = 0,1$ in \S\ref{Hitchinnodal}. Can any string of `0's and `1's occur as values of $b_k^C$? It turns out that the answer is \textit{no}. The allowed set of values are quite tightly constrained. Let us recall from Eq \eqref{bck0cond} the condition for a reduction in the number of center parameters at a reducible node :
\begin{equation}
\begin{split}
\sum_{D_L} \chi_k & >  k (deg(D_L) - 1)   \Longrightarrow b_k^C = 0  \label{center-left}\\
\sum_{D_R} \chi_k & >  k (deg(D_R) - 1)  \Longrightarrow b_k^C = 0 
\end{split}
\end{equation}

Specializing to the case of a one sided node with $deg(D_L)=2$, we get
\begin{equation}
 \chi_k^{(1)} + \chi_k^{(2)} >  k   \Longleftrightarrow b_k^C = 0 .
 \label{dkc02puncture}
\end{equation}

First, note that $\chi_2^{(1),(2)} = 1$ and hence $b_2^C = 1$ always. Next, we consider the two possibilities : (1) $b_3^C=1$ or (2) $b_3^C=0$.

\textit{Case 1 : $b_3^C=1$}

Let the first occurrence of a reduction in center parameters be for $k=l > 3$. This implies that we have $b_{l-2}^C=b_{l-1}^C=1$ and $b_l^C=0$,
\begin{equation}
\vec{b}^C = (0,1,\ldots,1,1,0,\ldots),
\end{equation}
where we have defined $b_1^C=0$.

This translates to the following conditions on $\chi_{k}^{(1),(2)}$ : 
\begin{equation}
\begin{split}
\chi_{l-1}^{(1)} &>  \chi_{l-2}^{(1)} \text{ (or) } \chi_{l-1}^{(2)} >  \chi_{l-2}^{(2)} \\
\chi_{l}^{(1)} &>  \chi_{l-1}^{(1)} \text{ (and) } \chi_{l}^{(2)} >  \chi_{l-1}^{(2)}
\end{split}
\end{equation}

In other words, if a repeated part were to occur in this piece of $\chi_{k}^{(1)},\chi_k^{(2)}$, then it can only occur for one among them and only for the entries at $k=l-2,l-1$. Let the vanishing orders without a repeated part be $\chi_k^{(1)} = (\ldots, i, i+1,i+2,\ldots )$ where we have set $\chi_{l-2}^{(1)}=i$. This clearly shows that the multiplicity $p_{i+1} = 1$ and it follows (from condition (a) in \S\ref{proof1aDL2}) that $p_{j} = 1$ for every $j \geq i+1$. 

Now, let us further \textit{assume} that $b_k^C = 1$ for some $k > l$. Let the smallest such $k$ be $m$. This implies that $b_{m-1}^C=0,b_m^C=1$ :

\begin{equation}
\vec{b}^C = (0,1,\ldots,1,1,0,\ldots,0,1,\ldots).
\end{equation}

From \eqref{dkc02puncture}, it follows that \textit{both} $\chi_{k}^{(1)},\chi_k^{(2)}$ have repeated parts at $k=m-1,k=m$. This, however, contradicts the statement that $p_{j}=1$ in $\chi_k^{(1)}$ for all $j \geq i+1$. So, our assumption that $b_k^C = 1$ for some $k > l$ is wrong. So, the only possible set of $b_k^C$ with $b_3^C=1$ are given by 

\begin{equation}
\vec{b}^C = (0,1,1,\ldots,1,0,0,\ldots,0)
\end{equation}

\textit{Case 2 : $b_3^C=0$}

Let us turn to the case where $b_3^C=0$. From \eqref{dkc02puncture}, this implies that $\chi_3^{(1)}= \chi_3^{(2)} = 2$. This forces $p_1^{(1)} = p_1^{(2)}=2$. From condition (b) in \S\ref{proof1aDL2}, we then have $p_2^{(1),(2)} = 1 \text{ (or) } 2$. Even if one among $p_2^{(1)}, p_2^{(2)}$ equals 1, then a simple calculation shows that $b_k^C=0$ for all $k > 3$. It remains to consider the case where both $p_2^{(1)}= p_2^{(2)}=2$. In this case, both $\chi_{k}^{(1)}$ and $\chi_{k}^{(2)}$ have the following form
\begin{equation}
\begin{split}
\vec{\chi}^{\,(1)} &= (1,1,2,2,3,\ldots), \\
\vec{\chi}^{\,(2)} &= (1,1,2,2,3,\ldots).
\end{split}
\end{equation}
and we have 
\begin{equation}
\vec{b}^C = (0,1,0,1,0,\ldots)
\end{equation}

Let us now assume that $b_6^C=0$, then at least one among $p_3^{(1)}$ or $p_3^{(2)}$ is equal to 1. Let us take $p_3^{(1)}=1$. This forces $p_j^{(1)} = 1$ for all $j \geq 3$. From this, it follows that $b_k^C=0$ for all $k \geq 6$.

On the other hand,  if $d_6^C=1$, then $p_3^{(1)}=p_3^{(1)}=2$. The multiplicities can't be bigger since $p_j \leq \alpha_2$ for $j > 2$ and we are in the case where $p_2^{(1)} = p_2^{(2)} =2$. So, we have

\begin{equation}
\begin{split}
\label{alternating-center1}
\vec{\chi}^{\,(1)} &= (1,1,2,2,3,3,4\ldots), \\
\vec{\chi}^{\,(2)} &= (1,1,2,2,3,3,4\ldots),\\
\vec{b}^C &= (0,1,0,1,0,1,0,\ldots).
\end{split}
\end{equation}

We then repeat the same procedure for the two cases $b_{8,C} = 0,1$ and find that $b_k^C$ is always of the form

\begin{equation}
\vec{b}^C = (0,1,0,1,0,1,0,\ldots,0,1,0,0,\ldots,0)
\end{equation}

This proves \textit{\ref{ohtheorem}(B)}.

Furthermore, we see that in each of these cases, $d_k^L$ is always of the form 

\begin{equation}
\label{alternating-degrees}
\begin{split}
d_k^L &= k - \chi^{\,(2)}_k - \chi^{\,(1)}_k  \\ 
 &= -\tfrac{1}{2}\bigl(1-(-1)^k\bigr) .
 \end{split}
\end{equation}

Consequently, we have that 
\begin{equation}
\label{alternatingLk}
\deg(\mathcal{L}_{k,L}) = -1- \tfrac{1}{2}\bigl(1-(-1)^k\bigr).
\end{equation}

We also see that we have $\vec{b}^C = (0,1,0,1,\ldots,0,1)$ for some $N = 2n$ iff
\begin{eqnarray}
\vec{\chi}^{(1)} &=& (1,1,2,2,3,3,\ldots n,n), \\
\vec{\chi}^{(2)} &=& (1,1,2,2,3,3,\ldots n,n).
\end{eqnarray}

This corresponds to the case where the two residues at the marked points on $C_L$ live in the nilpotent conjugacy class $[2^n]$. This proves Prop \ref{alternatingproposition}.

\subsubsection{Proof of \ref{ohtheorem}(C)}
\label{proof1d}

Given a non-regular nilpotent $O$ at the node, we would now like to understand the constraints on the allowed non-zero values of $b_k^C$. Let the vanishing orders at the node be of the form
\begin{equation}
\vec{\chi} = (1,1,1,\ldots,2,2,\ldots,2,3,\ldots).
\end{equation}

Let the (Hitchin) partition label of $O$ be $[p_i]$. Having proven Th \ref{ohtheorem}(A), we know that $\alpha_i=p_i$. 

We would like to arrive at a constraint on the total number of center parameters $b^C$ given a partition nilpotent $O$ at the node. Now, $b_k^C=0$ for every $k$ such that $\chi_{k}>1$ since the condition for $h^1(\mathcal{L}_k\otimes \mathcal{O}_{C_{L}})> 0$ \eqref{h12puncture} is stronger than the condition for the vanishing of $b_k^C$ \eqref{dkc02puncture}. So, it is straightforward that $b^C \leq p_1 - 1$. But, we will see that there is actually a stronger upper bound and that there is also a lower bound on $b^C$.

In the proof of Th \ref{ohtheorem}(B), we saw that $p_1 - p_2 > 1$ for every allowed nilpotent at the node. Assume $p_2 > 1$. We then have 

\begin{equation}
\chi_{k}^{(1)} + \chi_{k}^{(2)} = k + 2  \hspace{0.1in }\forall \hspace{0.1in } k \in (p_1, p_1 + p_2 - 1 ).
\end{equation}

This is possible iff one among $\chi_{k}^{(1)}, \chi_{k}^{(2)} $ were to also have repeated parts for this range of values of $k$. We take $\chi_{k}^{(1)}$ to have the repeated part $j$. Since $\max(\chi_{k}^{(2)})=k-1$, we see that $j \geq 3$. It follows that 

\begin{equation}
\label{p2constraint}
p_{j-1}^{(1)} \geq p_j^{(1)} = p_2. 
\end{equation}

This imposes a strong constraint on the allowed $O$ at the node. To understand this constraint, let us ask what are the allowed values of $p_1$ given that \eqref{p2constraint} is always true. The smallest possible value of $p_1$ will occur when we have $j=3$ and $p_{1}^{(1)} = p_2$ and $p_{2}^{(1)} = p_2 - 1$. In this case, we have $p_1 = 2p_2 - 1$. More generally, we always have
\begin{equation}
p_1 > 2 p_2. 
\end{equation} 

This condition rules out nilpotent orbits with Hitchin partition of type $[n^2]$ as possible nodal nilpotents. Recall that these nilpotents occurred in the list of allowed nodal nilpotents in analysis using constraints on flavor central charges in \S\ref{restricted_nodes}. If one had carried out an exhaustive enumeration of nodal nilpotents by brute force calculation for any fixed $N$, one would have seen that orbits of type $[(N/2)^2]$ do not occur. But, as we just showed, it is possible to give a general proof for all $N$ using the nodal Hitchin system.

Let us now try to understand the range of values of $k$ for which we could have $b_k^C=0$ but $h^1(\mathcal{L}_k\otimes \mathcal{O}_{C_{L}})=0$. This would be the range of values of $k$ for which the following relation holds :
\begin{equation}
\chi_{k}^{(1)} + \chi_{k}^{(2)} = k + 1.
\label{bkc2noh1}
\end{equation}

The smallest value of $b^C$ is reached when this range is the largest. And this range would be the largest when $p_{1}^{(1)}$ is minimal and $\chi_{k}^{(2)} = k-1$. In this case, every instance where $\chi_k = 1$ and $b_k^C=0$ arises from $\chi_{k}^{(1)} = 2$ and the only non-zero center parameters exist for those $k$ where $\chi_{k}^{(1)} = 1$. In this scenario, we have

\begin{equation}
p_{1}^{(1)} + p_{2}^{(1)} = p_1 - 1.
\end{equation} 

We have already argued that $\min(p_{1}^{(1)}) = p_{2}^{(1)}$. When this minimal value of $\alpha_{1}^{(1)}$ is reached, we have $p_{1}^{(1)} =  \lfloor (p_1 - 1)/2 \rfloor $. It follows that $b^C \geq \lfloor (p_1/2 - 1) \rfloor $.

 At the other end, the maximum allowed value of $b^C$ is reached when the range of $k$ for which \eqref{bkc2noh1} holds is the smallest. Note that we have already assumed that $\chi_{k}^{(1)} + \chi_{k}^{(2)} = k + 2$ for $p_2$ values of $k$ and that the corresponding repeated entry in $\chi_k^{(1)}$ is $j$ with $p_j^{(1)} = p_2$. This implies that $\alpha_{j-1} \geq \alpha_2 = p_2$. And in the range of $k$ for which $\chi_k^{(1)}=j-1$, if $\chi_k^{(2)}$ were to also have repeated parts, then it is easy to see that  the resulting $\vec{\chi}$ would have a decreasing sub-sequence. This violates Th \ref{ohtheorem}(A). So, $\chi_k^{(2)}$ does not have repeated parts for these values of $k$. If take $p_{j-1} = p_2$, then this implies that \eqref{bkc2noh1} holds exactly for $k \in (p_1 - p_2  , p_1 - 1 )$ and we have $b^C = p_1 - p_2 - 1$.  To summarize, we have

\begin{equation}
\lfloor (p_1)/2 \rfloor - 1 \leq b^C \leq p_1 - p_2 -1.
\end{equation}

which is equivalent to

\begin{equation}
p_1 - 2 \leq 2 b^C \leq 2(p_1 - p_2 -1).
\end{equation}

This proves \textit{Th \ref{ohtheorem}} for one sided nodes with $\deg(D_L)=2$.

\subsection{One sided nodes with $\deg(D_L)>2$}

When we have a one sided node with $\deg(D_L)>2$, we first pick a stable degeneration of $C_L$ which will be a tree of $\mathbb{P}^1$s with the end points of the tree being $\mathbb{P}^1$s with two punctures. We now normalize each node in $C_L$ starting from the the ends of tree. At every step, we use \textit{Th \ref{ohtheorem}(A)} for the $\deg(D_L)=2$ case and insert the nodal nilpotent on the right component of the normalized curve. 

The final result of this procedure will be a one sided node with $\deg(D_L)>2$ with some partitions $[p_i]^{(1)},[p_i]^{(2)}$.  The arguments in \S\ref{furtherdegeneration} ensure that the resulting $[p_i]^{(1)},[p_i]^{(2)}$ do not depend on the choice of the stable degeneration or the subsequent choice of the order in which we choose to do the normalizations. As a final step, we can now use \textit{Th \ref{ohtheorem}(A)-(C)} for $\deg(D_L)=2$.  This extends \textit{Th \ref{ohtheorem}} to all one sided nodes with $\deg(D_L)>2$.

\subsection{Two sided nodes} 

We now take up the case of two sided nodes. These are nodes in which the constraints on the space of center parameters arise from both the left and right components of a separating node. In other words, we have $h^0(\mathcal{L}_k\otimes \mathcal{O}_{C_{L}})=0$ and $h^0(\mathcal{L}_{k'}\otimes \mathcal{O}_{C_{R}})=0$ for some values of $(k,k')$ with $k \neq k'$. This can occur only in cases where both $C_L$ and $C_R$ are $\mathbb{P}^1$s. Let the first constraint on the left component occur at $k=l$ and the first constraint on the right occur at $k'=l'$.  Let us take $l' > l$. This implies that on $C_L$, we have  
\begin{equation}
\begin{split}
h^0(\mathcal{L}_l\otimes \mathcal{O}_{C_{L}})&=0  \\ 
h^0(\mathcal{L}_{l'}\otimes \mathcal{O}_{C_{L}}) &> 0 
\end{split}
\end{equation}

By our proof of Th \ref{ohtheorem}(B) for one sided nodes \eqref{alternating-degrees}, this is possible iff $d_k^L = (0,-1,0,-1,\ldots)$. And since we are only considering nodal degenerations arising from OK theories, this implies that $d_k^R \leq -1$ in any two sided node. So, we have $n_k^{L,R}=0 $ for all values of $k$. So, it follows that any two sided node necessarily has $\mathcal{O} = [2n]$.

The allowed set of center parameters arise from a combining the constraints from the left and the right components. On the left component, we have a pattern of constraints that is of the type that leads to a $b^C_k = (1,0,1,0,\ldots,1)$. On the right, we have a pattern of constraints that leads to a $b^C_k = (1,1,1,\dots,1,0)$. It is clear that the combined application of both sets of constraints also leads to a set of non-zero center parameters that obeys the conditions in $Th$ \ref{ohtheorem}(B)-(C).

This completes the proof of \textit{Th} \ref{ohtheorem}.

%\subsubsection*{Further Discussion}

\begin{cor} [] 
The center parameters can be interpreted as the $H$-invariant polynomials for $H = SU(n)$ (or) $H=Sp(n)$ for some $n \leq N$ such that $\text{rank}(H) = b^C$.  Combined with the Hitchin nilpotent $O$, this completes the association of a pair $(O,H)$ for every restricted node where $H$ is always a simple Lie group.
\end{cor}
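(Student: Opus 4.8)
The plan is to read $H$ off directly from the shape of $\vec{b}^C$ furnished by Theorem~\ref{ohtheorem}(B) and to match it against the standard list of degrees of the fundamental invariants of the classical groups. Recall that $b^C=\sum_{k=2}^N b^C_k$ counts the nonzero entries of $\vec{b}^C$, and that the $k$-th center parameter $u_{k;C}$ naturally sits in degree $k$, being (the value of) a coefficient of $w^{N-k}$ in the characteristic polynomial of $\text{Res}_p(\Phi)$ at the node. Thus the graded vector space of center parameters is one-dimensional in exactly those degrees $k$ for which $b^C_k=1$, and zero in the rest.

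In the first alternative of Theorem~\ref{ohtheorem}(B), $\vec{b}^C=(1,1,\dots,1,0,\dots,0)$ with the block of $1$'s occupying degrees $k=2,3,\dots,n$ for some $n\le N$, so that $b^C=n-1$. The algebra $\mathbb{C}[\mathfrak{sl}(n)]^{SU(n)}$ of Casimirs is a polynomial ring freely generated in degrees $2,3,\dots,n$, and $\text{rank}(SU(n))=n-1$; hence the center parameters are matched, degree by degree, with the fundamental invariants of $H=SU(n)$, and $\text{rank}(H)=b^C$. (For the standard node this is simply $H=SU(N)$.)

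In the second alternative, $\vec{b}^C=(1,0,1,0,\dots,1,0,\dots,0)$ with $1$'s in degrees $k=2,4,\dots,2n$, so $b^C=n$ and $2n\le N$. The Casimirs of $\mathfrak{sp}(2n)$ --- the Lie algebra whose compact group is written $Sp(n)$ in this paper, with $\text{rank}(Sp(n))=n$ --- form a polynomial ring freely generated in degrees $2,4,\dots,2n$, so here $H=Sp(n)$ and again $\text{rank}(H)=b^C$. By Theorem~\ref{ohtheorem}(B) these two alternatives are exhaustive, and they overlap only for $b^C=1$, where $SU(2)=Sp(1)$ and there is nothing to decide; in all cases $n\le N$ and $H$ is simple.

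It remains to upgrade this degree-by-degree matching to the assertion that the $u_{k;C}$ genuinely are the $H$-invariant polynomials, which I would do by locating $\text{Res}_p(\Phi)$ --- for generic center parameters --- inside a copy of $\mathfrak{h}\subset\mathfrak{sl}(N)$. In the $SU(n)$ case, the twisting analysis of \S\ref{nilpnode} together with parts (A) and (C) of Theorem~\ref{ohtheorem} exhibits the residue as a generic element of the closure of the sheet through $O$; it is therefore semisimple, with a multiplicity pattern among its eigenvalues pinned by $O$, and one checks the number of free eigenvalues is exactly $n-1=b^C$, so it lies in a traceless $\mathfrak{gl}(n)$-block and the $u_{k;C}$ restrict to its elementary Casimirs. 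In the $Sp(n)$ case the eigenvalues occur in $\pm$ pairs --- precisely the content of Proposition~\ref{alternatingproposition} and the computation \eqref{alternating-degrees} --- so the residue lies in an $\mathfrak{sp}(2n)\subset\mathfrak{sl}(2n)\subset\mathfrak{sl}(N)$ and the $u_{2j;C}$ restrict to the $Sp(2n)$ Casimirs. Combining this $H$ with the orbit $O$ of Theorem~\ref{ohtheorem}(A) then produces a well-defined pair $(O,H)$, with $H$ simple, for every restricted node. I expect this last step to be the only genuinely substantial point: passing from ``the graded dimensions agree with those of $\mathbb{C}[\mathfrak{h}]^H$'' to ``the center parameters are the Casimirs of an honestly embedded $H\subset SU(N)$'', i.e.\ controlling the eigenvalue structure of the residue at the node. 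The $SU$ case is soft once Theorem~\ref{ohtheorem}(A),(C) are in hand, whereas the $\pm$-pairing needed for the $Sp$ case rests essentially on Proposition~\ref{alternatingproposition}; everything else is a dictionary check against the standard tables of fundamental-invariant degrees.
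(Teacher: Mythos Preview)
Your first two paragraphs are essentially the paper's proof: read off the shape of $\vec{b}^C$ from Theorem~\ref{ohtheorem}(B), compare against the table of fundamental-invariant degrees, and identify $H$ as $SU(n)$ or $Sp(n)$ with $\text{rank}(H)=b^C$. That is all the paper does; the corollary's phrase ``can be interpreted as'' is meant at exactly this level of a degree-by-degree dictionary check.

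Your third paragraph goes beyond what the paper proves, and there is one subtlety you should be aware of. The degree pattern $(2,4,\dots,2n)$ is shared by $Sp(n)$ and $SO(2n+1)$ --- they have the same Weyl group of type $B_n$ --- so the dictionary check alone does not single out $Sp(n)$. The paper does not resolve this from the Hitchin side at all: it acknowledges in a footnote that the present argument constrains only $W(H)$ and invokes the flavour-symmetry considerations of \S\ref{flavour} (specifically the discussion ruling out $SO(2n-1)$ via unitarity bounds on the level) to exclude the orthogonal option. Your proposed upgrade via the $\pm$-pairing of the residue eigenvalues would, if carried through, give a purely Coulomb-branch resolution and hence a genuinely stronger result than the paper's; but note that Proposition~\ref{alternatingproposition} and \eqref{alternating-degrees} are established only for the specific collision $([2^{N/2}],[2^{N/2}])$ with the alternating pattern running all the way to $k=N$, so you would still need to supply the eigenvalue-pairing argument for the general case $\vec{b}^C=(1,0,\dots,1,0,0,\dots,0)$ where the alternation stops early. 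The paper explicitly flags a Hitchin-system proof distinguishing $Sp(n)$ from $SO(2n+1)$ as an open question.
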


\textit{Proof} From the allowed possibilities for $b_k^C$ in Th \ref{ohtheorem}(B) and an inspection of the degrees of invariant polynomials in simple Lie algebras (see Appendix \ref{a1_appendix}), it is clear that the center parameters can always be interpreted as being the invariant polynomials for either a $SU(n)$ subgroup or a $Sp(n)$ subgroup\footnote{There is also the possibility that $b_k^C$ could be the invariant polynomials for an $H=SO(2n+1)$. This can be ruled out using the flavour considerations discussed in footnote \ref{SOodd}. In light of this, it is interesting to wonder if one could obtain a stronger version of Theorem \ref{ohtheorem}(B) which directly constrains the group $H$ from a Hitchin system point of view. The present version can be thought of as constraining the Weyl group $W(H)$, which is not sufficient to distinguish $Sp(n)$ from $SO(2n+1)$. } of $SU(N)$.

\begin{remark} []
It is interesting to ask if the converse of Th \ref{ohtheorem}(C) holds. In other words, for a given Lie algebra $\mathfrak{j}=\mathfrak{sl}(N)$, does every pair of $(O,H)$ obeying Th \ref{ohtheorem}(C) actually occur at a restricted node? The physics computations indicate that this is true, but we have not been able to provide a proof for arbitrary $N$.
\end{remark}

\subsection{Restricted Nodes vs Semistable Higgs bundles}
\label{stability}
As we have seen, the nodal limit of a symplectic Hitchin integrable system on $C$ gives rise, in a canonical way, to a symplectic integrable system on the normalization $\tilde{C}$.  Our main interest, in this paper, has been the appearance of restricted nodes. In that case, the symplectic integrable system that arises on one of the components ($C_L$) of the normalization is not a semistable $\mathfrak{j}$-Hitchin system.

But the ``generic" behaviour is that the degeneration leads to a standard node. In that case, the integrable system is a semistable Hitchin system  on $C_L$ with $n+1$ marked points (where $n=\deg(D_L)$), where the $(n+1)^{\text{st}}$ point is the pre-image of the node, and the conjugacy class of the residue there is the regular nilpotent. A necessary and sufficient condition for this is that $h^1(\mathcal{L}_{k,L})=0,\;\forall k$. As we explain in Appendix \ref{OKappendix}, this is also a necessary and sufficient condition for the corresponding irreducible character variety to exist. When $h^1(\mathcal{L}_{k,L})\neq0$ for some $k$, the irreducible character variety does not exist and hence there is no moduli space of semistable Higgs bundles on $C_L$.

%\subsection{Further Discussion}

%(Include discussion of non-trivial compatibility with Sec 6 mentioning facts below). Sec 6 assumed a simple $H$ exists while Sec 7 does not make this assumption. In fact, our Cor 1 above \textit{proves} that a simple $H$ always exists. Sec 6 uses the flavor center charge condition (more of a Higgs branch criterion) while Sec 7 uses just the Coulomb branch (the Hitchin system).  But, Sec 6 shows that the restricted node at the root of a tree (of $\mathbb{P}^1$s) is invariant under difference choices of stable degenerations of the tree since no such degeneration was made in Sec 6. Sec 7, on the other hand, involved choosing a stable degeneration and then iteratively applying the result for $deg(D_L)=2$. But, we believe it should be possible to show that the choice of the stable degeneration does not matter. The compatibility between these two perspectives is, ofcourse, expected since the Higgs and Coulomb branch geometries are controlled by the very same superconformal field theory.

\subsection{Compatibility of Coulomb and Higgs branch considerations}
\label{compatibility}

As we already discussed in \S\ref{restricted_nodes}, the existence of a superconformal $\mathcal{N}=2$ theory associated to a tame Hitchin system allows us to constrain the data $(O,H)$ from either the Coulomb branch or Higgs branch perspectives. While we used the \hyperlink{algorithm}{\textit{Higgs-Coulomb} algorithm} in \S\ref{restricted_nodes}, we  used purely Coulomb branch considerations to constrain $(O,H)$ in the present section. By this seemingly different route, we have arrived at the same set of allowed pairs $(O,H)$.

While the procedure for determining $O$ was essentially the same, the procedure for determining $H$ looks quite different. To see that they are the same, note that the vector $\vec{\pi}-\vec{\pi}_O$ of the  \hyperlink{algorithm}{algorithm} is just equal to the sum of the left and center  Hitchin base dimensions, $\vec{b}^L+\vec{b}^C$.  Proposition \ref{centervanishinglemma} relates $b^C_k=0$ to $h^1(C_L,\mathcal{L}'_{k,L})>0$. Since $C_L$ has genus-0, $b^L_k \coloneqq h^0(C_L,\mathcal{L}'_{k,L}) > 0$ and $h^1(C_L,\mathcal{L}'_{k,L})>0$ are mutually-exclusive. Hence if $b^L_k+b^C_k > 0$, we must have $b^C_k=1$. So $H$, as determined by Theorem \ref{ohtheorem}(B)  really \emph{is} the highest-rank subgroup of $F$ whose Casimirs are a subset of the positive entries of $\vec{\pi}-\vec{\pi}_O$. Of the allowed pairs obtained in \S\ref{restricted_nodes}, only $([n^2],SU(2n)_0)$ was excluded (in the untwisted $A_{2n-1}$ theory) by Theorem \ref{ohtheorem}(C). As already noted it does appear in the twisted version of the theory  \cite{Chacaltana:2012ch}. We also note that the constraints on $\text{rank}(H)$ obtained from \textit{Th} \ref{ohtheorem}(C) is exactly the same as the one obtained in (\ref{row1}) by flavor considerations.

%This non-trivial compatibility can be viewed as further evidence for the existence of an `OPE of co-dimension 2 defects' of the six dimensional $\mathscr{X}[A_{N-1}]$ theory  \cite{Gaiotto:2011xs}. The resulting list of allowed $(O,H)$ could also be used be make precise the gluing axioms for the TQFTs/CFTs that are expected to arise by associating the Higgs/Coulomb branches to the underlying UV curve $C_{g,n}$ \cite{Moore:2011ee}.

In instances where the Higgs branch geometry, including the hyperK\"{a}hler metric, is known, one expects to see the smaller groups $H$ as the subgroups of $J$ that continue to act as isometries of the 4d Higgs branch \cite{Gaiotto:2011xs} appearing on the $C_L$ component of a restricted node. In the particular case of the restricted node arising in the $SU(3),\; N_f=6$ theory, the investigation of this question goes back to the work of \cite{Gaiotto:2008nz}. 

More generally, for every pants-decomposition of C, there is a different realization of the Higgs branch as a hyperK\"ahler quotient. When the corresponding boundary point of $\overline{\mathcal{M}}_{g,n}$ involves only $(3g-3+n)$ standard nodes, the quotient is by $J^{3g-3+n}$. When some of the nodes are restricted nodes $(O,H)$, the quotient is by $H$ rather than $J$ and one of the 3-punctured spheres which meet at the node has an insertion of $O$, rather than the regular nilpotent. Hiraku Nakajima has informed us that he has been able to provide a mathematical proof of the existence of these different realizations of the Higgs branch in certain cases \cite{Nakajima}.

%\subsection{Limits of tame Class S theories}
%
%One can start with one of the superconformal $\mathcal{N}=2$ theories corresponding to the tame Hitchin system with nilpotent residues and then turn on $\mathcal{N}=2$ preserving mass deformations. In theories with a UV Lagrangian, this would correspond to turning on hypermultiplet masses. The effect of non-zero masses on the local geometry of the Hitchin system is that one changes the value of $\text{Res}(\Phi)$ such that it moves in a sheet containing the original nilpotent orbit $O_H$ \cite{Balasubramanian:2018pbp}. One can then take various scaling limits in which some of the mass parameters are taken to be infinitely large and then flow to the IR. Under such limits, we expect to obtain asymptotically free $\mathcal{N}=2$ theories where we now have a vector multiplet coupled to some ``matter theories'' which could include intrinsically strongly coupled sectors. 
%
%Under such limits, it is possible that the nature of the behaviour of the Hitchin system at the punctures is changed (to, say, higher order poles for the Higgs field $\Phi$). But, the possible asymptotically free gauge groups that can emerge in the IR is still controlled by the possible set of weakly coupled gauge groups that could have emerged in the different (generalized) S-duality frames of the original superconformal theory. Since our work provides a classification of the latter, it also constrains the former. 

\bigskip
\section*{Appendices}
\appendix

\section{Proof of Theorem \ref{hitchinglobalthm}}
\label{Proof}

As in the text, $\mathcal{L}_k= K_C ^{\otimes k}\otimes \mathcal{O}(-\sum_{p_i} \pi^{(i)}_k)$, where $\pi^{(i)}_k:=k-\chi^{(i)}_k$ satisfies $1\leq\pi^{(i)}_k\leq k-1$. Our OK condition is that $H^1(C,\mathcal{L}_k)=0$ for $C$ smooth. But we need to consider arbitrary nodal degenerations of $C$.

So let $C$ be a nodal curve with irreducible components $C_a$. Each $C_a$ has geometric genus $g_a$, $t_a\geq 1$ branches of nodes and a set of marked points $S_a\subset\{p_1,p_2,\dots p_n\}$.

We easily compute
\begin{equation}\label{multidegreeL}
\deg(K_C\otimes \mathcal{L}_k^{-1}\otimes \mathcal{O}_{C_a}) = -\bigl[(k-1)(2 g_a-2+t_a) + \sum_{p_i\in S_a} \pi^{(i)}_k\bigr]
\end{equation}
\begin{defn}
We will call a component $C_a$ \emph{blighted} if
\begin{itemize}
\item $g_a=0$
\item $t_a=1$
\item $ \sum_{p_i\in S_a} \pi^{(i)}_k < k-1$
\end{itemize}
For a blighted component $C_a$ define the positive integer
\begin{equation}
n^{a}_k \coloneqq k-1 - \sum_{p_i\in S_a} \pi^{(i)}_k
\end{equation}
\end{defn}

\noindent
From \eqref{multidegreeL},  $\deg(K_C\otimes \mathcal{L}_k^{-1}\otimes \mathcal{O}_{C_a})> 0$ if and only if $C_a$ is blighted.

\begin{lemma}
Restricted to a blighted component,
\begin{equation}
\mathcal{L}'_k\otimes \mathcal{O}_{C_a} = \mathcal{L}_k\otimes \mathcal{O}_{C_a} (n^{a}_k p)
\end{equation}
where $p$ is the node.
\end{lemma}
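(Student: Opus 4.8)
The plan is to reduce the statement to a degree computation on $C_a$. Since $C_a$ has geometric genus $g_a=0$ and carries only a single branch of a node ($t_a=1$), it is a \emph{smooth} rational curve, $C_a\cong\mathbb{P}^1$; hence two line bundles on $C_a$ coincide exactly when they have the same degree, and — tracking divisors rather than just degrees where convenient — it suffices to understand $\mathcal{O}_{C_a}(\mathcal{C}_S)$ for each boundary label $S$. From the global definition \eqref{globallprimedef}, $\mathcal{L}'_k=\mathcal{L}_k\otimes\mathcal{O}\bigl(-\sum_S n^S_k\mathcal{C}_S\bigr)$, so restricting to $C_a$ gives $\mathcal{L}'_k\otimes\mathcal{O}_{C_a}=(\mathcal{L}_k\otimes\mathcal{O}_{C_a})\otimes\bigotimes_S\mathcal{O}_{C_a}(-n^S_k\mathcal{C}_S)$, and the task is to show the extra factor is $\mathcal{O}_{C_a}(n^a_kp)$.

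First I would discard every label $S$ with $[C]\notin\Delta_S$: then $\mathcal{C}_S$ is disjoint from the whole fibre $\pi^{-1}([C])\supset C_a$, so $\mathcal{O}_{C_a}(\mathcal{C}_S)$ is trivial. The remaining labels correspond to separating nodes of $C$, and for these I would use the identity $\mathcal{C}_S+\mathcal{C}_{S^\vee}=\pi^*\Delta_S$ in $\mathrm{Pic}(\overline{\mathcal{C}})$ (the generalization of \eqref{globaltwist}); restricting to $C_a$ kills the right-hand side, because $C_a$ maps to the point $[C]$, so $\mathcal{O}_{C_a}(\mathcal{C}_S)=\mathcal{O}_{C_a}(-\mathcal{C}_{S^\vee})$. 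Now comes the point where $t_a=1$ is essential: the unique node of $C$ lying on $C_a$ is $p$. Consequently, if $S$ is attached to a node $q\neq p$, then $q\notin C_a$; writing $C=C^{S}\cup C^{S^\vee}$ for the two subcurves meeting at $q$, the one of $C^{S},C^{S^\vee}$ that does not contain $C_a$ is then disjoint from $C_a$, so (after possibly renaming $S\leftrightarrow S^\vee$ so that this is $C^{S^\vee}$) we get $\mathcal{O}_{C_a}(\mathcal{C}_{S^\vee})=\mathcal{O}_{C_a}$ and hence $\mathcal{O}_{C_a}(\mathcal{C}_S)=\mathcal{O}_{C_a}$. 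The single surviving label is $S_a$, the one attached to $p$, with genus-$0$ side $C_{S_a}=C_a$: exactly as in \eqref{globaltwistlA}, \eqref{globaltwistlC}, the divisor $\mathcal{C}_{S_a^\vee}$ meets $C_a$ transversally at the single point $p$, so $\mathcal{O}_{C_a}(\mathcal{C}_{S_a})=\mathcal{O}_{C_a}(-\mathcal{C}_{S_a^\vee})=\mathcal{O}_{C_a}(-p)$.

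It then remains to identify the coefficient. By \eqref{nksdef} with $g_{C_{S_a}}=g_a=0$, and using that $C_a$ is blighted (so $\sum_{p_i\in S_a}\pi^{(i)}_k<k-1$), one reads off $n^{S_a}_k=k-1-\sum_{p_i\in S_a}\pi^{(i)}_k=n^a_k$; moreover the ``OK'' hypothesis guarantees $n^{S_a^\vee}_k=0$, so no competing contribution comes from the other labelling of $p$ (this is the same ``at most one side of a node can fail'' observation used repeatedly in \S\ref{hitchinreducible}). Combining, the extra factor is $\bigotimes_S\mathcal{O}_{C_a}(-n^S_k\mathcal{C}_S)=\mathcal{O}_{C_a}(-n^{S_a}_k\mathcal{C}_{S_a})=\mathcal{O}_{C_a}(n^a_kp)$, which yields $\mathcal{L}'_k\otimes\mathcal{O}_{C_a}=\mathcal{L}_k\otimes\mathcal{O}_{C_a}(n^a_kp)$ on the nose.

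The step I expect to be the main obstacle is the bookkeeping in the middle paragraph: justifying cleanly that every boundary divisor other than $\mathcal{C}_{S_a}$ restricts to the trivial bundle on $C_a$. This rests on the combinatorial consequence of $t_a=1$ (that $p$ is the only node on $C_a$, and that normalizing $C$ at $p$ splits $C_a$ off as its own connected component), together with the identity $\mathcal{C}_S+\mathcal{C}_{S^\vee}=\pi^*\Delta_S$ and the Cartier property of the $\mathcal{C}_S$ near the deep boundary stratum containing $[C]$ — facts which are already invoked, in the single-node case, around \eqref{globaltwist}.
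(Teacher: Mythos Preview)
Your proof is correct and follows the same approach as the paper --- restrict the defining expression \eqref{globallprimedef} to $C_a$ and invoke that $\mathcal{O}_{C_a}(-\mathcal{C}_{S_a})\cong\mathcal{O}_{C_a}(p)$ (equivalently $\deg(\mathcal{O}_{C_a}(-\mathcal{C}_a))=1$). The paper's proof is a one-liner citing only this fact and the definition; you have supplied the bookkeeping it leaves implicit, namely why every other summand $n_k^S\mathcal{C}_S$ restricts trivially to $C_a$ (via $t_a=1$ and $\mathcal{C}_S+\mathcal{C}_{S^\vee}=\pi^*\Delta_S$) and why the OK hypothesis forces $n_k^{S_a^\vee}=0$.
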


\begin{proof}
This follows from the fact that $\deg(\mathcal{O}_{C_a}(- \mathcal{C}_a))=1$ and our definition \eqref{globallprimedef} of $\mathcal{L}'_k$.
\end{proof}

\begin{lemma}\label{unblighted}
If $C$ has no blighted components, then
\begin{itemize}
\item[a)] $\mathcal{L}'_k=\mathcal{L}_k$
 \item[b)]$H^1(C,\mathcal{L}_k)=0$.
 \end{itemize}
\end{lemma}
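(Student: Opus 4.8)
The plan is to prove the two parts separately, with (b) carrying the real content. For (a), I would read off from the definition \eqref{globallprimedef} that $\mathcal{L}'_k$ is obtained from $\mathcal{L}_k$ by twisting down along the Cartier divisors $\mathcal{C}_S$ with the non-negative coefficients $n^S_k$, so it suffices to show that, when $C$ has no blighted components, every $\mathcal{C}_S$ with $n^S_k>0$ either misses $C$ or restricts to a divisor supported on blighted components. A divisor $\mathcal{C}_S$ meets $C$ exactly when $C$ has a node whose normalization separates the marked points into $S\sqcup S^{\vee}$, and then $\mathcal{C}_S\cap C$ is the subcurve $C_{(S)}$ on the $S$-side of that node; while $n^S_k>0$ forces both $g_{C_S}=0$ (so $C_{(S)}$ is a tree of $\mathbb{P}^1$'s) and $\sum_{p_i\in S}\pi^{(i)}_k<k-1$. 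Each leaf of this tree is a $\mathbb{P}^1$ carrying a single branch of a node together with a subset of the marked points whose $\pi_k$-sum is $\le\sum_{p_i\in S}\pi^{(i)}_k<k-1$, hence is blighted. So the hypothesis forces all the relevant coefficients to vanish and $\mathcal{L}'_k\otimes\mathcal{O}_C=\mathcal{L}_k\otimes\mathcal{O}_C$.

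For (b), we may assume $C$ is genuinely nodal, since for smooth $C$ the vanishing of $H^1(C,\mathcal{L}_k)$ is just the OK hypothesis. By Serre duality on the Gorenstein curve $C$ it suffices to show $H^0(C,M)=0$ with $M\coloneqq K_C\otimes\mathcal{L}_k^{-1}$. From \eqref{multidegreeL}, $\deg(M\otimes\mathcal{O}_{C_a})=-\bigl[(k-1)(2g_a-2+t_a)+\sum_{p_i\in S_a}\pi^{(i)}_k\bigr]$, and a short case check — using $t_a\ge 1$, $\pi^{(i)}_k\ge 1$, $k\ge 2$, and the stability bound $2g_a-2+t_a+|S_a|>0$ — gives, in the absence of blighted components, $\deg(M\otimes\mathcal{O}_{C_a})\le 0$ for every $a$, with equality precisely for the \emph{borderline} components: those with $g_a=0$, $t_a=1$ and $\sum_{p_i\in S_a}\pi^{(i)}_k=k-1$, each of which is a smooth $\mathbb{P}^1$ on which $M$ is trivial.

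I would then argue by contradiction: suppose $0\ne s\in H^0(C,M)$ and let $C'$ be the union of the components on which $s$ does not vanish identically. On a component with $\deg(M\otimes\mathcal{O}_{C_a})<0$ the section $s$ vanishes (a negative-degree line bundle on an irreducible curve has no nonzero section, by passing to the normalization), so every component of $C'$ is borderline, hence a $\mathbb{P}^1$ with $M$ trivial, on which $s$ is a nonzero constant and in particular nonzero at the unique node of that component. Matching $s$ across that node forces the adjacent component into $C'$; since each component of $C'$ has only one node, iterating this shows $C'$ is a single pair of $\mathbb{P}^1$'s glued at one node and nothing more, so by connectedness $C'=C$. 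Both sides being borderline gives $\sum_{p_i\in D_L}\pi^{(i)}_k=\sum_{p_i\in D_R}\pi^{(i)}_k=k-1$, whence the smooth curve in the same family is a $\mathbb{P}^1$ with, by \eqref{degrees}, $\deg(\mathcal{L}_k)=-2k+\sum_{p_i\in D}\pi^{(i)}_k=-2$, contradicting the OK condition, which for genus zero requires $\deg(\mathcal{L}_k)\ge -1$. Hence $H^0(C,M)=0$ and $H^1(C,\mathcal{L}_k)=0$.

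The main obstacle is exactly this borderline configuration: degree estimates alone give only $\deg(M\otimes\mathcal{O}_{C_a})\le 0$, not strict negativity, so $s$ cannot be killed component by component. The argument must instead propagate the nonvanishing of $s$ through the nodes, use stability of $C$ to exclude genus-zero components with two or more branches of nodes, and finally invoke the OK hypothesis to eliminate the sole surviving case — a chain of two $\mathbb{P}^1$'s that is itself a genus-zero nodal curve.
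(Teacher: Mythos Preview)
Your proof is correct and, for part (a), essentially identical to the paper's: both observe that if $n^S_k>0$ then some irreducible subcomponent of $C_{(S)}$---in your language, a leaf of the tree---is blighted.

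For part (b) you use the same ingredients as the paper (Serre duality, the degree formula \eqref{multidegreeL}, and the OK hypothesis) but organize them differently. The paper's argument is shorter: from $\deg(M\vert_{C_a})\le 0$ on every component and the OK condition (which makes the \emph{total} degree of $M=K_C\otimes\mathcal{L}_k^{-1}$ strictly negative), one concludes immediately that some component has $\deg(M\vert_{C_a})<0$; any global section vanishes there, and this vanishing then propagates across nodes to all of $C$, since on a component with $\deg\le 0$ a section with a zero must vanish identically. You instead start from the support $C'$ of a hypothetical nonzero section, classify its components as ``borderline'' ($g_a=0$, $t_a=1$, $\sum\pi=k-1$), and push the analysis until OK delivers the contradiction at the end. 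Both routes are valid; yours is more explicit about the borderline case that the paper's terse ``therefore any global section vanishes'' leaves implicit, at the cost of some extra work. One minor wording issue: in your final paragraph, stability does not exclude genus-zero components with $t_a\ge 2$ from $C$, only from being \emph{borderline} (the case $g_a=0$, $t_a=2$, $|S_a|=0$ is what stability rules out).
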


\begin{proof}
For (a) , we note that if  $k-1- \sum_{p_i\in S} \pi^{(i)}_k > 0$, then $k-1- \sum_{p_i\in S_a} \pi^{(i)}_k > 0$ for any subset $S_a \subset S$. So if there's a (possibly reducible) genus-0 component of $C$ contributing to the twist  \eqref{globallprimedef}, then it has an irreducible subcomponent $C_a$ on which $n^a_k >0$.

For (b), we note that, by Serre duality, $h^1(\mathcal{L}_k) = h^0(K_C\otimes \mathcal{L}_k^{-1})$. In the absence of blighted components, the degree of $K_C\otimes \mathcal{L}_k^{-1}$ is non-positive when restricted to every component $C_a$ of $C$. The OK condition implies that the total degree is strictly negative; hence it must be negative on at least one component. Therefore any global section of $K_C\otimes \mathcal{L}_k^{-1}$ on $C$ vanishes.
\end{proof}

Our goal is now to reduce the problem of computing $H^*(C,\mathcal{L}'_k)$ on a curve with blighted components to the same computation on a simpler curve with no blighted components.

\begin{defn}
Consider a blighted component $C_a$. By definition, it intersects the rest of $C$ (which we will denote by $\check{C}$) at the node $p$. Our \emph{pruning operation} consists of removing the component $C_a$ and replacing the branch of the node on $\check{C}$ by a marked point with $\chi_k= 1+n^a_k$ (or $\pi_k=k-1-n^a_k$).
\end{defn}

The degree of the twist, $n^a_k$, was chosen precisely so that $H^0(C_a, \mathcal{L}'_k)=H^1(C_a,\mathcal{L}'_k)=0$. The long exact sequence associated to
\begin{equation}
0 \to \mathcal{L}'_k\otimes \mathcal{O}_{\check{C}}(-p) \to  \mathcal{L}'_k \to  \mathcal{L}'_k\otimes \mathcal{O}_{C_a} \to 0
\end{equation}
splits and we find 
\begin{equation}\label{thesame}
\begin{split}
    H^0(C, \mathcal{L}'_k) &= H^0(\check{C}, \mathcal{L}'_k(-p))\\
    H^1(C, \mathcal{L}'_k) &= H^1(\check{C}, \mathcal{L}'_k(-p))
\end{split}
\end{equation}
Let us denote the irreducible component of $\check{C}$ which contains $p$ as $C_b$. Note the following:
\begin{itemize}
\item Let $C_c$ be any \emph{other} component of $\check{C}$, \emph{except} $C_b$.
We have $\mathcal{L}'_k(-p)\otimes \mathcal{O}_{C_c} = \mathcal{L}'_k\otimes \mathcal{O}_{C_c}$.
\item Since $\mathcal{O}_{\check{C}}(-n^a_k\mathcal{C}_a)= \mathcal{O}_{\check{C}}(-n^a_k p)$, the pole order of $\mathcal{L}'_k(-p)\otimes\mathcal{O}_{C_b}$ at $p$ is $\pi_k=k-1-n^a_k$.
\item Let $S\supset S_a\cup S_b$ be a subset of the marked points on $C$. Let $\check{S} = \{p\}\cup (S\backslash S_a)$. Set $\pi^p_k= k-1-n^a_k$. Since
$$\pi^p_k +\sum_{p_i\in S_b} \pi^{(i)}_k = \sum_{p_i\in S_a\cup S_b} \pi^{(i)}_k$$
the coefficient, $-n^{\check{S}}_k$, of $\mathcal{C}_{\check{S}}$ in $\check{\mathcal{L}}'_k \coloneqq \mathcal{L}'_k\otimes \mathcal{O}_{\check{C}}(-p)$ is the \emph{same} as the coefficient, $-n^S_k$, of $\mathcal{C}_S$ in $\mathcal{L}'_k$.
\end{itemize}

The upshot is that the line bundle $\mathcal{L}'_k(-p)$ on $\check{C}$ is exactly the line  bundle ${\check{\mathcal{L}}}'_k$ that we would construct by the recipe \eqref{globallprimedef} for the curve $\check{C}$ with marked points $\check{S}$ and an additional marked point at $p$ with $\chi_k= 1+n^a_k$.

By the pruning procedure, we have constructed a new curve $\check{C}$ and a line bundle ${\check{\mathcal{L}}}'_k$ with exactly the same cohomology groups \eqref{thesame} as $(C,\mathcal{L}'_k)$. Now we drop the $\check{~}$s and repeat the pruning operation. Eventually, we arrive at a curve with no blighted components. We then apply Lemma \ref{unblighted} to conclude that $H^1(C,\mathcal{L}'_k)=0$.

Thus we have shown that $H^1(C,\mathcal{L}'_k)=0$ for every fiber of $\overline{\mathcal{C}}\to \overline{\mathcal{M}}_{g,n}$. Hence 
\begin{equation}
\mathcal{B}_k= \pi_*\mathcal{L}'_k
\end{equation}
is locally-free.

\section{OK theories and semistable Higgs bundles}
\label{OKappendix}

We begin by recalling the nonabelian Hodge theorem for tame Hitchin systems due to Simpson \cite{simpson1990harmonic}. The NAH theorem sets up a correspondence between the moduli space of semistable parabolic Higgs bundles and the character variety of irreducible representations $\rho : \pi_1(C_{g,n}) \rightarrow SL_N $ with parabolic structure.

Let us recall the local dictionary from Simpson \cite{simpson1990harmonic}. As in \cite{simpson1990harmonic}, let $(E,\Phi)$ be a filtered Higgs bundle, $(V,\nabla)$ be the flat connection with $V$ a filtered vector bundle,  $(L,\mu)$  a filtered local system with $\mu$ being an endomorphism of $L$. For any Lie algebra element $a$, we can write its Jordan decomposition as $a = a_N + a_s$ where $a_N$ is nilpotent, $a_s$ is semisimple and $[a_N,a_s]=0$. Since our residues are elements of the Lie algebra $\mathfrak{j}$, a similar Jordan decomposition exists for $\text{Res}(\cdots)$. We use $(\text{Res}(\cdots))_N$ to denote ``nilpotent part of the residue''. Then the local dictionary can be described in the following way: 

\begin{enumerate}
\item The weights and eigenvalues of the semi-simple parts of the residue are permuted according the following table (from p.~720 of \cite{simpson1990harmonic})

\begin{center}
\begin{tabular}{|c|c|c|c|}
\hline 
 & $(E,\Phi)$ & $(V,\nabla)$ & $(L,\mu)$ \\ 
\hline 
weights  &  $\alpha$ & $\alpha - 2 \beta $ & $-2 \beta$ \\ 
\hline 
eigenvalue(s) & $\beta + i\gamma $ & $\alpha + 2 i \gamma $ &  $\exp( - 2 \pi i \alpha + 4 \pi \gamma )$ \\ 
\hline 
\end{tabular} 
\end{center}

\item The fibers at each puncture of $E$, $V$ and $L$ have a refined decomposition  given by the triple $(\alpha,\beta,\gamma)$ at that puncture.  These decompositions are invariant under the respective operators $\Phi, \nabla, \mu$. On matching pieces of this decomposition,  the nilpotent parts of the endomorphisms coincide: \hbox{$(\text{Res}(\Phi))_N = (\text{Res}(\nabla))_N = (\text{Res}(\mu))_N$ .}
%\begin{tabular}{|c|c|c|c|}
%\hline 
% & &  &  \toprule 
%& & & \\
% &  &  & \\
%\hline 
%\end{tabular} 
\end{enumerate}

Our arguments apply most directly to instances of NAH where $Res(\Phi)$ at each of the $n$ punctures is strictly nilpotent, i.e.~$\beta = \gamma = 0$. From the above table, it follows that the eigenvalues of the holonomy are $\exp { (- 2 \pi i \alpha)}$. The boundary conditions for the gauge field encode the \textit{parabolic weights} $\alpha$ at each puncture. The parabolic weights must be chosen in a way that is compatible with the nilpotent residues of the Higgs fields. The idea (see \cite{boden1996moduli} for an exposition) is that the fiber of $E$ at each puncture admits a filtration
\begin{equation}
E|_p  = F_l \supset F_{l-1} \supset F_{l-2} \supset \dots \supset F_1\supset 0
\end{equation}
where $F_j = \ker(Res(\Phi)^j)$. To this filtration, we assign a set of parabolic weights, $\alpha_j(p)\in [0,1)$ with $\alpha_j(p)<\alpha_{j-1}(p)$.  To each $\alpha_j(p)$ we assign a multiplicity\footnote{Since we are in $SL_N$, we further require $\sum_j q_j\alpha_j = 0 \mod{1}$.} $q_j = \dim(F_j/F_{j-1})$. The partition $[q_1,q_2,\dots,q_l]$ of $N$ is the Nahm partition which is the transpose of the Hitchin partition for the nilpotent orbit $O_H \ni Res(\Phi)$. The datum $(E,\Phi,\alpha)$ defines a strongly parabolic Higgs bundle. By NAH, the multiplicities of the eigenvalues of the holonomy $\mu$ are given by the same Nahm partition, $[q_1,q_2,\dots,q_l]$.
  
Let us further specialize to the case of a Higgs bundle on a genus zero curve $C_{0,n+1}$ with $n+1$ punctures such that $Res(\Phi)$ is regular nilpotent at (at least) one of the punctures (say the $(n+1)^{\text{st}}$ puncture). We will refer to these as the \textit{regular cases}. In these regular cases, Simpson has derived necessary and sufficient conditions for the irreducible character variety to be non-empty \cite{simpson1992products}. Let us define $
D := \sum_{a=1}^{n+1} \dim(C_a) - 2(N^2 -1)$ and $r_a := N - m_a$ where $C_a$ are the $SL_N$ conjugacy classes in which the local holonomies live, $a$ labels the punctures and  $m_a$ denotes the largest multiplicity for the eigenvalues of the holonomy matrix at the puncture $a$. 

In terms of these quantities, Simpson's conditions are: 
\begin{enumerate}[label={(\greekalpha*)}]
\item  $D \geq 0$,
\item $\sum_{a=1}^{n} r_a \geq N$. 
\end{enumerate} 

When the irreducible character variety is not empty, the quantity $D$ is equal to its complex dimension and it matches the dimension of the $Higgs$ moduli space computed using Riemann-Roch (as in \S\ref{2_tame_hitchin_systems_on_smooth_curves}). 

\begin{prop}
In the regular case, Simpson's two conditions ($\alpha$),($\beta$) above are equivalent to the ``OK'' condition which we introduced for the line bundles in $\mathcal{L}_k$  in \S\ref{13_hitchin_systems_corresponding_to_good_theories}. 
\end{prop}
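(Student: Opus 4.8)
The plan is to establish the equivalence by translating both sides into statements about the same combinatorial data, namely the Hitchin/Nahm partitions at the punctures. First I would set up a dictionary: for a puncture carrying Hitchin orbit $O_a$ with Hitchin partition $[p^{(a)}_1,p^{(a)}_2,\dots]$, the Nahm partition is the transpose $[q^{(a)}_1,q^{(a)}_2,\dots]$, the dimension of the conjugacy class is $\dim(C_a)=\dim(O_a)=N^2-\sum_j (q^{(a)}_j)^2$ (since $\sum_j (q_j)^2 = \sum_j (2j-1)\pi_j + \text{const}$ in an appropriate normalization), and $m_a = q^{(a)}_1$ is the largest eigenvalue multiplicity of the holonomy while $r_a = N - m_a = N - q^{(a)}_1$. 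On the other side, the vanishing orders $\chi^{(a)}_k$ and pole orders $\pi^{(a)}_k = k - \chi^{(a)}_k$ determine $\deg(\mathcal{L}_k)$ via \eqref{degrees}, and the ``OK'' condition is $h^1(C,\mathcal{L}_k)=0$ for all $k$, i.e.\ (since $C=\mathbb{P}^1$ in the regular case) $\deg(\mathcal{L}_k)\geq -1$, equivalently $-2k + \sum_a \pi^{(a)}_k \geq -1$.

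Next I would handle the two Simpson conditions separately. For condition $(\beta)$, $\sum_{a=1}^n r_a \geq N$: using $r_a = N - q^{(a)}_1$ and the fact that $q^{(a)}_1$ is the number of parts of the Hitchin partition (= the height of the first row of the Nahm Young diagram = the multiplicity of the entry $1$ in $\vec\chi^{(a)}$, which governs $\pi^{(a)}_k$ for the largest $k$'s), I would show that $(\beta)$ is precisely the $k=N$ instance of the OK inequality, $\pi^{(n+1)}_N + \sum_{a=1}^n \pi^{(a)}_N \geq 2N-1$; since the $(n+1)^{\text{st}}$ puncture is regular, $\pi^{(n+1)}_N = N-1$, and $\pi^{(a)}_N = N - \chi^{(a)}_N = N - q^{(a)}_1$ (the last vanishing order equals the number of parts), so the inequality becomes $(N-1) + \sum_{a=1}^n (N - q^{(a)}_1) \geq 2N-1$, i.e.\ $\sum_{a=1}^n r_a \geq N$. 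For condition $(\alpha)$, $D\geq 0$: I would compute $D = \sum_{a=1}^{n+1}\dim(C_a) - 2(N^2-1)$ and compare it with $2\dim(B) = 2\sum_{k=2}^N (\deg(\mathcal{L}_k)+1)$ when all $H^1$ vanish, using the relation from \S\ref{22_nilpotent_orbits_and_spectral_curves_global_story_on_a_smooth_curve} that $\dim(Higgs_D) = 2\dim(B)$; then argue that $D \geq 0$ is equivalent to the bound being ``used'' — more precisely, the key is that OK implies all graded pieces are nonnegative, hence $D = \dim(Higgs)\geq 0$, while conversely a failure of OK at some $k$ forces a compensating negative contribution.

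The main obstacle — and the step requiring genuine care — is the converse direction for condition $(\alpha)$: showing that $D \geq 0$ together with $(\beta)$ forces $\deg(\mathcal{L}_k)\geq -1$ for \emph{every} $k$, not just on average. The subtlety is that $D\geq 0$ only controls the \emph{sum} $\sum_k(\deg(\mathcal{L}_k)+1)$, so a priori one degree could dip to $-2$ while another compensates. I would resolve this by exploiting convexity/monotonicity properties of $\pi^{(a)}_k$ as a function of $k$: since the $\pi^{(a)}_k$ are concave-like (the Young-diagram structure forces the increments $\pi^{(a)}_{k+1}-\pi^{(a)}_k$ to be weakly decreasing in a suitable sense) and each $\pi^{(a)}_k \leq k-1$, one shows that $\deg(\mathcal{L}_k) = -2k + \sum_a\pi^{(a)}_k$ is itself a concave function of $k$ on $\{2,\dots,N\}$; a concave function that is $\geq -1$ at both endpoints $k=2$ (automatic, since all $\pi^{(a)}_2 = 1$ gives $\deg(\mathcal{L}_2) = -4 + (n+1)$, nonnegative once $n\geq 3$, which stability requires) and $k=N$ (this is exactly condition $(\beta)$) is $\geq -1$ throughout. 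This concavity argument, converting the two ``boundary'' Simpson conditions into the full family of OK inequalities, is where the real content lies; the rest is bookkeeping with partitions and Riemann--Roch.
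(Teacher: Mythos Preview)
Your treatment of condition $(\beta)$ is correct and matches the paper's: it is exactly the $k=N$ instance of the OK inequality, using $\chi^{(a)}_N = q^{(a)}_1 = m_a$ and $\pi^{(n+1)}_N = N-1$ for the regular puncture.

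The concavity argument for condition $(\alpha)$, however, does not go through. The function $k\mapsto \deg(\mathcal{L}_k)$ is \emph{not} concave in general. For a single puncture, $\pi_k = k - \chi_k$ has increments in $\{0,1\}$ that are \emph{not} weakly decreasing (e.g.\ for Hitchin partition $[3,2]$ one gets $\pi_k = (0,1,2,2,3)$ with increments $1,1,0,1$). More to the point, take $N$ even, $n=2$, with both non-regular punctures of Hitchin type $[2^{N/2}]$ and the third puncture regular. Then $\pi^{[2^{N/2}]}_k = \lfloor k/2\rfloor$, and
\[
\deg(\mathcal{L}_k) = -2k + 2\lfloor k/2\rfloor + (k-1) = \begin{cases} -1 & k\text{ even},\\ -2 & k\text{ odd}.\end{cases}
\]
Both endpoints $k=2$ and $k=N$ give $-1$, so your boundary checks pass, yet the function dips to $-2$ at every odd $k$ in between. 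So OK fails while your concavity test would declare it satisfied. (In this example $D = 2\sum_k(\deg(\mathcal{L}_k)+1) = -(N-2) < 0$, so $(\alpha)$ also fails and the equivalence survives; the problem is only with your proof strategy.)

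The paper handles the converse direction for $(\alpha)$ by a different route: rather than a direct inequality, it invokes the structural result of Theorem~\ref{ohtheorem}(B) (via the nodal picture, identifying OK at $k$ with $b^C_k=1$) to show that the \emph{only} way to have OK fail at some $k<N$ while OK holds at $k=N$ is exactly the alternating pattern above, which forces $n=2$ and both punctures $[2^{N/2}]$. It then checks by the explicit computation that $D<0$ in that case. So the hard step is not a convexity bound but a classification of the exceptional degree patterns, and that classification is supplied by the main theorem of \S\ref{classifying}. Your argument would need either a replacement for concavity that survives the alternating example, or an independent classification of all $(\chi^{(a)}_k)$ for which $\deg(\mathcal{L}_N)\geq -1$ but $\deg(\mathcal{L}_k)\leq -2$ for some $k$.
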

\begin{proof}
As explained earlier in this section, $m_a$ can be identified with the first part of the Nahm partition $q_1$ at the puncture $a$.
From the algorithm (in \S\ref{21_nilpotent_orbits_and_spectral_curves_local_story_on_a_smooth_curve}) for the zero orders $\chi_k$, we see that the value of $\chi_N = q_1$. With this translation, we see that condition $(\beta)$ is the same as 
\begin{equation}
\sum_{a=1}^{n} (N- \chi^{(a)}_N) \geq N.
\end{equation}
By a simple rearrangement, this is equivalent to demanding
\begin{equation}
 (n-1) N - \sum_{a=1}^{n+1} \chi^{(a)}_N  \geq - 1,
\end{equation}
where we have used the fact that $\chi_N^{n+1} = 1$ since we have a regular nilpotent residue for the Higgs field at the $(n+1)^{\text{st}}$ puncture. From \eqref{degrees}, we recognize this to be exactly the condition that $\deg(\mathcal{L}_N) \geq -1$! It follows that demanding $(\beta)$ holds is the same as demanding that $h^1(\mathcal{L}_{N}) = 0$ which is one of our conditions for the tame Hitchin system to be OK.

What about condition $(\alpha)$? To study this, we first note that the quantity $D$ has a simple relationship to the indices of the line bundles $\mathcal{L}_k$,
\begin{equation}
D = 2 \sum_k \text{ind}(\mathcal{L}_k).
\end{equation}

If $h^1(\mathcal{L}_{k}) = 0$ for all $k$, it is straightforward that $(\alpha)$ holds. It is also clear that if $h^1(\mathcal{L}_k) > 0$ for all $k$, then both $(\alpha),(\beta)$ fail to hold. The interesting situations are the ones where $(\alpha)$ might be violated but $(\beta)$ holds. Such cases could occur if $h^1(\mathcal{L}_{k}) > 0$ for some $k < N$ but $h^1(\mathcal{L}_{N}) = 0$. What can we say about $D$ in such cases? 

To approach these cases, imagine we have a one sided separating node (see \S\ref{proofstrategy} for the definition) with $\mathcal{O}= [N]$ and $\deg(D_L)=n$. Now, the conditions $h^1(\mathcal{L}_{N}) = 0$ and $h^1(\mathcal{L}_{k}) > 0$ (for some $k < N$) are equivalent to demanding that $b_{N}^C = 1$ while $b_k^C =0$ for some $k < N$. From our proof of Th \ref{ohtheorem}(B), the only such possibilities occur when $\vec{b}^C_k = \tfrac{1}{2}\bigl(1+(-1)^k\bigr)$. In these cases, we showed that $\deg(\mathcal{L}_k)$ is necessarily of the form \eqref{alternatingLk}
\begin{equation}
\label{alternatingLkN}
\deg(\mathcal{L}_k) = -1-  \tfrac{1}{2}\bigl(1-(-1)^k\bigr).
\end{equation}

By Prop \ref{alternatingproposition}, such a scenario can occur when $N$ is even, $n=2$ and the residues of the Higgs field at the two marked points live in the nilpotent conjugacy class $[2^{N/2}]$. In these cases, an explicit calculation shows that $D < 0$. And Th \ref{ohtheorem}(C) implies that $[2^{N/2}]$ is not an allowed nodal nilpotent. This guarantees that there is no other scenario with $n > 2$ for which \eqref{alternatingLkN} could hold. This guarantees that $(\alpha)$ is violated whenever  $h^1(\mathcal{L}_{N}) = 0$ and $h^1(\mathcal{L}_{k}) > 0$ for some $k < N$. 
\end{proof}
 
In the regular case, demanding that the strongly parabolic Higgs bundle be OK is necessary and sufficient for the corresponding irreducible character variety to be non-empty. By the NAH theorem, this is the same as demanding the existence (i.e.~non-emptiness) of the corresponding moduli space of semistable Higgs bundles. The novel feature here is that semi-stability (in the Higgs sense) admits a translation to a condition on the line bundles $\mathcal{L}_k$ appearing in the description of the Hitchin base.

If we relax the assumption that one of the residues of the Higgs field is regular, then  Simpson's two conditions are known to be necessary, but not sufficient for the non-emptiness of the character variety (see, for instance, the discussion in Kostov's survey \cite{kostov2004deligne}). A natural guess is that the OK condition on the line bundles $\mathcal{L}_k$, which is \emph{stronger} than Simpson's conditions in this case, might be sufficient. To this end, we would like to propose two conjectures. The first is that the OK condition on strongly parabolic Higgs bundles is sufficient to ensure the non-emptiness of the corresponding character variety. The second, more optimistic, conjecture is that the OK condition is both necessary and sufficient.

We hope to study these conjectures further in a later work. We note here that some of the best known results towards the general problem of providing necessary and sufficient conditions are in Crawley-Boevey \cite{crawleyquiver} which follows the earlier work of \cite{crawley2006multiplicative}. A beautiful survey emphasizing the connection to Higgs bundles is in \cite{MR2508903}. Additional recent results are in \cite{soibelman2016parabolic}.

\section{Degrees of Invariant Polynomials} \label{a1_appendix}

We tabulate here the degrees of invariant polynomials of finite irreducible Coxeter systems.

\begin{center}
\begin{tabular}{c|c}
Coxeter type&Degrees of invariant polynomials\\
\hline 
$A_n$&$2,3,4,\ldots,n+1$\\
$B_n, C_n$&$2,4,6,\ldots,2n$\\
$D_n$&$2,4,6,\ldots,2n-2,n$\\
$E_6$&$2,5,6,8,9,12$\\
$E_7$&$2,6,8,10,12,14,18$\\
$E_8$&$2,8,12,14,18,20,24,30$\\
$F_4$&$2,6,8,12$\\
$G_2$&$2,6$\\
$H_3$&$2,6,10$\\
$H_4$&$2,12,20,30$\\
$I_2(m),m \geq 4$&$2,m$\\
\end{tabular}
\end{center}

Note that $I_2(6) \simeq G_2$ and $I_2(5)$ is sometimes denoted as $H_2$.

\subsection*{Acknowledgements}

We thank Florian Beck, Usha Bhosle, Marina Logares, Eyal Markman, Benedict Morrissey, Hiraku Nakajima, Carlos Simpson and Yan Soibelman for interesting discussions. This work grew out of discussions initiated during the discussion meeting \textit{``Quantum Fields, Geometry and Representation Theory 2018''} at ICTS-TIFR (\href{https://www.icts.res.in/discussion-meeting/qftgrt2018}{ICTS/qftgrt/2018/07}). AB's work is supported by the US Department of Energy under the grant DE{--}SC0010008. JD's work is supported by NSF grant PHY--1914679. JD would like to thank the Aspen Center for Physics (supported by NSF grant PHY--1607611) for hospitality while some of this work was conducted. During the preparation of this work, Ron Donagi was supported in part by NSF grant DMS 2001673 and by Simons HMS Collaboration grant \# 390287.

\bibliographystyle{utphys}
%\bibliographystyle{plainnat}
%\setcitestyle{authoryear}
\bibliography{refs}
\end{document}